\documentclass[11pt]{article}

\setlength{\evensidemargin}{0in}
\setlength{\oddsidemargin}{0in}
\setlength{\textwidth}{6.25in}
\setlength{\textheight}{8.5in}
\setlength{\topmargin}{0in}
\setlength{\headheight}{0in}
\setlength{\headsep}{0in}
\setlength{\itemsep}{0pt}

\setlength{\parskip}{\smallskipamount}

\usepackage{times}
\usepackage{chicagor}

\usepackage{amsmath, amsthm, amssymb}
\usepackage{mathrsfs}
 
\theoremstyle{plain}
\newtheorem{thm}{Theorem}[section]
\newtheorem{lem}[thm]{Lemma}
\newtheorem{cor}[thm]{Corollary}
\theoremstyle{definition}
\newtheorem{defn}[thm]{Definition}
\newtheorem{exmp}[thm]{Example}

\newcommand{\wbox}{\mbox{$\sqcap$\llap{$\sqcup$}}}

\newenvironment{RETHM}[2]{\trivlist \item[\hskip 10pt\hskip\labelsep{\sc #1\hskip 5pt\relax\ref{#2}.}]\it}{\endtrivlist}
\newcommand{\rethm}[1]{\begin{RETHM}{Theorem}{#1}}
\newcommand{\repro}[1]{\begin{RETHM}{Proposition}{#1}}
\newcommand{\relem}[1]{\begin{RETHM}{Lemma}{#1}}
\newcommand{\recor}[1]{\begin{RETHM}{Corollary}{#1}}
\newcommand{\erethm}{\end{RETHM}}
\newcommand{\erepro}{\end{RETHM}}
\newcommand{\erelem}{\end{RETHM}}
\newcommand{\erecor}{\end{RETHM}}

\newcommand{\IR}{\mbox{$I\!\!R$}}

\newcommand{\mm}{\mathit{mm}}

\newcommand{\BR}{\mathit{BR}}
\newcommand{\MSW}{\mathit{MSW}}
\newcommand{\coco}{\mathit{coco}}
\newcommand{\NE}{\mathit{NE}}
\newcommand{\BU}{\mathit{BU}}

\newcommand{\shortv}[1]{}
\newcommand{\fullv}[1]{#1}
\newcommand{\commentout}[1]{}

\newlength{\citeskipup}
\newlength{\citeskipdown}

\title{Cooperative Equilibrium: A Solution Predicting Cooperative
  Play
\thanks{Material in this paper appeared in preliminary form in two
  earlier papers: 
Cooperative equilibrium, 
{\em Proceedings of the Ninth International Joint Conference on
Autonomous Agents and Multiagent Systems (AAMAS 2010)}, 2010,
pp. 1465-1466, and Towards a deeper understanding of cooperative equilibrium:
characterization and complexity, {\em Proceedings of the Twelfth
International Joint Conference on Autonomous Agents and Multiagent
Systems (AAMAS 2013)}, 2013, pp. 319--326.}}
\author{Nan Rong  \ \ \ \ Joseph Y. Halpern\\
       Department of Computer Science\\
       Cornell University\\
       Ithaca, NY 14853\\
       \{rongnan,halpern\}@cs.cornell.edu}
\begin{document}
\maketitle
\begin{abstract}
Nash equilibrium (NE) assumes that players always make a best
response. However, this is not always true; sometimes people  
cooperate even it is not a best response to do
so. For example, in the Prisoner's  Dilemma, people often cooperate.
Are there rules underlying cooperative behavior? In an effort to
answer this question,  
we propose a new equilibrium concept: \textit{perfect cooperative
  equilibrium }(PCE), 
and two related variants: \emph{max-PCE} and \emph{cooperative equilibrium}.
PCE may help explain players' behavior in games where cooperation is
observed in practice. A player's payoff in a PCE is at least
as high as in any NE. However, a PCE does not always exist.  
We thus consider $\alpha$-PCE, where $\alpha$ takes into account the degree
of cooperation; a PCE is a 0-PCE.  Every game has a Pareto-optimal
\emph{max-PCE (M-PCE)}; 
that is, an $\alpha$-PCE for a maximum
$\alpha$.  We show that M-PCE does well at predicting behavior in quite
a few games of interest.  
We also consider \emph{cooperative equilibrium} (CE), 
another generalization of PCE that takes punishment 
into account.  Interestingly, all Pareto-optimal M-PCE are CE.  
We prove that, in 2-player games, both
a PCE (if it exists) and a M-PCE can be found in polynomial time,
using bilinear programming.  
This is a contrast to Nash equilibrium, which is PPAD complete even in
2-player games \cite{CDT09}.  
We compare M-PCE to the \emph{coco value} \cite{KK09}, another solution concept
that tries to capture cooperation, both axiomatically and in terms of an 
algebraic characterization,
and show that the two are closely related, despite their very different
definitions. 
\end{abstract}

\section{Introduction}
Nash Equilibrium (NE) assumes that players always make a best response to
what other players are doing. However, this assumption 
does not always hold.
Consider the Prisoner's Dilemma, in
which two prisoners can choose either to defect or to cooperate, with 
payoffs as shown in Table~\ref{table:PD}

\begin{table}[htb]
\begin{center}
\begin{tabular}{c |  c c}
& Cooperate & Defect\\
\hline
Cooperate &(3,3) &(0,5) \\
Defect  &(5,0)  &(1,1)  
\end{tabular}
\end{center}
\caption{Payoffs for Prisoner's Dilemma}\label{table:PD}
\end{table}

\noindent
Although the only best response here is to play Defect no matter what the
other player does, people often do play (Cooperate,
Cooperate). 

There are a number of other games in which Nash equilibrium
does not predict actual behavior well. 
To take one more example,
in the Traveler's Dilemma \cite{Basu94,Basu07}, 
two travelers have identical luggage, for which they 
paid the same price.  Their luggage is damaged (in an identical way)
by an airline.  The airline offers to recompense them for their luggage.
They may ask for any dollar amount between \$2 and \$100.
There is only
one catch.  If they ask for the same amount, then that is what they will
both receive.  However, if they ask for different amounts---say one asks
for \$$m$ and the other for \$$m'$, with $m < m'$---then whoever asks
for \$$m$ (the lower amount) will get \$$(m+2)$, 
while the other traveler will get \$$(m-2)$. 
A little calculation shows that the only NE in the Traveler's Dilemma is
$(2,2)$.  (Indeed, $(2,2)$ is the only strategy that survives iterated
deletion of weakly dominated strategies and is the only rationalizable
strategy; see \cite{OR94} for a discussion of these solution concepts.)
Nevertheless, in practice, people (even game theorists!) do not play
(2,2).  Indeed, when Becker, Carter, and Naeve \citeyear{BCN05} asked
members of the Game Theory Society to submit strategies for the game,
37 out of 51 people submitted a strategy of 90 or higher.
The strategy that was submitted most often (by 10 people) was 100.
The winning strategy (in pairwise matchups against all submitted
strategies) was 97. Only 3 of 51 people submitted the ``recommended''
strategy 2. In this case, NE is neither predictive nor 
normative; it is neither the behavior that was submitted most often
(it was in fact submitted quite rarely) nor the strategy that does best
(indeed, it did essentially the worst among all strategies submitted).

In both Prisoner's Dilemma and Traveler's Dilemma, people 
display what might be called ``cooperative'' behavior.
This cannot be explained by the best response assumption of NE. Are there rules
underlying cooperative behavior? 

In this paper, we propose a new solution concept, 
\textit{perfect cooperative equilibrium} (PCE), in an attempt to characterize
cooperative  behavior. 
Intuitively, in a 2-player game, a strategy profile (i.e., a strategy
for each player) is a PCE if each player does at least as well as she
would if the other player were best-responding. 
In Prisoner's Dilemma, both (Cooperate, Cooperate) and 
(Defect, Defect) are PCE.  To see why, suppose that the players are Amy and Bob. 
Consider the game from Amy's point of view.  She gets a
payoff of 3 from (Cooperate, Cooperate).  No matter what she
does, Bob's best response is Defect, which gives Amy a payoff
of either 0 or 1 (depending on whether she cooperates or defects).
Thus, her payoff with (Cooperate, Cooperate) is 
better than the payoff she would get with any strategy she could use, provided
that Bob best-responds.  The 
same is true for 
Bob.  Thus, (Cooperate, Cooperate) is a PCE.  The same argument
shows that (Defect, Defect) is also a PCE.  

This game already shows that some PCE are not NE.  In Traveler's
Dilemma, any strategy profile that gives each player
a payoff above 99 is a PCE (see Section \ref{sec:PCE}
for details). For example, both (99, 99) and (100, 100) 
are PCE. Moreover, the unique NE is not a PCE.  
Thus, in general, PCE and NE are quite different. 
We can in fact show that, if a PCE exists, the payoff for
each player is at least as good as it is in any NE.  This makes PCE an
attractive notion, especially for mechanism design.

This leads to some obvious questions.  First, \emph{why} should or do
players play (their part of) a PCE?  Second, does a PCE always exist?
Finally, how do players choose among multiple PCE, when more than one
exists?  

With regard to the first question, first consider one of the
intuitions for NE.  The assumption is that players have played
repeatedly, and thus have learned other players' strategies.  They thus
best respond to what they have learned.  A NE is a stable point of this
process: every players' strategy is already a best response to what the
other players are doing.   This intuition focuses on what players have
done in the past; with PCE, we also consider the \emph{future}.
In a PCE such as (Cooperate, Cooperate) in Prisoner's
Dilemma, players realize that if they deviate from the PCE, then the
other player may start to best respond;  after a while, they
may well end up in some NE, and thus have a payoff that is guaranteed
to be no better than (and is often worse than) that of the PCE.
Although cooperation here (and in 
other games) gives a solution concept that is arguably more ``fragile''
than NE, players may still want to play a PCE because it gives a better
payoff.  Of course, we are considering one-shot games, not repeated
games, so there is no future (or past); nevertheless, these intuitions
may help explain why players actually play a PCE.  
(See Section~\ref{sec:related} for a comparison of PCE and NE in
repeated games.)

It is easy to see that a PCE does not always exist.  Consider the
Nash bargaining game \cite{Nash50a}.  Each of two players 
requests a number of cents
between 0 and 100.  If their total request is no more than a dollar, then
they each get what they asked for; otherwise, they both get nothing.
Each pair $(x,y)$ with $x+y = 100$ is a NE, so
there is clearly no strategy profile that gives both players a higher
payoff than they get in every NE,
so a PCE does not exist.

We define a notion of \emph{$\alpha$-PCE}, where
$s$ is an $\alpha$-PCE if, playing $s$, each player can do at least $\alpha$
better than the best payoff she could get if the other player were
best-responding (note 
that $\alpha$ may be negative).  Thus,
if a strategy is an $\alpha$-PCE, then it is an $\alpha'$-PCE for all
$\alpha' \le \alpha$.  A strategy is a PCE iff it is a 0-PCE.  We are
most interested in \emph{max-perfect cooperative equilibrium (M-PCE)}.
A strategy is a M-PCE if it is an $\alpha$-PCE, and no strategy is an
$\alpha'$-PCE for some $\alpha' > \alpha$.  We show that every game has
a M-PCE; in fact, it has a \emph{Pareto-optimal} M-PCE (so that there
is no other strategy profile where all players do at least as well and
at least one does better).  
We show that M-PCE does well at predicting behavior
in quite a few games of interest.  For example,
in Prisoner's Dilemma, (Cooperate, Cooperate) is the unique M-PCE;
and in the Nash bargaining game,  $(50,50)$ is the unique
M-PCE.  As the latter example suggests, the notion of 
a M-PCE embodies a certain sense of fairness.  
In cases where there are several PCE, M-PCE gives a way of choosing
among them.

Further insight into M-PCE, at least in 2-player games, is provided by
considering another generalization of PCE, called
\emph{cooperative equilibrium} (CE), which takes
punishment into account.  It is well-known that people are willing to
punish non-cooperators, even at a cost to themselves (see, for example,
\cite{Hauert07,Sigmund07,QFTSSBF04} and the references therein).  
CE is defined only for 2-player games.
Intuitively, a strategy profile $s$ in a 2-player game is a CE if for each
player $i$ and each possible deviation $s_i'$ for $i$, either
(1) $i$ does at least as well with $s$ as she would do if the other
player $j$ were best-responding to $s_i'$; 
or (2) all of $j$'s best responses to $s_i'$ result in $j$
being worse off than he is with $s$, so he ``punishes'' $i$ by playing a
strategy $s_j''$ in response to $s_i'$ that results in $i$ being worse
off.  Note that it may be the case that by punishing $i$, $j$ is
himself worse off.  

It is almost immediate that every PCE is a CE. More interestingly, 
we show that every Pareto-optimal M-PCE is a CE. Thus, every 2-player
game has a CE. 
While CE does seem to capture reasoning often done by people, 
there are games where it does not have much predictive power.
For example, in the Nash bargaining game, CE and NE coincide;
all strategy profiles $(x,y)$ where $x+y = 100$ are CE.  
CE also has little predictive power in the 
\emph{Ultimatum} game \cite{GSS82},  a well-known variant of the
Nash bargaining game where player 1 moves first and proposes a division,
which player 2 can either accept or reject;
again, all offers give a CE.  In practice, ``unfair''
divisions (typically, where player 2 gets less than, say, 30\% of the
pot, although the notion of unfairness depends in part of cultural
norms) are rejected; player 2 punishes player 1 although he is worse

This type of punishment is not captured by CE, 
but can be understood in terms of M-PCE.
For example, a strategy in the
ultimatum game might be considered acceptable if it is close
to a M-PCE; that is, if a M-PCE is an $\alpha$-PCE, then a strategy
might be considered acceptable if it is an $\alpha'$-PCE, where $\alpha
- \alpha'$ is smaller than some (possibly culturally-determined) threshold.
Punishment is applied if the opponent's strategy precludes an acceptable
strategy being played.
To summarize, M-PCE is a solution concept that is well-founded, has good
predictive power, and may help explain when players are willing to apply
punishment in games.

Motivated by the attractive properties of PCE and M-PCE, we
analyze the complexity of finding a PCE or M-PCE. We prove that
in 2-player games, both 
a PCE and a M-PCE can be found in polynomial time, using bilinear
programming. We can also determine in polynomial time whether a PCE exists.
This is a contrast to Nash equilibrium, which is PPAD complete even in
2-player games \cite{CDT09}.  

We then compare M-PCE to 
other cooperative solutions. We focus on the \emph{coco
  (cooperative competitive)  value} \cite{KK09}, another solution
concept that tries to capture cooperative behavior in 2-player games. 
Because the coco value is not always achievable without side payments, 
in order to make a fair comparison, we consider games with side payments.
We provide a technique for converting a 2-player game without side
payments into one with side payments. 
We then compare M-PCE and the coco value both axiomatically and 
in terms of an algebraic characterization.
We show that, despite their quite different definitions, these two
notions are closely related.
They have quite similar algebraic characterizations involving maximum social
welfare and minimax values, and their axiomatic characterizations differ
in only one axiom.
The surprising similarities between M-PCE and coco value may lead to
insights for a deeper understanding of cooperative equilibrium in general.

\fullv{
The rest of the paper is organized as follows. 
In Section~\ref{sec:PCE}, we introduce PCE, prove its most important
properties, and give some examples to show how it works.
In Section~\ref{sec:A-PCE}, we consider $\alpha$-PCE and M-PCE;
in Section~\ref{sec:CE}, we consider CE.
We examine the complexity of finding a PCE/M-PCE/CE (and determining
whether a PCE exists) in 2-player  
games in Section~\ref{sec:complexity}.
In Section~\ref{sec:coco}, we compare M-PCE to the coco value.
We discuss relevant related work in Section~\ref{sec:related}.
}

\section{Perfect Cooperative Equilibrium}\label{sec:PCE} 
In this section, we introduce PCE.
For ease of exposition, we focus here on finite \emph{normal-form} games $G =
(N,A,u)$, where $N=\{1,\ldots,n\}$ is a finite set of players, $A = A_1
\times \ldots \times A_n$, $A_i$ is a finite set of possible actions for
player $i$, $u = (u_1, \ldots, u_n)$, and $u_i$ is player $i$'s utility
function, that is, $u_i(a_1, \ldots, a_n)$ is player $i$'s utility or payoff
if the action profile $a = (a_1, \ldots, a_n)$ is played. 
Players are allowed to randomize.  A strategy for player $i$ is thus a
distribution over actions in $A_i$; let $S_i$ represent the set of
player $i$'s strategies.  Let $U_i(s_1, \ldots, s_n)$ denote player
$i$'s expected utility if the strategy profile $s = (s_1, \ldots, s_n)$
is played.  Given a profile $x = (x_1, \ldots, x_n)$, let $x_{-i}$
denote the tuple consisting of all values $x_j$ for $j \ne i$.

\begin{defn} \label{d0} 
Given a game $G$, a strategy $\mathit{s_i}$ for player $i$ in $G$ is a
	\textit{best response} to a strategy $\mathit{s_{-i}}$ for the
	players in 
	$N-\left\{i\right\}$ if $s_i$ maximizes player $i$'s 
expected utility given that the other players are playing $s_{-i}$, that is,
	$U_i(s_i, s_{-i}) = \sup_{s_i' \in S_i} U_i(s_i', s_{-i})$.
Let $\BR^G_i(s_{-i})$ be the set of best responses to $s_{-i}$ in game
$G$.  We omit the superscript $G$ if the game is clear from context.
\end{defn}

We first define PCE for 2-player games.

\begin{defn}
Given a 2-player game $G$, let $\BU^G_i$ denote the best utility that
player $i$ 
can obtain if the other player $j$ best responds; that is,
$$\BU^G_i = \sup_{\{s_i \in S_i, s_{j} \in \BR^G(s_i)\}}  U_i(s).$$
(We again omit the superscript $G$ it if it is clear from context.)
\end{defn}

\begin{defn} \label{d1perfect}
A strategy profile $s$ is a \textit{perfect cooperative equilibrium (PCE)} 
in a 2-player game $G$ if, for all $i\in \{1,2\}$, we have
$$U_i(s) \ge \BU^G_i.$$
\end{defn} 

It is easy to show that every player does at least as well in a PCE as
in a NE.

\begin{thm} \label{prop1}
If $s$ is a PCE and $s^*$ is a NE in a 2-player game $G$, then for all
$i \in \{1,2\}$, we have $U_i(s) \ge U_i(s^*)$.
\end{thm}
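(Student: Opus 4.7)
The plan is essentially to unwind the definitions and observe that a Nash equilibrium is one specific witness to the supremum that defines $\BU^G_i$. Fix an arbitrary player $i\in\{1,2\}$ and let $j$ denote the other player.

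First I would use the fact that $s^*$ is a NE: by definition, each player best-responds to the other, so in particular $s^*_j \in \BR^G(s^*_i)$. This means the pair $(s^*_i,s^*_j)$ lies in the set
\[
\{(s_i,s_j) : s_i \in S_i,\ s_j \in \BR^G(s_i)\}
\]
over which the supremum defining $\BU^G_i$ is taken. Consequently $U_i(s^*) \le \BU^G_i$.

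Next I would invoke the PCE hypothesis: by Definition~\ref{d1perfect}, $U_i(s) \ge \BU^G_i$. Chaining the two inequalities gives $U_i(s) \ge \BU^G_i \ge U_i(s^*)$, which is the desired conclusion. Since $i$ was arbitrary, the bound holds for both players.

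There is no real obstacle here; the argument is a one-line consequence of the definitions. The only thing to be mildly careful about is that $\BU^G_i$ is defined as a supremum (not a maximum), but this causes no issue since we only need $U_i(s^*)$ to be bounded above by the supremum, and the NE directly furnishes an element of the indexing set attaining $U_i(s^*)$.
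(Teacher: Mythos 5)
Your proposal is correct and follows exactly the paper's argument: the NE condition gives $s^*_j \in \BR^G(s^*_i)$, so $U_i(s^*) \le \BU^G_i$, and the PCE condition $U_i(s) \ge \BU^G_i$ finishes the chain. You merely spell out the supremum step that the paper leaves implicit.
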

\begin{proof} 
Suppose that $s$ is a PCE and $s^*$ is a NE.  Then, by the definition
of NE, $s^*_{3-i} \in BR(s_i^*)$, so by the definition of PCE, $U_i(s)
\ge U_i(s^*)$.
\end{proof}

It is immediate from Theorem~\ref{prop1} that a PCE does not always
exist.  For example, in the Nash bargaining game, a PCE would have to
give each player a payoff of 100, and there is no strategy profile that
has this property.  Nevertheless, we continue in this section to
investigate the properties of PCE; in the following two sections, we
consider generalizations of PCE that are guaranteed to exist.

A strategy profile $s$ \emph{Pareto dominates} strategy profile $s'$ if 
$U_i(s) \ge U_i(s')$ for all players $i$, strategy $s$ \emph{strongly
Pareto dominates} $s'$ if $s$ Pareto dominates $s'$ and $U_j(s) >
U_j(s')$ for some player $j$; strategy $s$ is \emph{Pareto-optimal} if
no strategy profile strongly Pareto dominates $s$; $s$ is a \emph{dominant
strategy profile} if it Pareto dominates all other strategy profiles.

A dominant strategy profile is easily seen to be a NE; it is also a PCE.
\begin{thm}\label{thm:dominant} If $s$ is a dominant strategy profile in a
2-player game $G$, then $s$ is a PCE.
\end{thm}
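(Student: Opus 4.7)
The plan is to apply the definition of dominant strategy profile stated in the paragraph just above the theorem directly. Per that definition, $s$ being a dominant strategy profile means that $s$ Pareto dominates every other strategy profile; unpacking Pareto dominance, this amounts to
\[
U_i(s) \ge U_i(s') \qquad \text{for every player } i \in \{1,2\} \text{ and every strategy profile } s'.
\]

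With this inequality in hand, verifying that $s$ is a PCE reduces to establishing $U_i(s) \ge \BU_i^G$ for each $i$. Fix such an $i$ and let $j$ denote the other player. By definition,
\[
\BU_i^G = \sup\bigl\{\, U_i(s_i', s_j') : s_i' \in S_i,\ s_j' \in \BR^G(s_i')\,\bigr\}.
\]
I would pick an arbitrary pair $(s_i', s_j')$ in the set over which this supremum is taken, observe that $(s_i', s_j')$ is in particular a strategy profile, and then invoke the Pareto-dominance inequality to obtain $U_i(s) \ge U_i(s_i', s_j')$. Taking the supremum over all such admissible pairs preserves the inequality and yields $U_i(s) \ge \BU_i^G$. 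Applying the same argument to both players and appealing to Definition~\ref{d1perfect} then shows that $s$ is a PCE.

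I do not anticipate any real obstacle. The set $\{(s_i', s_j') : s_j' \in \BR^G(s_i')\}$ appearing in the supremum that defines $\BU_i^G$ is a subset of all strategy profiles, so a bound on $U_i$ that holds at every profile (given by Pareto dominance of $s$) restricts trivially to a bound at every profile in this subset. The argument is essentially a one-line unpacking of the paper's definition of dominant strategy profile as a profile that Pareto dominates every alternative.
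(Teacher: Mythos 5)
Your proof is correct and is essentially identical to the paper's: both simply observe that Pareto dominance of $s$ over every profile gives $U_i(s) \ge U_i(s')$ for all profiles $s'$, and since $\BU_i^G$ is a supremum of $U_i$ over a subset of profiles, the bound passes to the supremum. No gaps.
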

\begin{proof} 
Suppose that $s$ is a dominant strategy profile in $G$.  Then for all $i \in
\{1,2\}$, all $s'_i \in S_i$, and all $s'_{3-i} \in BR_{3-i}(s'_i)$, we have
that $U_i(s) \ge U_i(s')$.  Thus, $U_i(s)\ge BU_i$ for all
$i$, so $s$ is a PCE. 
\end{proof}

The next result shows that a strategy profile that Pareto dominates a PCE is
also a PCE.  Thus, if $s$ is a PCE, and $s'$ makes everyone at least as
well off, then  $s'$ is also a PCE.  Note that this property does not
hold for NE.  For example, in Prisoner's Dilemma, (Cooperate, Cooperate)
is not a NE, although it strongly Pareto dominates (Defect, Defect),
which is a NE.

\begin{thm} \label{prop1a}
In a 2-player game, a strategy profile that Pareto dominates a PCE
must itself be a PCE. 
\end{thm}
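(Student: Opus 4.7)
The plan is to observe that the PCE condition in Definition~\ref{d1perfect} imposes, for each player $i$, a single inequality $U_i(s) \ge \BU_i$ where the right-hand side $\BU_i$ is a quantity that depends only on the game $G$, not on the particular strategy profile $s$. So the PCE condition is ``threshold-like'' on each player's payoff, and such conditions are automatically preserved under weakly increasing each player's payoff.

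Concretely, I would unpack the hypotheses as follows. Let $s$ be a PCE in the 2-player game $G$, and let $s'$ Pareto dominate $s$, so by the definition of Pareto dominance we have $U_i(s') \ge U_i(s)$ for both $i \in \{1,2\}$. Since $s$ is a PCE, Definition~\ref{d1perfect} gives $U_i(s) \ge \BU_i^G$ for both $i$. Chaining these inequalities yields $U_i(s') \ge U_i(s) \ge \BU_i^G$ for both $i$, which is exactly the condition for $s'$ to be a PCE.

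There is no real obstacle here; the only thing to be careful about is that $\BU_i^G$ is defined purely in terms of the game $G$ (as a supremum over all pairs $(s_i, s_j)$ with $s_j \in \BR_j^G(s_i)$) and therefore does not change when we move from $s$ to $s'$. Once that is noted, the argument is a two-step chain of inequalities applied separately to each player, and the conclusion follows immediately.
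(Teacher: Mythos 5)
Your proof is correct and is essentially identical to the paper's own argument: both simply chain $U_i(s') \ge U_i(s) \ge \BU_i^G$ for each player, using that $\BU_i^G$ is a fixed quantity of the game. Nothing further is needed.
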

\begin{proof}
Suppose that $s$ is a PCE and $s^*$ Pareto dominates $s$.  
Thus,  for all $i \in N$, we have $$U_i(s^*) \ge U_i(s) \ge BU_i.$$
Thus, $s^*$ is a PCE.
\end{proof}

\begin{cor}\label{cor:pce} If there is a PCE in a 2-player game $G$, 
there is a Pareto-optimal PCE in $G$
(i.e., a PCE that is Pareto-optimal among all strategy profiles).
\end{cor}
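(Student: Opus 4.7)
The plan is to fix a PCE $s$ and then extract a Pareto-optimal PCE by optimizing social welfare within the set of profiles Pareto-dominating $s$, invoking Theorem~\ref{prop1a} at the end.

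More concretely, let $T = \{s' \in S_1 \times S_2 : U_1(s') \ge U_1(s) \text{ and } U_2(s') \ge U_2(s)\}$. Since $s \in T$, the set $T$ is nonempty. Because $S_1$ and $S_2$ are simplexes over finite action sets, $S_1 \times S_2$ is compact; because each $U_i$ is multilinear in the strategies (hence continuous), $T$ is closed in a compact space, and therefore compact. The continuous function $s' \mapsto U_1(s') + U_2(s')$ attains its maximum on $T$ at some profile $s^*$.

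I would then argue that $s^*$ is Pareto-optimal among \emph{all} strategy profiles. Suppose for contradiction that some $s''$ strongly Pareto dominates $s^*$, so $U_i(s'') \ge U_i(s^*)$ for both $i$ with strict inequality for some $j$. Then $U_i(s'') \ge U_i(s^*) \ge U_i(s)$, which places $s''$ in $T$, and $U_1(s'') + U_2(s'') > U_1(s^*) + U_2(s^*)$, contradicting the choice of $s^*$. Finally, since $s^*$ Pareto-dominates the PCE $s$, Theorem~\ref{prop1a} gives that $s^*$ is itself a PCE, and we have already shown it is Pareto-optimal in $G$.

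I do not anticipate a significant obstacle: the only thing to be careful about is that the maximum of $U_1 + U_2$ on $T$ is actually attained, which is the sole place compactness is needed, and the multilinear structure of mixed-strategy payoffs makes this immediate. The step that does the real work is the swap from ``Pareto-optimal among PCE'' to ``Pareto-optimal among all profiles,'' which is handled cleanly by the fact that any strong Pareto dominator of $s^*$ automatically lies in $T$.
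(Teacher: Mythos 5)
Your proposal is correct and follows essentially the same route as the paper's proof: maximize $U_1+U_2$ over the compact set of profiles Pareto-dominating the given PCE, note the maximizer is Pareto-optimal among all profiles, and invoke Theorem~\ref{prop1a}. You merely spell out the "clearly Pareto-optimal" step that the paper leaves implicit.
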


\begin{proof} Given a PCE $s$, let $S^*$ be the set of strategy profiles
that Pareto dominate $s$.  This is a closed set, and hence compact.  Let
$f(s) = U_1(s) + U_2(s)$.  Clearly $f$ is a continuous function, so $f$
takes on its maximum in $S^*$; that is, there is some strategy $s^* \in
S^*$ such that $f(s^*) \ge f(s')$ for all $s' \in S^*$.
Clearly $s^*$ must
be Pareto-optimal, and since $s^*$ Pareto dominates $s$, it must be a
PCE, by Theorem~\ref{prop1a}.
\end{proof}

We now want to define PCE for $n$-player games, where $n > 2$.  The
problem is that ``best response'' is not well defined.  For example, in
a 3-player game, it is not clear what it would mean for players 2 and 3
to make a best response to a strategy of player 1, since what might be
best for player 2 might not be best for player 3.  We nevertheless want
to keep the intuition that player 1 considers, for each of her possible
strategies $s_1$, the likely outcome if she plays $s_1$.  If there is
only one other player, then it seems reasonable to expect that that player
will play a best response to $s_1$.  There are a number of ways we could
define an analogue if there are more than two players; we choose an
approach that both seems natural and leads to a straightforward
generalization of all our results.  Given an $n$-player game $G$ and a
strategy $s_i$ for player $i$, let $G_{s_i}$ be the $(n-1)$-player game
among the players in $N-\{i\}$ that results when player $i$ plays $s_i$.  
We assume that the players in $N-\{i\}$ respond to $s_i$ by playing some
NE in $G_{s_i}$.  Let $\NE^G(s_i)$ denote the NE of $G_{s_i}$.  Again,
we omit the superscript $G$ if it is clear from context.
We now extend the definition of PCE to $n$-player games for $n > 2$ by
replacing $\BR(s_i)$ by $\NE(s_i)$.
Note that 
if $|N| = 2$, then $\NE(s_i) = \BR(s_i)$, so this gives a generalization
of what we did in the 2-player case.
As a first step, we extend the definition of $\BU^G_i$ to the
multi-player case 
by using $\NE^G(s_i)$ instead of $\BR^G(s_i)$; that is,
$$\BU^G_i = \sup_{\{s \in S_i, s_{-i} \in \NE_i^G(s_i)\}}  U_i(s).$$

\begin{defn} \label{d2}
A strategy profile $s$ is a \textit{perfect cooperative equilibrium (PCE)} 
in a game $G$ if for all $i\in N$, we have $$U_i(s) \ge \BU^G_i.$$
\end{defn}

With this definition, we get immediate analogues of
Theorems~\ref{prop1}, \ref{thm:dominant},  
\ref{prop1a}, and Corollary~\ref{cor:pce}, with almost identical
proofs.  Therefore, we state the results here and omit the proofs.

\begin{thm} \label{prop2}
If $s$ is a PCE and $s^*$ is a NE in a game $G$, then for all
$i \in N$, we have $U_i(s) \ge U_i(s^*)$.
\end{thm}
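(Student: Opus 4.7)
The plan is to mimic the proof of Theorem~\ref{prop1}, with the key new ingredient being the observation that if $s^*$ is a NE of the full game $G$, then for each player $i$, the residual profile $s^*_{-i}$ is a NE of the reduced $(n-1)$-player game $G_{s^*_i}$. In other words, $s^*_{-i} \in \NE^G(s^*_i)$.

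Concretely, I would fix an arbitrary player $i \in N$ and argue as follows. Because $s^*$ is a NE of $G$, each player $j \ne i$ plays a best response in $G$ to $s^*_{-j}$; holding $s^*_i$ fixed, this is exactly the condition that $s^*_j$ is a best response in $G_{s^*_i}$ to the strategies of the other players in $N - \{i\}$. Hence $s^*_{-i}$ is a NE of $G_{s^*_i}$, i.e., $s^*_{-i} \in \NE^G(s^*_i)$. Then, by the definition of $\BU^G_i$ (taking $s_i = s^*_i$ and $s_{-i} = s^*_{-i}$ in the supremum), we obtain $\BU^G_i \ge U_i(s^*_i, s^*_{-i}) = U_i(s^*)$. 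Since $s$ is a PCE, $U_i(s) \ge \BU^G_i \ge U_i(s^*)$, which is what we wanted.

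There is essentially no obstacle here: the only substantive step is the verification that the restriction of a NE of $G$ to $N - \{i\}$ is a NE of $G_{s^*_i}$, and this is immediate from unwinding the two definitions of ``best response.'' The rest is a one-line chain of inequalities using the definitions of $\BU^G_i$ and PCE, exactly parallel to the 2-player proof, with $\NE^G(s_i)$ playing the role formerly played by $\BR^G_j(s_i)$.
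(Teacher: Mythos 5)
Your proof is correct and is exactly the argument the paper intends: the paper omits the proof of this theorem, stating it is an "almost identical" analogue of the 2-player case, and your verification that the restriction of a NE of $G$ to $N-\{i\}$ is a NE of $G_{s^*_i}$ (so that $s^*_{-i}\in\NE^G(s^*_i)$, playing the role of $s^*_{3-i}\in\BR(s^*_i)$) is precisely the substitution that makes that analogue go through.
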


\begin{thm}\label{thm:dominant1} If $s$ is a dominant strategy profile in a
game $G$, then $s$ is a PCE.
\end{thm}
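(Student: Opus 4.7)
The plan is to mimic the proof of Theorem~\ref{thm:dominant} essentially verbatim, with the only change being the replacement of the two-player set $\BR_{3-i}(s'_i)$ by the set $\NE^G(s'_i)$ from the new definition. The key observation is that dominance is a statement about \emph{every} strategy profile in the full game: if $s$ is dominant, then $U_i(s) \ge U_i(s')$ for every player $i$ and every strategy profile $s' \in S_1 \times \cdots \times S_n$, regardless of how $s'$ was assembled.

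Concretely, I would fix an arbitrary player $i \in N$, an arbitrary strategy $s'_i \in S_i$, and an arbitrary profile $s'_{-i} \in \NE^G(s'_i)$. Setting $s' = (s'_i, s'_{-i})$, this is a bona fide strategy profile in $G$, so dominance of $s$ yields $U_i(s) \ge U_i(s')$. Taking the supremum over all such pairs $(s'_i, s'_{-i})$ gives $U_i(s) \ge \BU^G_i$, which is exactly the condition required by Definition~\ref{d2}. Since $i$ was arbitrary, $s$ is a PCE.

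I do not expect any real obstacle: the argument is a one-line unpacking of definitions. The only subtlety worth double-checking is that $\NE^G(s'_i)$ is nonempty (so that the supremum is taken over a nonempty set and $\BU^G_i$ is well-defined in the usual sense), but this is guaranteed by Nash's existence theorem applied to the finite $(n-1)$-player game $G_{s'_i}$, since players are allowed to randomize. Even if one wished to handle the degenerate case where some $\NE^G(s'_i)$ were empty, the inequality would still follow from dominance against the remaining profiles.
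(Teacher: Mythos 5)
Your proof is correct and is exactly the argument the paper intends: it states that Theorem~\ref{thm:dominant1} follows by the same reasoning as Theorem~\ref{thm:dominant}, with $\BR$ replaced by $\NE^G(s'_i)$, which is precisely what you do. The added remark on nonemptiness of $\NE^G(s'_i)$ via Nash's existence theorem is a harmless (and reasonable) extra check, not a different approach.
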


\begin{thm} \label{prop2a}
A strategy profile that Pareto dominates a PCE
must itself be a PCE. 
\end{thm}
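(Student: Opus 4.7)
The plan is to mimic the proof of Theorem~\ref{prop1a} essentially verbatim, observing that the only change required by the move from $2$-player to $n$-player games is that $\BR^G(s_i)$ has been replaced by $\NE^G(s_i)$ in the definition of $\BU_i^G$. Crucially, $\BU_i^G$ still depends only on the game $G$ itself (it is a supremum over strategies of player $i$ paired with responses from the other players), not on any particular strategy profile under consideration. Hence the bound $\BU_i^G$ plays the same role in Definition~\ref{d2} that it played in Definition~\ref{d1perfect}.

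Concretely, I would let $s$ be a PCE and suppose that $s^*$ Pareto dominates $s$. I then chain two inequalities: first, since $s^*$ Pareto dominates $s$, I have $U_i(s^*) \ge U_i(s)$ for every player $i \in N$; second, since $s$ is a PCE, Definition~\ref{d2} gives $U_i(s) \ge \BU_i^G$ for every $i \in N$. Combining these yields $U_i(s^*) \ge \BU_i^G$ for every $i \in N$, which is exactly the condition that $s^*$ be a PCE.

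There is no real obstacle here: the multi-player generalization of the threshold $\BU_i^G$ was deliberately defined so that it remains a quantity intrinsic to $G$, and the PCE condition remains a pointwise lower bound on each player's utility. The argument therefore carries over without modification, and no property of $\NE^G(s_i)$ beyond its appearance in the definition of $\BU_i^G$ is needed.
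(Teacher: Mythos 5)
Your proof is correct and is exactly the argument the paper intends: the paper omits the proof of Theorem~\ref{prop2a}, noting it is an almost identical analogue of Theorem~\ref{prop1a}, whose proof is precisely your two-step chain $U_i(s^*) \ge U_i(s) \ge \BU_i^G$. Your observation that $\BU_i^G$ is intrinsic to the game, so the substitution of $\NE^G(s_i)$ for $\BR^G(s_i)$ changes nothing in the argument, is the right justification for why the proof carries over verbatim.
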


\begin{cor}\label{cor:pce1} If there is a PCE in a game 
$G$, there is a Pareto-optimal PCE in $G$.
\end{cor}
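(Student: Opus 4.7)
The plan is to transport the proof of Corollary~\ref{cor:pce} verbatim to the $n$-player setting, substituting Theorem~\ref{prop2a} for Theorem~\ref{prop1a}. Starting from a given PCE $s$ in $G$, I would define
$$S^* = \{s' \in S : U_i(s') \ge U_i(s) \text{ for all } i \in N\},$$
the set of strategy profiles that Pareto dominate $s$.

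Next I would verify that $S^*$ is compact and non-empty. Non-emptiness is immediate, since $s \in S^*$. For compactness, note that the full strategy space $S = S_1 \times \cdots \times S_n$ is a product of finite-dimensional simplices over the finite action sets $A_i$ and is therefore compact, while each $U_i$ is multilinear in the mixed-strategy probabilities and hence continuous. Thus $S^*$, being the intersection of finitely many closed half-spaces $\{s' : U_i(s') \ge U_i(s)\}$ with $S$, is a closed subset of a compact set.

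I would then maximize the social-welfare function $f(s') = \sum_{i \in N} U_i(s')$ over $S^*$. Continuity of $f$ together with compactness of $S^*$ guarantees that $f$ attains its maximum at some point $s^* \in S^*$. To see that $s^*$ is Pareto-optimal in all of $S$ (not merely in $S^*$), suppose for contradiction that some $s' \in S$ strongly Pareto dominates $s^*$; since $s^*$ already Pareto dominates $s$, transitivity gives $s' \in S^*$, and strong domination yields $f(s') > f(s^*)$, contradicting maximality. Finally, since $s^*$ Pareto dominates the PCE $s$, Theorem~\ref{prop2a} implies that $s^*$ is itself a PCE.

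There is no real obstacle here, which is why the authors omit the proof: the only ingredient that is substantively different in the $n$-player case is the definition of $\BU^G_i$ in terms of $\NE^G(s_i)$ rather than $\BR^G(s_i)$, but this enters only through Theorem~\ref{prop2a}, whose conclusion is already available. The compactness and continuity steps proceed identically regardless of $n$, so the 2-player argument transfers without modification.
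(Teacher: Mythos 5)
Your proposal is correct and follows exactly the argument the paper intends (and uses for Corollary~\ref{cor:pce}): maximize total utility over the compact set of profiles Pareto dominating the given PCE, then invoke Theorem~\ref{prop2a}. The only quibble is calling the sets $\{s' : U_i(s') \ge U_i(s)\}$ ``half-spaces''---for $n$ players $U_i$ is multilinear rather than linear, so they are merely closed preimages under a continuous map, which is all the argument needs.
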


We now give some examples of PCE in games of interest.

\begin{exmp}
\textit{A coordination game:}
A coordination game has payoffs as shown in Table~\ref{table:coordination}.
\begin{table}[htb]
\begin{center}
\begin{tabular}{c | c c}
& $a$ & $b$\\
\hline
$a$ &$(k_1,k_2)$ & $(0,0)$ \\
$b$  &$(0,0)$  & $(1,1)$  
\end{tabular}
\caption{Payoffs for coordination game}\label{table:coordination}
\end{center}
\end{table}
It is well known that if $k_1$ and $k_2$ are both positive, then $(a,a)$
and $(b,b)$ are NE (there is also a NE  that  uses mixed
strategies).  On the other hand, if $k_1 > 1$ and $k_2 > 1$, then
$(a,a)$ is the only PCE; if $k_1 < 1$ and $k_2 < 1$, then $(b,b)$ is the
only PCE; and if $k_1 > 1$ and $k_2 < 1$, then there are no PCE (since,
by Theorem~\ref{prop1}, 
a PCE would simultaneously have to give player 1 a payoff of at least
$k_1$ and player 2 a payoff of at least 1).
\hfill \wbox
\end{exmp}

\begin{exmp}
\emph{Prisoner's Dilemma:}
Note that, in Prisoner's Dilemma, $\BU_1 = \BU_2 = 1$, since the best
response is always to defect.  Thus, a strategy profile $s$ is a PCE iff 
$\min(U_1(s),U_2(s)) \ge 1$.  It is immediate that 
(Cooperate, Cooperate) and (Defect, Defect) are PCE, and are the only
PCE in pure strategies, but there are other PCE in mixed strategies.  
For example, 
($\frac{1}{2}$Cooperate+$\frac{1}{2}$Defect, Cooperate) and  
($\frac{1}{2}$Cooperate+$\frac{1}{2}$Defect,
$\frac{1}{2}$Cooperate+$\frac{1}{2}$Defect) are PCE
(where $\alpha$Cooperate + $(1-\alpha)$Defect denotes the mixed strategy where
Cooperate is played with probability $\alpha$ and Defect is played with
probability $1-\alpha$).
\hfill \wbox
\end{exmp}

\begin{exmp}
\textit{Traveler's Dilemma: \ }
To compute the PCE for Traveler's Dilemma, we first need to compute
$\BU_1$ and $\BU_2$.  By symmetry, $\BU_1 = \BU_2$.  We now show that
$\BU_1$ is between $98\frac{1}{6}$ and 99.  If player 1 plays 
$\frac{1}{2}100 + \frac{1}{6}99 + \frac{1}{6}98 + \frac{1}{6}97$, then
it is easy to see that player 2's 
best responses are 99 and 98 (both give player 2 an expected payoff of
$98\frac{5}{6}$); player 1's expected payoff if player 2 plays 99 is
$98\frac{1}{6}$.  Thus, 
$\BU_1 \ge 98\frac{1}{6}$.  To see that $\BU_1$ is at most 99, suppose by
way of contradiction that it is greater than 99.  Then there must be
strategies $s_1 = p_{100}100 + p_{99}99 + \cdots +p_2 2 \in S_1$ and
$s_2 \in \BR_2(s_1)$ such that $U_1(s_1,s_2) > 99$.  
It cannot be the case that $s_2$ gives positive probability to 100 (for
then $s_2$ would not be a best response).  Suppose that $s_2$ gives positive
probability to 99.  Then 99 must itself be a best response.  Thus,
$U_2(s_1, 99) \ge U_2(s_1, 98)$, so $101p_{100} + 99p_{99} + 96p_{98}
\ge 100(p_{100} + p_{99}) + 98p_{98}$, so $p_{100} \ge p_{99} + 2p_{98}$.
Since a best response by player 2 cannot put positive weight on 100, the
highest utility that player 1 can get if player 2 plays a best response
is if player 2 plays 99; then $U_1(s_1, 99) \le 97p_{100} + 99p_{99} +
100p_{98} + 99(1-p_{100} - p_{99} - p_{98})$.  Since $U_1(s_1, 99) >
99$, it follows that $p_{98} > p_{100}$.  This gives a contradiction.
Thus, $s_2$ cannot give positive probability to 99.  This means
that $s_1$ does not give positive probability to either 100 or 99.  But
then $U_1(s_1,s_2) \le U_1(s_1, 98) \le 99$, a contradiction.

Since $s$ is a PCE if $U_i(s) \ge \BU_i(s)$, for $i = 1,2$, it follows
that the only PCE in pure strategies are $(100, 100)$ and $(99, 99)$.
There are also PCE in mixed strategies, such as 
$(\frac{1}{2}100+\frac{1}{2}99,\frac{1}{2}100+\frac{1}{2}99)$ and
$(100,\frac{2}{3}100 + \frac{1}{3}99)$.  
\hfill \wbox
\end{exmp}

\begin{exmp}
\textit{Centipede game: \ }
In the Centipede game \cite{rosenthal}, players take turns moving, with
player 1 moving at odd-numbered turns and player 2 moving at
even-numbered turns.  There is a known upper bound on the number of
turns, say 20.  At each turn $t < 20$, the player whose move it is can
either stop the game or continue.  At turn 20, the game ends if it has
not ended before then.  If the game ends after an odd-numbered turn $t$,
then the payoffs are $(2^t+1,2^{t-1})$; if the game ends after an
even-numbered turn $t$, then the payoffs are $(2^{t-1},2^t+1)$.  
Thus, if player 1 stops at round 1, player 1 gets 3 and player 2 gets 1;
if player 2 stops at round 4, then player 1 gets 8 and player 2 gets 17;
if player 1 stops at round 5, then player 1 gets 33 and player 2 gets
16.  If the game stops at round 20, both players get over 500,000.
The key point here is that it is always better for the player who moves
at step $t$ to end the game than it is to go on for one more step and
let the other player end the game.  Using this observation, a
straightforward backward induction shows the best response for a player
if he is called upon to move at step $t$ is to end the game.  
Not surprisingly,  the only Nash equilibrium has player 1 ending the
game right away.  
But, in practice, people continue the game for quite a while. 

We can think of the centipede game as a normal-form game, where
players are choosing strategies.
To compute the PCE for the game, we need to 
first compute $\BU_1$ and $\BU_2$.
If player 1 continues to the end of the
game, then player 2's best response is to also continue to the end of
the game, giving player 1 a payoff of $2^{19}$ (and player 2 a payoff of
$2^{20}+1$).  If we take $q_{i,j}$ to be the strategy where
player $i$ quits at turn $j$ and $q_{i,C}$ to be the strategy where player $i$
continues to the end of the game, then a straightforward computation
shows that $q_{2,C}$ continues to be a best response to $\alpha q_{1,19}
+ (1-\alpha)q_{1,C}$ as long as $\alpha \ge \frac{3\times
2^{18}}{3\times 2^{18}+1}$.  If we take $\alpha =  \frac{3\times
2^{18}}{3\times 2^{18}+1}$ and player 2 best responds by playing
$q_{2,C}$, then player 1's utility is $2^{19}+\frac{3\times
2^{18}}{3\times 2^{18}+1}$.  It is then straightforward to show that
this is in fact $\BU_1$.  A similar argument shows 
\fullv{that, if player 1 is
best responding, then the best player 2 can do is to play $\beta q_{2,18}
+ (1-\beta)q_{2,C}$, where $\beta = \frac{3\times 2^{17}}{3\times 2^{17}+1}$.
With this choice, player 1's best response is $q_{1,19}$.
using this strategy for player 2, we get that}
\shortv{that}
$\BU_2 =
2^{18}+\frac{3\times 2^{17}}{3\times 2^{17}+1}$.

It is easy to see that there is no pure strategy profile $s$ such that
$U_1(s) \ge \BU_1$ and $U_2(s) \ge \BU_2$.  However, there are many
mixed PCE.  
For example, every strategy profile $(q_{1,C}, s_2)$ where $s_2=\beta
q_{2,18}+(1-\beta)q_{2,C}$ and $\beta \in 
[1-\frac{3\times 2^{17}}{(3\times 2^{17}+1)(3\times 2^{18}+1)}, 
\frac{3\times 2^{18}}{3\times 2^{18}+1}]$ is a PCE. 
\hfill \wbox
\end{exmp}

While PCE has a number of attractive properties, and does seem to
capture some aspects of cooperative behavior, it does not always exist
In the next section, we consider
a variant of PCE that is guaranteed to exist.

\section{$\alpha$-Perfect Cooperative Equilibrium}\label{sec:A-PCE}
In this section, we start by considering a more quantitative version
of PCE called $\alpha$-PCE, which takes into account the degree of
cooperation exhibited by a strategy profile.

\begin{defn}\label{d4}
A strategy profile $s$ is an \emph{$\alpha$-PCE} in a game $G$ if $U_i(s) \ge
\alpha + \BU_i^G$ for all $i \in N$. 
\end{defn}

Clearly, if $s$ is an $\alpha$-PCE, then $s$ is an $\alpha'$-PCE for
$\alpha' \le \alpha$, and $s$ is a PCE iff $s$ is a 0-PCE.  
Note that an $\alpha$-PCE imposes some ``fairness'' requirements.  Each
player must get at least $\alpha$ more (where $\alpha$ can be negative)
than her best possible outcome if the other players best respond.

We again get analogues of Theorems~\ref{prop1} and 
\ref{prop1a}, and Corollary~\ref{cor:pce}, with similar proofs.

\begin{thm} \label{prop3}
If $s$ is an $\alpha$-PCE and $s^*$ is a NE in a game $G$, then for all
$i \in N$, we have $U_i(s) \ge \alpha +  U_i(s^*)$.
\end{thm}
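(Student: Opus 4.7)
The plan is to mimic the proof of Theorem~\ref{prop1} (and its $n$-player analogue Theorem~\ref{prop2}), simply carrying the extra additive $\alpha$ through the chain of inequalities. The key observation is that $\BU_i^G$ is defined as a supremum over outcomes in which player $i$ fixes a strategy and the remaining players respond with an element of $\NE^G(s_i)$, so any Nash equilibrium $s^*$ of $G$ automatically produces a feasible point in that supremum, yielding $\BU_i^G \ge U_i(s^*)$.

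First, I would fix an arbitrary player $i \in N$ and argue that $s^*_{-i} \in \NE^G(s^*_i)$. In the 2-player case this is immediate from the definition of NE, since $s^*_{3-i}$ is a best response to $s^*_i$, i.e., $s^*_{3-i} \in \BR^G(s^*_i)$, and in that case $\NE^G(s^*_i) = \BR^G(s^*_i)$. For $n > 2$, each player $j \ne i$ is best-responding in $s^*$ against $s^*_{-j}$, and since $s^*_i$ is held fixed in the reduced game $G_{s^*_i}$, the profile $s^*_{-i}$ is exactly a Nash equilibrium of $G_{s^*_i}$, hence an element of $\NE^G(s^*_i)$.

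Next, I would conclude from the definition of $\BU_i^G$ that
\[
\BU_i^G \;=\; \sup_{\{s_i \in S_i,\, s_{-i} \in \NE^G(s_i)\}} U_i(s_i, s_{-i}) \;\ge\; U_i(s^*_i, s^*_{-i}) \;=\; U_i(s^*).
\]
Combining this with the $\alpha$-PCE hypothesis $U_i(s) \ge \alpha + \BU_i^G$ gives $U_i(s) \ge \alpha + U_i(s^*)$, as required. Since $i$ was arbitrary, the result holds for every $i \in N$.

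There is no real obstacle here: the only subtle point is checking that $s^*_{-i}$ belongs to $\NE^G(s^*_i)$ in the multi-player definition, and that follows directly from unpacking the NE condition restricted to players in $N - \{i\}$. Everything else is a one-line chaining of inequalities.
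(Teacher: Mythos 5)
Your proof is correct and follows essentially the same route as the paper, which proves Theorem~\ref{prop1} by noting $s^*_{-i}\in\BR(s^*_i)$ (hence $\BU_i\ge U_i(s^*)$) and explicitly omits the proof of Theorem~\ref{prop3} as a "similar proof" with the $\alpha$ carried through. Your check that $s^*_{-i}\in\NE^G(s^*_i)$ in the $n$-player case is exactly the intended (and correct) justification.
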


\begin{thm} \label{prop3a}
A strategy profile that Pareto dominates an $\alpha$-PCE
must itself be an $\alpha$-PCE. 
\end{thm}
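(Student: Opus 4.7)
The plan is to mimic the proof of Theorem~\ref{prop1a} essentially verbatim, simply replacing the constant $\BU_i^G$ on the right-hand side by $\alpha + \BU_i^G$. Specifically, I would assume that $s$ is an $\alpha$-PCE and that $s^*$ Pareto dominates $s$, and then chain the two resulting inequalities. By the definition of $\alpha$-PCE (Definition~\ref{d4}), $U_i(s) \ge \alpha + \BU_i^G$ for every player $i \in N$. By the definition of Pareto dominance, $U_i(s^*) \ge U_i(s)$ for every $i \in N$. Combining these gives $U_i(s^*) \ge \alpha + \BU_i^G$ for all $i \in N$, which is exactly the condition required for $s^*$ to be an $\alpha$-PCE.

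There is essentially no obstacle here, because the quantity $\BU_i^G$ depends only on the underlying game $G$, not on the strategy profile being tested; raising any player's payoff (without lowering anyone else's) therefore cannot destroy the $\alpha$-PCE property. The same argument works uniformly in both the 2-player and $n$-player settings, since the only feature of $\BU_i^G$ used is that it is a fixed threshold determined by $G$ (whether defined via $\BR(s_i)$ or via $\NE(s_i)$). Thus the proof reduces to a one-line chain of inequalities and requires no additional machinery beyond the definitions.
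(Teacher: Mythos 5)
Your proof is correct and is exactly the argument the paper intends: the paper omits the proof of this theorem as a straightforward analogue of Theorem~\ref{prop1a}, whose proof is the same two-inequality chain $U_i(s^*) \ge U_i(s) \ge \alpha + \BU_i^G$. Your observation that $\BU_i^G$ is a fixed threshold depending only on $G$ is precisely why the argument carries over unchanged to both the 2-player and $n$-player definitions.
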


\begin{cor}\label{cor:pce3} If there is an $\alpha$-PCE in a game 
$G$, there is a Pareto-optimal $\alpha$-PCE in $G$.
\end{cor}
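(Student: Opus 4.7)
The plan is to adapt the compactness argument from the proof of Corollary~\ref{cor:pce} essentially verbatim, since the only game-theoretic input used there was Theorem~\ref{prop1a}, and we now have its $\alpha$-analogue available as Theorem~\ref{prop3a}. So I would start by fixing an $\alpha$-PCE $s$ and defining
\[
S^* = \{s' \in S_1 \times \cdots \times S_n : U_i(s') \ge U_i(s) \text{ for all } i \in N\},
\]
the set of strategy profiles that Pareto dominate $s$. Note that $s \in S^*$, so $S^*$ is nonempty.

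The next step is to verify that $S^*$ is compact. The full strategy space is a product of simplices, hence compact. Each defining inequality $U_i(s') \ge U_i(s)$ is a closed condition because $U_i$ is a multilinear (hence continuous) function of the mixed strategies. So $S^*$ is a closed subset of a compact set. I would then let $f(s') = \sum_{i \in N} U_i(s')$, which is continuous, and invoke the extreme value theorem to pick $s^* \in S^*$ maximizing $f$ over $S^*$.

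It remains to check two things. First, $s^*$ is Pareto-optimal in $G$: if some profile $s''$ strongly Pareto dominates $s^*$, then since $s^*$ Pareto dominates $s$, so does $s''$, placing $s''$ in $S^*$; but then $f(s'') > f(s^*)$, contradicting maximality. Second, $s^*$ is an $\alpha$-PCE: $s^*$ Pareto dominates $s$, which is an $\alpha$-PCE, so by Theorem~\ref{prop3a} the profile $s^*$ is an $\alpha$-PCE as well.

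I do not anticipate a real obstacle: the only conceptual ingredient beyond routine topology is closure of $S^*$ under Pareto domination of $\alpha$-PCE, and that is exactly what Theorem~\ref{prop3a} supplies. The mildest subtlety worth mentioning explicitly is continuity of the $U_i$ on the product of simplices, which justifies both the closedness of $S^*$ and the existence of a maximizer of $f$.
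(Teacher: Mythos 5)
Your argument is correct and is essentially the paper's own: the paper omits the proof of this corollary precisely because it repeats the compactness-plus-Pareto-domination argument of Corollary~\ref{cor:pce}, with Theorem~\ref{prop3a} playing the role of Theorem~\ref{prop1a}, exactly as you do. The added details (closedness of $S^*$ from continuity of the $U_i$, nonemptiness, and the strict increase of $f$ under strong Pareto domination) are all accurate.
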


Of course, we are interested in $\alpha$-PCE with the maximum possible
value of $\alpha$.
\begin{defn}\label{d5}
The strategy profile $s$ is an \emph{maximum-PCE (M-PCE)} in a game $G$
if $s$ is an $\alpha$-PCE and for all $\alpha' > \alpha$, there is no
$\alpha'$-PCE in $G$.
\end{defn}

A priori, a M-PCE may not exist in a game $G$.  For example, it may be
the case that there is an $\alpha$-PCE for all $\alpha < 1$ without
there being a 1-PCE.  The next theorem, which uses the fact that the 
strategy space is compact, shows that this cannot be the case.

\begin{thm} \label{prop4}
Every game $G$ has a Pareto-optimal M-PCE.
\end{thm}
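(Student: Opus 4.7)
The plan is to let $\alpha^* = \sup\{\alpha : G \text{ has an } \alpha\text{-PCE}\}$, show that this supremum is finite and attained, and then invoke Corollary~\ref{cor:pce3} to upgrade the witnessing $\alpha^*$-PCE to a Pareto-optimal one. The key issue is a compactness argument; the rest is bookkeeping from results already established.

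First, I would verify that the set $\{\alpha : G \text{ has an } \alpha\text{-PCE}\}$ is nonempty and bounded above. Nonemptiness is immediate: for any fixed strategy profile $s$ and any $\alpha \le \min_{i \in N}(U_i(s) - \BU_i^G)$, $s$ is trivially an $\alpha$-PCE. Boundedness follows because each $U_i$ is bounded on the compact joint strategy space $S = S_1 \times \cdots \times S_n$, and $\BU_i^G$ is finite for the same reason, so any $\alpha$-PCE forces $\alpha \le \max_{s \in S} U_i(s) - \BU_i^G < \infty$ for every $i$. Hence $\alpha^*$ is a finite real number.

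Next comes the main step: showing $\alpha^*$ is achieved. Pick a sequence $\alpha_k \uparrow \alpha^*$ and, for each $k$, a strategy profile $s^k$ that is an $\alpha_k$-PCE, so $U_i(s^k) \ge \alpha_k + \BU_i^G$ for all $i \in N$. Because $S$ is a product of finitely many probability simplices, it is compact, so some subsequence $s^{k_j}$ converges to a limit $s^* \in S$. The utility functions $U_i$ are multilinear, hence continuous, so for each $i$,
\[
U_i(s^*) \;=\; \lim_{j \to \infty} U_i(s^{k_j}) \;\ge\; \lim_{j \to \infty}(\alpha_{k_j} + \BU_i^G) \;=\; \alpha^* + \BU_i^G.
\]
Thus $s^*$ is an $\alpha^*$-PCE, and by the definition of $\alpha^*$ no strictly larger value of $\alpha$ admits an $\alpha$-PCE, so $s^*$ is an M-PCE. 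Finally, applying Corollary~\ref{cor:pce3} with $\alpha = \alpha^*$ yields a Pareto-optimal $\alpha^*$-PCE, which is the desired Pareto-optimal M-PCE.

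The main obstacle is the compactness/limit step; one must simultaneously take limits of the strategies $s^k$ and of the thresholds $\alpha_k$, and rely on continuity of $U_i$ (together with the fact that $\BU_i^G$ does not depend on $s$) to preserve the $\alpha$-PCE inequalities in the limit. Once that is in hand, the existence of a Pareto-optimal refinement is already a reusable consequence of Theorem~\ref{prop3a}.
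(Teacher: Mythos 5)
Your proof is correct and follows essentially the same route as the paper: the paper maximizes the continuous function $f(s)=\min_{i\in N}(U_i(s)-\BU_i^G)$ over the compact strategy space via the extreme value theorem, which is exactly your $\alpha^*=\sup$ argument phrased non-sequentially, and it likewise defers the Pareto-optimal refinement to the argument of Corollary~\ref{cor:pce}. Your sequential-compactness formulation and the explicit appeal to Corollary~\ref{cor:pce3} are just minor repackagings of the same compactness-plus-continuity idea.
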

\begin{proof}
Let $f(s) = \min_{i\in N}(U_i(s)-\BU_i^G)$.
Clearly $f$ is a continuous function; moreover, if $f(s) = \alpha$, then
$s$ is an $\alpha$-PCE.  Since the domain consists of the set of
strategy profiles, which can be viewed as a closed subset of $[0,1]^{|A| 
\times N}$, the domain is compact.  Hence $f$ takes on its maximum at
some strategy profile $s^*$.  Then it is immediate from the definition that
$s^*$ is a M-PCE.
The argument that there is a Pareto-optimal M-PCE is essentially the
same as that given in Corollary~\ref{cor:pce} showing that there is a
Pareto-optimal PCE; we leave details to the reader.
\end{proof}

The following examples show that M-PCE gives some very reasonable
outcomes. 

\begin{exmp}
\textit{The Nash bargaining game, continued:} 
Clearly $U_1 = U_2 = 100$; $(50,50)$ is a $(-50)$-PCE and is the unique
M-PCE.  
\hfill \wbox
\end{exmp}

\begin{exmp}
\textit{A coordination game, continued:}
If $k_1 > 1$ and $k_2 > 1$, then $(a,a)$ is the unique M-PCE; 
if $k_1 < 1$ and $k_2 < 1$, then $(b,b)$ is the unique M-PCE.  In both
cases, $\alpha = 0$. If $k_1>1$ and $k_2<1$, 
then the M-PCE depends on the exact values of $k_1$ and $k_2$.  
If $k_1-1>1-k_2$, then $(a,a)$ is the unique M-PCE; if $k_1-1=1-k_2$,
then both $(a,a)$ and $(b,b)$ are 
M-PCE; otherwise, $(b,b)$ is the unique M-PCE. In all three cases,
$\alpha=-\min(k_1-1,1-k_2) <0$. 
\hfill \wbox
\end{exmp}

\begin{exmp}
\textit{Prisoner's Dilemma, continued: \ }
Clearly (Cooperate, Cooperate) is a 2-PCE and (Defect, Defect) is a
0-PCE; (Cooperate, Cooperate) is the unique M-PCE.
\hfill \wbox
\end{exmp}

\begin{exmp}
\textit{The Traveler's Dilemma, continued: \ }
$(100, 100)$ is easily seen to be the unique M-PCE; since there is no
strategy profile that guarantees both players greater than 100 (since
for any pair of pure strategies, the total payoff to the players is at
most 200, and the total payoff from a mixed strategy profile is a convex
combination of the payoff of pure strategy profiles).
\hfill \wbox
\end{exmp}

\begin{exmp}
\textit{The centipede game, continued: \ }
A straightforward computation shows that the M-PCE in this game is
unique, and is the strategy profile $s^*$ of the form 
$(\alpha q_{1,C} + (1-\alpha)q_{1,19}, q_{2,C})$, where $\alpha$ is
chosen so as to maximize $\min(U_1(s^*)-\BU_1,U_2(s^*)-\BU_2)$.  This
can be done by taking $\alpha=\frac{1}{3\times 2^{18}+2}-\frac{3\times
2^{17}}{(3\times 2^{18}+2)(3\times 2^{18}+1)(3\times 2^{17}+1)}$. 
\hfill \wbox
\end{exmp}

\section{Cooperative Equilibrium}\label{sec:CE}
We can gain further insight into M-PCE (and into what people actually do
in a game) by considering a notion that we call \emph{cooperative
equilibrium}, which generalizes PCE by allowing for the possibility of
punishment. We define CE for 2-player games.  
(As we discuss below, it is not clear
how to extend the definition to $n$-player games for $n > 2$.)

\begin{defn} \label{d1}
A strategy profile $s$ is a \textit{cooperative
 equilibrium (CE)} in a 2-player
game $G$ if, for all players $i \in \{1,2\}$ and
all strategies $s'_i \in S_i$, 
if $j$ is the player other than $i$,
one of the following conditions holds:
\begin{enumerate}
\item $U_i(s) \ge \sup_{s_j'\in BR_j(s_i')} U_i(s');$
\item $U_j(s)> \sup_{s_j'\in S_j}U_j(s')$, 
 		and for some $s'_j \in S_j$, we have
 $U_i(s)\geq U_i(s')$. 
\end{enumerate}
\end{defn}

If we consider only the first condition, then the definition would
be identical to PCE.  
It thus follows that all PCEs are CEs.
The second condition is where punishment comes in.  
Suppose that players $i$ and $j$ are Alice and Bob, respectively.
If there is no response that Bob can make to $s_i'$ that makes 
Bob better off than he is with $s$ then, intuitively, Bob becomes unhappy, 
and will seek to punish Alice. By ``punish Alice'', we mean that Bob will play
a strategy that  makes  Alice no better off than she is with $s$. 
We assume that if Bob can punish Alice when she plays $s_i'$,
then Alice will not deviate to $s_i'$.
In other words, $s$ is a CE if for all strategies $s_i'\in S_i$, 
Alice has no motivation to deviate to $s_i'$ either because (1) when Bob
best responds to $s_i'$, Alice is no better off than she is with $s$, or 
(2) Bob is strictly worse off even when he best responds
to $s_i'$, and Bob can punish Alice by playing a strategy which would make
Alice no better off than she is in $s$; and similarly with the roles
of Alice and Bob reversed.

We are not sure how to generalize CE to arbitrary games.  We could, of
course, replace $\BR_{j}(s'_i)$ by $\NE_{-i}(s'_i)$ in the first
clause.  The question is what to do in the second clause.  We could say
that if each player in $N-\{i\}$ is worse off in every Nash equilibrium
in the game $G_{s_i}$, they punish player $i$.  But punishment
may require a coordination of strategies, and it is not clear how the
players achieve such coordination, at least in a one-shot game.  Not
surprisingly, the examples in the literature where players punish others
are 2-player games like the Ultimatum game.  In general, the intuition of
punishment seems most compelling in 2-player games.

Our main interest in CE is motivated by the following result, which
shows that every Pareto-optimal M-PCE is a CE.

\begin{thm}\label{ParetoM-PCE} Every Pareto-optimal M-PCE is a CE.
\end{thm}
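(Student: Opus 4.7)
The plan is to argue by contradiction: assume $s$ is a Pareto-optimal M-PCE but not a CE. Unpacking Definition~\ref{d1}, there must be a player (WLOG player~$1$) and a deviation $s_1' \in S_1$ for which both conditions~1 and~2 fail. The failure of condition~1 yields some $s_2^* \in \BR_2(s_1')$ with $U_1(s_1', s_2^*) > U_1(s)$; since every best response to $s_1'$ gives player~$2$ the same payoff, $U_2(s_1', s_2^*) = \sup_{s_2' \in S_2} U_2(s_1', s_2')$. The first key step is a dichotomy on this value: if $U_2(s_1', s_2^*) \ge U_2(s)$, then $(s_1', s_2^*)$ strongly Pareto dominates $s$, immediately contradicting Pareto-optimality; otherwise the first disjunct of condition~2 fails, so its second disjunct must hold. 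Call this situation Case~B: $U_1(s_1', s_2') > U_1(s)$ for every $s_2' \in S_2$, and $\sup_{s_2' \in S_2} U_2(s_1', s_2') < U_2(s)$.

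The remaining work is to derive a contradiction in Case~B from the fact that $s$ is M-PCE. Writing $a = U_1(s) - \BU_1$ and $b = U_2(s) - \BU_2$, so that $a, b \ge \alpha = \min(a,b)$, note that $\BU_1 \ge U_1(s_1', s_2^*) > U_1(s)$, forcing $a < 0$ and hence $\alpha < 0$. I would split on whether $b > \alpha$ or $b = \alpha$.

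When $b > \alpha$ (so $a = \alpha$), I would perturb by setting $\tilde s_1 = \lambda \sigma_1 + (1-\lambda) s_1'$ with $s = (\sigma_1, \sigma_2)$, and pairing $\tilde s_1$ with $\sigma_2$. By linearity, for every $\lambda < 1$ the quantity $U_1(\tilde s_1, \sigma_2) - \BU_1$ is a strict convex combination of $\alpha$ and $U_1(s_1', \sigma_2) - \BU_1 > \alpha$, so strictly exceeds $\alpha$; meanwhile $U_2(\tilde s_1, \sigma_2) - \BU_2$ equals $b > \alpha$ at $\lambda = 1$, so by continuity stays above $\alpha$ for $\lambda$ sufficiently close to~$1$. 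The resulting profile has $f > \alpha$ in the notation of the proof of Theorem~\ref{prop4}, contradicting M-PCE.

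When $b = \alpha$, the perturbation above fails because it strictly decreases $U_2 - \BU_2$ below $\alpha$, so I would switch to a direct argument. Since $\alpha < 0$, we have $\BU_2 = U_2(s) - \alpha > U_2(s)$, and by compactness of the strategy space the supremum defining $\BU_2$ is attained at some pair $(\hat s_1, \hat s_2)$ with $\hat s_1 \in \BR_1(\hat s_2)$ and $U_2(\hat s_1, \hat s_2) = \BU_2 > U_2(s)$. Pareto-optimality of $s$ then forces $U_1(\hat s_1, \hat s_2) < U_1(s)$, since otherwise $(\hat s_1, \hat s_2)$ would strongly Pareto dominate $s$. But Case~B applied at $s_2' = \hat s_2$ gives $U_1(s_1', \hat s_2) > U_1(s)$, and the best-response property of $\hat s_1$ gives $U_1(\hat s_1, \hat s_2) \ge U_1(s_1', \hat s_2)$; chaining yields $U_1(\hat s_1, \hat s_2) < U_1(s) < U_1(s_1', \hat s_2) \le U_1(\hat s_1, \hat s_2)$, a contradiction. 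The hard step will be precisely this boundary subcase $b = \alpha$: once player~$2$ is tight in the M-PCE constraint, the perturbation strategy that handles the generic case strictly decreases $U_2 - \BU_2$ and is no longer available, so the contradiction has to be extracted from the witness pair realizing $\BU_2$ by combining its best-response structure with Pareto-optimality.
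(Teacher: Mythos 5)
Your proof is correct, and its skeleton matches the paper's: negate CE, observe that the failure of condition 1 gives a best response $s_2^*\in\BR_2(s_1')$ with $U_1(s_1',s_2^*)>U_1(s)$, and split according to whether this profile Pareto-improves on $s$ (killed by Pareto-optimality) or player 2 is strictly hurt and cannot punish. Where you diverge is in that second case. The paper handles it with a single uniform computation: take, by compactness, $(\hat s_1,\hat s_2)$ with $\hat s_1\in\BR_1(\hat s_2)$ attaining $\BU_2$; then $U_1(\hat s_1,\hat s_2)\ge U_1(s_1',\hat s_2)>U_1(s)\ge\alpha+\BU_1$ and $U_2(\hat s_1,\hat s_2)-\BU_2=0>\alpha$, so this witness is itself a $\beta$-PCE with $\beta>\alpha$, contradicting maximality of $\alpha$ --- no subcase split, and Pareto-optimality is needed only in the first case. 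You instead split on whether player 2's slack $b=U_2(s)-\BU_2$ is tight: when $b>\alpha$ you manufacture a better-than-$\alpha$ profile by mixing $s_1$ toward $s_1'$ (a perturbation the paper never needs), and when $b=\alpha$ you use the same $\BU_2$-witness but derive an outright contradiction by chaining Pareto-optimality with the best-response inequality, rather than exhibiting a better PCE. Both branches are sound (the supremum defining $\BU_2$ is attained since the best-response graph is closed and the strategy space compact, exactly as the paper assumes), so your route costs extra case analysis but nothing in correctness; what it buys is the observation that in the tight case $b=\alpha$ the contradiction comes from Pareto-optimality alone, without invoking maximality of $\alpha$, whereas the paper's argument is the more economical one. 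One small wording point: condition 2 of Definition~\ref{d1} is a conjunction, so rather than its ``first disjunct failing,'' what happens in your Case B is that its first conjunct holds and hence its second conjunct must fail; the Case B statement you then write down and use is exactly right, so this is only a label slip.
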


\begin{proof}
Suppose that $s$ is a Pareto-optimal M-PCE.  To see that $s$ is a CE,
consider the maximum $\alpha$ such that $s$ is an $\alpha$-PCE.  
If $\alpha \ge 0$, then $s$ is a PCE, and hence clearly a CE, so we are done.   
If $\alpha < 0$, then suppose by way of contradiction that $s$ is not a CE.
One of the players, say 1, must have a deviation to a strategy $s_1'$
such that either (1) player 2 has a best response $s_2'$ to $s_1'$ such that
$U_1(s') > U_1(s)$ and $U_2(s') \ge U_2(s)$ or (2) for all
$s_2' \in S_2$, we have $U_2(s') < U_2(s)$ and
$U_1(s')>U_1(s)$. Intuitively, case (2) says that player 2 does worse than $U_2(s)$ no matter
what he does, and cannot punish player 1.  In case (1), it is immediate that
$s$ is not a Pareto-optimal M-PCE.  So we need to consider only case (2).

Suppose that (2) holds.
By definition, $U_i(s) \ge \alpha + \BU_i$ for all $i\in  \{1, 2\}$.  
By compactness,
there must be a strategy profile $s^*$ such that $s_1^* \in
\BR_1(s_2^*)$ and  $U_2(s_2^*) = \BU_2$.  
We claim that $s^*$ is a $\beta$-PCE for some $\beta > \alpha$
(recall that $\alpha$ is the maximum $\alpha'$ such that $s$ is an
$\alpha'$-PCE), contradicting the assumption that $s$ is a M-PCE.   
Since $s_1^* \in
\BR_1(s_2^*)$, we must have $U_1(s^*) \ge U_1(s_1',s_2^*)$ (by the
definition of $\BR$); 
moreover, $U_1(s_1',s_2^*) > U_1(s)$ by case (2).
Since $U_1(s^*) \ge U_1(s_1',s_2^*)$ and $U_1(s_1',s_2^*) > U_1(s)$,
it follows that 
$U_1(s^*)>U_1(s)$. 
Since $U_1(s) \ge \alpha + \BU_1$, there must be some $\beta' > \alpha$
such that $U_1(s^*) \ge \beta' + \BU_1$.  By definition, $U_2(s^*) = \BU_2=0+\BU_2$.  
Thus, $s^*$ is a $\beta$-PCE, where $\beta=\min(\beta',0)$.  
Since $\alpha<0$ and $\alpha<\beta'$, 
we have that $\alpha<\min(\beta',0)=\beta$. Thus, the claim holds, 
completing the proof. 
\end{proof}

\fullv{
We can also prove the following analogues of Theorem~\ref{prop1a} and 
Corollary~\ref{cor:pce}.  Since the proofs are quite similar to 
proofs of Theorem~\ref{prop1a} and Corollary~\ref{cor:pce}, we omit them
here. 
\begin{thm} \label{prop:ce}
A strategy profile that Pareto dominates a CE must itself be a CE. 
\end{thm}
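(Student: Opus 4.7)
The plan is to mirror the structure of the proof of Theorem~\ref{prop1a}, fixing a CE $s$ and a strategy profile $s^*$ that Pareto dominates $s$, and then verifying that $s^*$ itself satisfies the two-case definition of CE by reusing the witnesses that work for $s$. Concretely, fix an arbitrary player $i \in \{1,2\}$, let $j$ be the other player, and fix an arbitrary deviation $s_i' \in S_i$. Since $s$ is a CE, one of the two clauses of Definition~\ref{d1} holds at $(i, s_i')$; I will treat each clause separately and show that the corresponding clause continues to hold for $s^*$.

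For the first clause, suppose $U_i(s) \ge \sup_{s_j' \in \BR_j(s_i')} U_i(s_i', s_j')$. Pareto dominance gives $U_i(s^*) \ge U_i(s)$, so $U_i(s^*) \ge \sup_{s_j' \in \BR_j(s_i')} U_i(s_i', s_j')$, which is exactly clause (1) of the CE definition for $s^*$ at $(i, s_i')$.

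For the second clause, suppose $U_j(s) > \sup_{s_j' \in S_j} U_j(s_i', s_j')$, and that some $\tilde{s}_j \in S_j$ witnesses $U_i(s) \ge U_i(s_i', \tilde{s}_j)$. Pareto dominance gives $U_j(s^*) \ge U_j(s)$, so $U_j(s^*) > \sup_{s_j' \in S_j} U_j(s_i', s_j')$ (strict inequality is preserved when the left side only weakly increases). For the same witness $\tilde{s}_j$, $U_i(s^*) \ge U_i(s) \ge U_i(s_i', \tilde{s}_j)$. Thus clause (2) of the CE definition for $s^*$ at $(i, s_i')$ also holds.

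Since $i$ and $s_i'$ were arbitrary, $s^*$ is a CE. I do not foresee any real obstacle here: the argument is essentially automatic from the fact that Pareto dominance preserves weak inequalities in the direction needed for both clauses, and preserves the strict inequality appearing in clause (2) because the quantity on the right in that clause is a supremum over $s_j'$ that does not involve $s$ or $s^*$ at all. The only minor bookkeeping point is to remember to reuse the same witness $\tilde{s}_j$ from the clause (2) hypothesis for $s$ rather than trying to produce a new one.
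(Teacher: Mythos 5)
Your proof is correct and is exactly the argument the paper has in mind: the paper omits the proof of Theorem~\ref{prop:ce}, stating only that it is similar to that of Theorem~\ref{prop1a}, and your case-by-case verification that Pareto dominance preserves clause (1) (weak inequality against a supremum depending only on $s_i'$) and clause (2) (strict inequality plus the same punishing witness $\tilde{s}_j$) is precisely that analogous argument. Nothing to add.
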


\begin{cor}\label{cor:ce} There is a Pareto-optimal CE in every game.
\end{cor}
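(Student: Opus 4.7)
The plan is to derive this corollary as an immediate consequence of results already in hand. By Theorem~\ref{prop4}, every game admits a Pareto-optimal M-PCE $s^*$, and by Theorem~\ref{ParetoM-PCE} this same $s^*$ is a CE. Since $s^*$ is Pareto-optimal among all strategy profiles, it is a fortiori a Pareto-optimal CE, which proves the corollary. This route is essentially one line.

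Since the authors explicitly flag the parallel with Corollary~\ref{cor:pce}, I would also present the compactness-based argument, for which Theorem~\ref{prop:ce} is the key input. To prove Theorem~\ref{prop:ce}, I would take a CE $s$ and a Pareto dominator $s^*$, fix a player $i$ and a candidate deviation $s_i'$, and argue case by case on which clause of the CE definition holds for $s$ with respect to $s_i'$. In clause~(1), $U_i(s)\ge\sup_{s_j'\in\BR_j(s_i')}U_i(s_i',s_j')$ only becomes stronger when $U_i(s)$ is replaced by the larger $U_i(s^*)$. In clause~(2), the strict inequality $U_j(s)>\sup_{s_j'\in S_j}U_j(s_i',s_j')$ is preserved because $U_j(s^*)\ge U_j(s)$, and the very same witness $s_j'$ from clause~(2) for $s$ still works for $s^*$ via the chain $U_i(s^*)\ge U_i(s)\ge U_i(s_i',s_j')$. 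Thus whichever clause held for $s$ still holds for $s^*$, so $s^*$ is a CE.

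With Theorem~\ref{prop:ce} in hand, the corollary follows the template of Corollary~\ref{cor:pce}. Starting from any CE (for instance, a Pareto-optimal M-PCE supplied by Theorems~\ref{prop4} and~\ref{ParetoM-PCE}), I would consider the set $S^*$ of strategy profiles that Pareto-dominate it. This set is closed and thus compact as a subset of the (already compact) space of strategy profiles, so the continuous function $f(s)=U_1(s)+U_2(s)$ attains its maximum on $S^*$ at some $s^{**}$. Any such $s^{**}$ must be Pareto-optimal among all strategy profiles, and by Theorem~\ref{prop:ce} it inherits the CE property from the starting CE, completing the argument.

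I do not anticipate a real obstacle. Both CE clauses depend monotonically on $U_i(s)$ and $U_j(s)$, so Pareto dominance propagates them almost mechanically; the only mild subtlety is that clause~(2) has an existential part whose witness must be reused unchanged, which the monotonicity chain above handles. The harder-looking fact, that a CE exists at all, has already been supplied by Theorem~\ref{ParetoM-PCE}, so no new existence work is needed.
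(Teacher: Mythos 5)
Your proposal is correct, and in fact it covers two valid routes. The second route is exactly the one the paper gestures at: the paper omits the proofs of Theorem~\ref{prop:ce} and Corollary~\ref{cor:ce}, saying only that they are "quite similar" to those of Theorem~\ref{prop1a} and Corollary~\ref{cor:pce}, and your case analysis on the two CE clauses (monotonicity in $U_i(s)$ for clause (1); preservation of the strict inequality and reuse of the existential witness for clause (2)) is the intended content of Theorem~\ref{prop:ce}, after which the compactness argument goes through verbatim. Your first route is a genuine shortcut that the paper's own results make available but that the paper does not spell out: since Theorem~\ref{prop4} produces an M-PCE that is Pareto-optimal among \emph{all} strategy profiles, and Theorem~\ref{ParetoM-PCE} says that profile is a CE, one gets a Pareto-optimal CE in one line without ever invoking Theorem~\ref{prop:ce} or rerunning the compactness argument. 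The only thing either route needs that Corollary~\ref{cor:pce} did not is an existence statement, and you correctly source that from Theorems~\ref{prop4} and~\ref{ParetoM-PCE} (noting, as the text does right after Theorem~\ref{ParetoM-PCE}, that this is what makes the corollary unconditional, in contrast to Corollary~\ref{cor:pce}). The one implicit restriction worth flagging is that "every game" here must mean every 2-player game, since CE is only defined in that setting; both of your arguments respect this.
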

}

We now consider how CE works in the examples considered earlier.

\begin{exmp}
\textit{The Nash bargaining game:} 
Recall that the Nash bargaining game does not have a PCE, and that every
profile of the form $(a, 100-a)$ is a NE.  We now show that 
each of these profiles is a CE as well.  To see this, first observe that
$U_1(s) + U_2(s) \le 100$ for any strategy profile $s$.  (This is
clearly true for pure strategy profiles, and the expected
utility of a mixed strategy profile is just the convex combination of the
utilities of the underlying pure strategy profiles.)  Now suppose that
player 1 deviates from $(a, 100-a)$ to 
some strategy $s_1$, and that player 2's expected utility from a best 
response $s_2'$ to $s_1$ is $b$.  If $b \ge 100 -a$, then 
$U_1(a,s_2') \le a$, and the first condition of CE applies.  If $b < 100-a$,
then player 2 can punish player 1 by playing 100, in which case player 1 always
gets a reward of 0, and the second condition of CE applies.  
The same considerations apply to player 2's deviations.  
Thus, $(a, 100-a)$ is a CE.  Only one of these CE is a M-PCE: $(50,50)$.

There are also Nash equilibria in mixed strategies; for example,
$(\frac{1}{3}25 + \frac{2}{3}75,\frac{1}{3}25 + \frac{2}{3}75)$ is a NE.
However, it is not hard to show that no nontrivial mixed strategy
profile (i.e., one that is not a pure strategy profile) is a CE.
For suppose that $s$ is a CE where either $s_1$ or
$s_2$ are nontrivial mixed strategies.  We show below that $U_1(s) +
U_2(s) < 100$. 
This means there is pair $(a, 100-a)$ such that $a >
U_1(s)$ and $100-a > U_2(s)$.  So if player 1 deviates to $a$ and
player 2 deviates to $100-a$, neither of the two conditions that
characterize CE hold.

It now remains to show that for nontrivial mixed strategy profiles $s$,
we have $U_1(s)+U_2(s)<100$. 
Suppose that $s_1$ is a nontrivial mixed strategy.
Let $s_1[a]$ denote the probability that $s_1$ plays the pure strategy $a$.
Then $U_1(s)=\sum_{\{a: s_1[a]>0\}}s_1[a]U_1(a,s_2)$, and 
$U_2(s)=\sum_{\{a: s_1[a]>0\}}s_1[a]U_2(a,s_2)$.
So $U_1(s)+U_2(s)=\sum_{\{a:s_1[a]>0\}}s_1[a](U_1(a,s_2)+U_2(a,s_2))$.
Recall that $U_1(s')+U_2(s')\le 100$ for all possible strategy profiles $s'$.
So $\sum_{\{a:s_1[a]>0\}}s_1[a](U_1(a,s_2)+U_2(a,s_2))\le 100$, with equality
holding only when $U_1(a,s_2)+U_2(a,s_2)=100$ for all $a$ such that $s_1[a]>0$.
By assumption, there are at least two strategies $a$ and $a'$ such
that $s_1[a] > 0$ and $s_1[a'] > 0$.  
As can be easily verified, we cannot have $U_1(a,s_2)+U_2(a,s_2)=
U_1(a',s_2)+U_2(a',s_2)=100$.
Thus $U_1(s)+U_2(s)<100$, as desired.
\hfill \wbox
\end{exmp}

\begin{exmp}
\textit{A coordination game, continued:}
If $k_1 > 1$ and $k_2 > 1$, then $(a,a)$ is the only CE; if 
$k_1 < 1$ and $k_2 < 1$, then $(b,b)$ is the only CE; if 
$k_1 > 1$ and $k_2 < 1$, then the two NE, $(a,a)$ and $(b,b)$, are both
CE (although neither is a PCE).  
There is one other NE $s$ in mixed strategies; $s$ is not a CE. 
To see this, note that in $s$ both players have to put positive
probability on each pure strategy.
It easily follows that $U_2(s) = U_2(s_1,b) < 1$ (since $s_1$ puts
positive probability on $a$); similarly, $U_1(s) < 1$.
Hence, if player 1 plays $b$ instead of $s_1$, player 2 has a
unique best response of $b$, 
which strictly increases both players' payoffs. 
Thus, $s$ is not a CE. 
\hfill \wbox
\end{exmp}

\begin{exmp}
\textit{Prisoner's Dilemma, continued:}
Clearly each PCE in Prisoner's Dilemma is a CE.
As we now show, no other strategy profile is a CE.  Suppose, by way of
contradiction, that $s$ is a CE that is not a PCE.  Then some player
must get a payoff with $s$ that is strictly less than 1.  Without loss
of generality, we can assume that it is player 1.
Suppose that $U_1(s) = r_1 < 1$.
But then if player 1 plays Defect, he is guaranteed a better payoff---at
least 1---no matter 
what player 2 does, so $s$ cannot be a CE.
\hfill \wbox
\end{exmp}

\begin{exmp}
\textit{The Traveler's Dilemma, continued:} 
Of course, every PCE in Traveler's Dilemma is a CE, but there are 
others.
For example, $(100, 99)$ is a CE but not a PCE. To see this, note that with
$(100, 99)$, player 1 gets a payoff of 97 and player 2 gets 101, the
maximum possible payoff. So player 2 has no motivation to deviate.
Suppose that there exists some strategy $s_1$ that gives player 1
a payoff strictly greater than 97 when player 2 best responds. This
strictly decreases player 2's payoff. However, player 2 can punish
player 1 by playing 2, so that player 1 gets at most 2, strictly less
than what he gets originally. It easily follows that (100, 99) is a CE.
A similar argument shows that every other Pareto-optimal strategy profiles is a 
CE. 

Recall that (100, 100) is the unique M-PCE of this game.  Intuitively,
a M-PCE satisfies fairness requirements that an arbitrary CE does not.
\hfill \wbox
\end{exmp}

\begin{exmp}
\textit{The centipede game, continued:}
Again, every PCE is a CE. 
In addition, every Pareto-optimal strategy profile is a CE.  Thus, for example,
the strategy profile where both players continue to the end of the game is a CE
(although it is not a PCE), as is the profile where 
player 2 continues at all his moves, but player 1 ends the game at his
last turn. To see that a Pareto-optimal strategy profile is a CE, let $s$ be a 
Pareto-optimal strategy profile.  By way of contradiction, suppose that $s$
is not a CE.  Then there must be a strategy $s_i'$ for some player $i$
such that either (1) there is a best response $s_{3-i}'$ to $s_i'$ 
such that 
$U_i(s) > U_i(s')$ and $U_{3-i}(s') \ge U_{3-i}(s)$ or (2) for all 
$s_{3-i}' \in S_{3-i}$, it must be the case that $U_{3-i}(s') <
U_{3-i}(s)$ and $U_i(s) < U_i(s')$; that is, player $3-i$ does worse than
$U_{3-i}(s)$ no matter 
what he does, and cannot punish player $i$.  In case (1), it is
immediate that $s$ is not Pareto optimal; and case (2) cannot hold,
since player $3-i$ can always punish player $i$ by exiting at his first
turn.  
\hfill \wbox
\end{exmp}

\section{The Complexity of Finding a PCE, M-PCE, and
  CE}\label{sec:complexity} 
In general, it is not obvious how a PCE (or M-PCE, or CE) can be found
efficiently. We show that in 2-player games, a PCE can be found in  
polynomial time if one exists; moreover, determining whether one exists
can also be done in polynomial time. 
Similarly, in 2-player games, both a M-PCE and a CE can always be found in polynomial time.  
The first step in the argument involves showing that in
2-player games, for all strategy profiles $s$, there is a strategy
profile $s'=(s'_1,s'_2)$ that Pareto dominates $s$ such that both
$s_1'$ and $s_2'$ have support at most two pure
strategies (i.e., they give positive probability to at most two pure
strategies).  
We then show that both the problem of computing a PCE and a M-PCE can be
reduced to solving a polynomial number of ``small'' bilinear programs,
each of which can be solved in constant time.  This gives us the desired
polynomial time algorithm for PCE and M-PCE. We then use similar techniques
to show that a Pareto-optimal M-PCE, and thus a CE,
can be found in polynomial time,

\noindent\textbf{Notation:} For a matrix $\mathbf{A}$, let
$\mathbf{A}^T$ denote $\mathbf{A}$ transpose, 
let $\mathbf{A}[i,\cdot]$ denote the $i$th row of $\mathbf{A}$, let
$\mathbf{A}[\cdot,j]$ denote the $j$th column of $\mathbf{A}$,  
and let $\mathbf{A}[i,j]$ be the entry in the $i$th row, $j$th column of
$\mathbf{A}$.  
We say that a vector $x$ is \emph{nonnegative}, denoted $x \ge 0$, if its
all of its entries are nonnegative.

We start by proving the first claim above.
In this discussion, it is convenient to identify a strategy for player 1
with a column vector in $\IR^n$, and a strategy for player 2 with a
column vector in $\IR^m$.  The strategy has a support of size at most two
if the vector has at most two nonzero entries.

\begin{lem}\label{LEM3}
In a 2-player game, for all strategy profiles $s^*$, there exists a
strategy profile $s'=(s_1',s_2')$ that Pareto dominates $s^*$ such that
both $s_1'$ and $s_2'$ have support of size at most two.
\end{lem}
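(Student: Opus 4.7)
The plan is to reduce the lemma to a one-sided ``sparsification'' claim and then apply that claim once to each player. Specifically, I will prove the following sublemma: given a 2-player game, a fixed strategy $\sigma$ for player 2, and any strategy $\tau$ for player 1, there exists a strategy $\tau'$ for player 1 with support of size at most two such that $(\tau', \sigma)$ Pareto dominates $(\tau, \sigma)$. By symmetry, the analogous statement holds with the roles of the two players swapped. Granting this sublemma, I first apply it with $\sigma = s_2^*$ and $\tau = s_1^*$ to obtain $s_1'$ of support at most two with $(s_1', s_2^*)$ Pareto dominating $(s_1^*, s_2^*)$; then I apply the symmetric version with $s_1'$ held fixed and $s_2^*$ in the role of $\tau$ to obtain $s_2'$ of support at most two with $(s_1', s_2')$ Pareto dominating $(s_1', s_2^*)$. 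Since Pareto dominance is transitive, $(s_1', s_2')$ Pareto dominates $s^*$, and both components have support at most two, as required.

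To prove the sublemma, I exploit the fact that the utility space of a 2-player game is $\IR^2$. Fix $\sigma$, and for each pure action $a_i$ of player 1 let $p_i = (U_1(a_i, \sigma), U_2(a_i, \sigma)) \in \IR^2$. As $\tau$ ranges over mixed strategies, the utility pair $(U_1(\tau, \sigma), U_2(\tau, \sigma))$ ranges exactly over the convex polygon $C = \mathrm{conv}\{p_1, \ldots, p_n\}$. Let $p^* \in C$ be the utility pair of the original $\tau$, let $C^+ = \{p \in C : p \ge p^* \text{ coordinatewise}\}$, and let $q$ maximize $u_1 + u_2$ on $C^+$; this maximum is attained by continuity and compactness. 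By construction $q$ Pareto dominates $p^*$, and I claim $q$ lies on the boundary of $C$: if $q$ were in the interior of $C$, then for sufficiently small $\varepsilon > 0$ the point $q + (\varepsilon, \varepsilon)$ would still lie in $C \cap C^+$ and have a strictly larger value of $u_1 + u_2$, contradicting the choice of $q$. Since $C$ is a (possibly degenerate) convex polygon with vertices among $\{p_1, \ldots, p_n\}$, any boundary point of $C$ lies on some edge and is therefore a convex combination of at most two of the $p_i$'s. Translating back, the corresponding mixed strategy $\tau'$ has support of size at most two, and $(\tau', \sigma)$ has utility pair $q$, which Pareto dominates $p^*$.

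The only real technical content is this planar convex-geometry step, which is genuinely special to the 2-player case: in an $n$-player game the same approach combined with Carath\'eodory's theorem would only guarantee support at most $n$. I therefore do not anticipate any serious obstacles beyond bookkeeping --- namely, composing the two applications of the sublemma via transitivity of Pareto dominance, and verifying that a boundary point of a planar convex polygon is expressible as a convex combination of two of its vertices.
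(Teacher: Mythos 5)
Your proof is correct, and structurally it mirrors the paper's: both sparsify one player's strategy at a time with the other held fixed and chain the two Pareto dominations by transitivity (the paper happens to treat player 2 first, you treat player 1 first, which is immaterial). The difference is in how the one-player sparsification step is established. The paper formulates it as a standard-form linear program --- maximize player 1's utility subject to player 2's utility being held exactly at its current value and the probabilities summing to one --- and then invokes LP theory: with only two equality constraints, every vertex (basic feasible solution) has at most two nonzero coordinates, and by the fundamental theorem of linear programming an optimal solution is attained at a vertex. You instead argue directly in utility space: the achievable utility pairs form the planar polytope $\mathrm{conv}\{p_1,\ldots,p_n\}$, you maximize $u_1+u_2$ over the portion dominating $p^*$, show the maximizer cannot be interior to the hull, and use the fact that a boundary point of a planar convex hull is a convex combination of at most two of the generating points (your handling of the degenerate segment/point cases is fine). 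The two arguments are close cousins --- both rest on the two-dimensionality of the relevant constraint/utility space, and your geometric picture is essentially the primal view of the paper's basic-solution argument --- but yours is self-contained and makes transparent why the bound is exactly two in 2-player games (as you note, the same idea in higher dimension only yields support $n$), whereas the paper's explicit LP formulation feeds directly into the algorithmic development of Section~\ref{sec:complexity}, where the same vertex reasoning and concrete programs reappear. A minor difference in the witnesses produced: the paper's step keeps the opponent's utility exactly at its old value while maximizing the other player's, whereas your maximizer may strictly improve both coordinates; either suffices for Pareto dominance.
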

See the appendix for the proof of this lemma and other results not
proved in the main text.

The rest of the section makes use of \emph{bilinear} programs.  
There are a number of slightly different definitions of ``bilinear program''.
For our purposes, we use the following definition.
\begin{defn}
A \emph{bilinear program} $P$ (of size $n \times m$) is a quadratic program
of the form 
$$\begin{array}{ll}
\text{maximize} &x^T\mathbf{A}y+ x^T  c + y^T c'\\
\text{subject to} &x^T \mathbf{B}_1 y \ge d_1\\ 
                  & \mathbf{B}_2 x = d_2\\ 
                  &\mathbf{B}_3 y = d_3\\
                  & x \ge 0\\                  
                  &y \ge 0,
\end{array}$$
where $\mathbf{A}$ and $\mathbf{B}_1$  are $n \times m$ matrices, $x, c
\in \IR^n$, $y, c' \in \IR^m$, $\mathbf{B}_2$ is a $k \times n$ matrix
for some $k$, and $\mathbf{B}_3$ is a $k' \times m$ matrix for some $k'$.
$P$ is \emph{simple} if $\mathbf{B}_2$ and $\mathbf{B}_3$ each has one
row, consisting of all 1's.
(Thus, in a simple bilinear program, we have a single bilinear constraint 
$x^T \mathbf{B}_1 y \ge d_1$, non-negativity constraints on $x$ and $y$,
and constraints on the sum of the components of the vectors $x$ and $y$;
that is, constraints of the form 
$\sum_{i=1}^n x[i] = d'$ and
$\sum_{j=1}^m y[j] = d''$.)
\hfill \wbox
\end{defn}

\begin{lem} \label{BP}
A simple bilinear program of size $2 \times 2$ can be solved in constant
time.  
\end{lem}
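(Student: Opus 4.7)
The plan is to exploit the fact that a simple $2 \times 2$ bilinear program has only four decision variables and, after using the two equality constraints, only two free parameters, so the whole problem reduces to a constant-size optimization in the plane that can be solved by a closed-form case analysis.

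First, I would use the sum constraints $x_1 + x_2 = d_2$ and $y_1 + y_2 = d_3$ together with non-negativity to reparameterize the feasible set. If $d_2 < 0$ or $d_3 < 0$ the program is infeasible; if $d_2 = 0$ the constraint $x \ge 0$ forces $x = 0$ and the problem collapses to a one-variable linear program in $y$, and similarly if $d_3 = 0$. Otherwise, writing $x = (t d_2,\, (1-t) d_2)$ and $y = (u d_3,\, (1-u) d_3)$ with $t, u \in [0,1]$, the objective $x^T \mathbf{A} y + x^T c + y^T c'$ becomes a bilinear polynomial $f(t,u) = \alpha t u + \beta t + \gamma u + \delta$ whose coefficients are computable in constant time from the input, and the bilinear inequality $x^T \mathbf{B}_1 y \ge d_1$ becomes $g(t,u) = \alpha' t u + \beta' t + \gamma' u + \delta' \ge d_1$. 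The feasible region $R$ is thus the intersection of the unit square with one bilinear inequality, a compact set of constant description.

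Next I would enumerate a constant-size list of candidate optima and return the best feasible one. Since $f$ is bilinear, its restriction to any axis-parallel line is affine, so along each edge of the unit square the maximum of $f$ is attained at an endpoint or where the curve $g = d_1$ meets that edge. Any interior stationary point of $f$ satisfies $\alpha u + \beta = 0$ and $\alpha t + \gamma = 0$, giving at most one candidate. Along the curve $g = d_1$, solving the bilinear equation for one variable in terms of the other yields a rational expression, and substituting into $f$ produces a rational function of a single variable whose critical points are roots of a polynomial of degree at most two, solvable in closed form. The full candidate set therefore consists of: (i) the four corners of $[0,1]^2$; (ii) the at most four intersections of $g = d_1$ with edges of $[0,1]^2$, each the root of a single linear equation; (iii) the at most one interior stationary point of $f$; and (iv) the at most two stationary points of $f$ restricted to the curve $g = d_1$. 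All candidates are computable and testable for feasibility in constant time, and there are only a constant number of them, so the optimum can be found in constant time.

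The main technical nuisance is the bookkeeping of degenerate cases: the interior stationary equations degenerate when $\alpha = 0$ (in which case $f$ is affine and the max is on the boundary); the level set $g = d_1$ can degenerate to a line, a pair of lines, or the entire plane; and the feasible region $R$ can be empty or equal to all of $[0,1]^2$. Each such case can be detected by a bounded number of sign tests and handled by a separate closed-form branch, so the overall running time remains constant.
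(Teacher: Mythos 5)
Your proof is correct and follows essentially the same route as the paper's: eliminate $x_2$ and $y_2$ via the two sum constraints, use the fact that a bilinear objective is affine in each remaining variable to push the optimum to the box boundary or the constraint curve $g=d_1$, and reduce the on-curve case to a one-variable rational function whose critical points are roots of a quadratic. The only difference is organizational: you enumerate a constant list of candidate points (corners, curve--edge intersections, stationary points), whereas the paper splits into eight sign-determined one-variable subproblems, but the underlying computations coincide.
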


We can now give our algorithm for finding a PCE.
The idea is to first find $\BU_1$ and $\BU_2$, which can be done in
polynomial time.  We then
use Lemma \ref{LEM3} to reduce the problem to 
$(^n_2)(^m_2)=O(n^2m^2)$ smaller problems, each of a which is a simple
bilinear program of size $2 \times 2$.  By Lemma~\ref{BP}, each of these
smaller problems can be solved in constant time, giving us a
polynomial-time algorithm.
\begin{thm} \label{PCE}
Given a 2-player game $G = (\{1,2\}, A, u)$, we can compute in polynomial time
whether $G$ has a PCE and, if so, we can compute a PCE in 
polynomial time.
\end{thm}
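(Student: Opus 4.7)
The plan is to combine Lemmas~\ref{LEM3} and~\ref{BP} to reduce the PCE existence question to a polynomial number of small bilinear programs, following the outline sketched just before the theorem statement. The algorithm proceeds in three stages: (1) compute the scalars $\BU_1$ and $\BU_2$; (2) use the support-reduction lemma to restrict attention to profiles whose supports each have size at most two; and (3) for each of the $O(n^2 m^2)$ such support pairs, write down and solve a simple $2 \times 2$ bilinear program that tests for a PCE with that support.

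Stage (1) exploits the fact that player $j$'s best response to any $s_i$ can be taken to be pure, so
$\BU_i = \max_{a_j \in A_j} \max_{s_i} \{\, U_i(s_i, a_j) : a_j \in \BR_j(s_i)\,\}$.
For each pure action $a_j \in A_j$, the condition $a_j \in \BR_j(s_i)$ is the conjunction of linear inequalities $U_j(s_i, a_j) \ge U_j(s_i, a_j')$, one per $a_j' \in A_j$; together with the simplex constraints on $s_i$ this defines a polytope, and $U_i(s_i, a_j)$ is linear in $s_i$. Thus $\BU_i$ is the maximum over $|A_j|$ linear programs of polynomial size, and is computable in polynomial time.

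For stages (2) and (3), Lemma~\ref{LEM3} ensures that every strategy profile is Pareto dominated by one whose two supports have size at most two, and Theorem~\ref{prop1a} says anything Pareto dominating a PCE is itself a PCE. Hence a PCE exists iff a small-support PCE exists. I enumerate every pair $(I_1, I_2)$ with $I_1 \subseteq A_1$, $I_2 \subseteq A_2$, $|I_1|, |I_2| \le 2$; there are $O(n^2 m^2)$ such pairs. Fixing one such pair, let $\mathbf{A}_i$ be the restriction of player $i$'s payoff matrix to $I_1 \times I_2$, and let $x, y$ be probability vectors over $I_1, I_2$. Consider the simple bilinear program
\begin{equation*}
\max\ x^T \mathbf{A}_1 y \quad \text{s.t.} \quad x^T \mathbf{A}_2 y \ge \BU_2,\ \mathbf{1}^T x = 1,\ \mathbf{1}^T y = 1,\ x \ge 0,\ y \ge 0.
\end{equation*}
By Lemma~\ref{BP}, its optimum $v$ (or a declaration of infeasibility) is obtainable in constant time. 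A PCE supported on $(I_1, I_2)$ exists iff the program is feasible and $v \ge \BU_1$; when so, the maximizer $(x, y)$ witnesses one. The game $G$ has a PCE iff this test succeeds for at least one support pair, yielding an overall polynomial-time algorithm that both decides existence and produces a PCE when one exists.

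The main obstacle is that the defining conditions $U_1(s) \ge \BU_1$ and $U_2(s) \ge \BU_2$ of a PCE give two coupled bilinear inequalities, while a \emph{simple} bilinear program admits only one bilinear constraint. My resolution is to retain one inequality as the single bilinear constraint and promote the other to the objective, testing at the end whether the optimum clears the remaining threshold. I should also verify that the formulation is genuinely ``simple'': the only equality constraints are $\mathbf{1}^T x = 1$ and $\mathbf{1}^T y = 1$, matching the requirement that $\mathbf{B}_2$ and $\mathbf{B}_3$ each be a single all-ones row, so that Lemma~\ref{BP} indeed applies.
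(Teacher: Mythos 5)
Your proposal is correct and follows essentially the same route as the paper's proof: compute $\BU_1$ and $\BU_2$ by one linear program per pure action of the opponent (with the best-response condition as linear constraints), invoke Lemma~\ref{LEM3} together with Theorem~\ref{prop1a} to restrict to supports of size at most two, and for each of the $O(n^2m^2)$ support pairs solve the simple $2\times 2$ bilinear program that maximizes player 1's payoff subject to player 2's payoff being at least $\BU_2$, accepting iff the optimum reaches $\BU_1$. Your explicit observation about why one of the two PCE inequalities must be promoted to the objective (a simple bilinear program admits only one bilinear constraint) is exactly the device the paper uses, just stated more candidly.
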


The argument that a M-PCE can be found in polynomial time is very
similar. 

\begin{thm} \label{MPCE}
Given a 2-player game $G = (\{1,2\}, A, u)$, we can compute 
a M-PCE in polynomial time. 
\end{thm}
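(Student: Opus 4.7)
The plan is to follow the template of the proof of Theorem~\ref{PCE}, modifying only the inner optimization problem so that it returns the largest feasible $\alpha$ rather than a yes/no answer. First I would compute $\BU_1$ and $\BU_2$ in polynomial time, exactly as in the PCE argument: for each pure strategy $s_i$ of player $i$, a best response of player $j$ is a linear program, and $\BU_i$ is attained at some pure strategy paired with a pure best response.

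Next I would reduce to constant-size supports. By Theorem~\ref{prop4}, $G$ has a Pareto-optimal M-PCE $s^*$; applying Lemma~\ref{LEM3} gives a profile $s'$ that Pareto dominates $s^*$ with each $s_i'$ of support size at most two; and by Theorem~\ref{prop3a}, $s'$ is an $\alpha$-PCE for the same $\alpha$ as $s^*$, so $s'$ is itself a M-PCE. Hence it suffices to search over the $O(n^2 m^2)$ pairs of supports $(I,J) \subseteq A_1 \times A_2$ with $|I|,|J| \le 2$ and, for each such pair, compute the largest $\alpha$ attainable by a strategy profile supported on $(I,J)$.

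For a fixed pair $(I,J)$, let $\mathbf{A}_1,\mathbf{A}_2$ be the $|I|\times|J|$ payoff submatrices and let $x$ and $y$ be the mixing-probability vectors on $I$ and $J$. The quantity to maximize is $\min\!\bigl(x^T \mathbf{A}_1 y - \BU_1,\ x^T \mathbf{A}_2 y - \BU_2\bigr)$ subject to $x,y \ge 0$ and $\sum_i x[i] = \sum_j y[j] = 1$. To fit this into the simple-bilinear-program framework, which admits only one bilinear inequality, I would split into two cases by which of the two terms is the smaller. In Case~A the problem becomes
$$\text{maximize } x^T \mathbf{A}_1 y \ \text{ s.t. }\ x^T (\mathbf{A}_2 - \mathbf{A}_1) y \ge \BU_2 - \BU_1,\ \sum_i x[i] = 1,\ \sum_j y[j] = 1,\ x,y \ge 0,$$
yielding the candidate $\alpha = (\text{optimum}) - \BU_1$; Case~B is symmetric with the roles of players $1$ and $2$ exchanged. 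Each is a simple bilinear program of size at most $2 \times 2$ and hence, by Lemma~\ref{BP}, is solvable in constant time. Taking the maximum $\alpha$ (with a witnessing $(x,y)$) over all $O(n^2 m^2)$ support pairs and both cases yields a M-PCE of $G$ and runs in polynomial time overall.

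The main obstacle is precisely the constraint mismatch exploited in the previous paragraph: the definition of an $\alpha$-PCE imposes two simultaneous bilinear lower-bound constraints (one per player), whereas a simple bilinear program permits only one bilinear inequality. The case split on which of $x^T \mathbf{A}_1 y - \BU_1$ and $x^T \mathbf{A}_2 y - \BU_2$ is the binding ``min'' is what recasts each sub-problem into the admissible form and lets the reduction to Lemma~\ref{BP} go through. Everything else is a routine adaptation of the PCE algorithm.
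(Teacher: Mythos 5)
Your proposal is correct and follows essentially the same route as the paper's proof: compute $\BU_1,\BU_2$ as in Theorem~\ref{PCE}, use Lemma~\ref{LEM3} together with the fact that Pareto domination preserves $\alpha$-PCE to restrict to supports of size at most two, and handle the $\min$ objective by splitting each support-restricted problem into two simple $2\times 2$ bilinear programs (according to which of the two terms is binding), each solved in constant time by Lemma~\ref{BP}. The only cosmetic difference is that you invoke Theorems~\ref{prop4} and~\ref{prop3a} explicitly where the paper states the same facts inline.
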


Again, we use similar arguments to show that a Pareto-optimal M-PCE, 
and thus CE, can be found in polynomial time.
\begin{thm} \label{POMPCE}
Given a 2-player game $G = (\{1,2\}, A, u)$, we can compute 
a Pareto-optimal M-PCE in polynomial time. 
\end{thm}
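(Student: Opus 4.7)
The plan is to piggyback on Theorem~\ref{MPCE}. First, compute $\BU_1$ and $\BU_2$ and the maximum $\alpha^*$ such that an $\alpha^*$-PCE exists, both in polynomial time (the latter via Theorem~\ref{MPCE}). A Pareto-optimal M-PCE is, by definition, an $\alpha^*$-PCE that is Pareto-optimal over all strategy profiles, so the remaining task is to find, among all $\alpha^*$-PCE, one that is Pareto-optimal.

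The key observation is that any $\alpha^*$-PCE $s^*$ that maximizes the social welfare $f(s) = U_1(s) + U_2(s)$ over the (compact) set of $\alpha^*$-PCE is automatically Pareto-optimal over \emph{all} strategy profiles. For if $s'$ strongly Pareto dominated $s^*$, then by Theorem~\ref{prop3a} the profile $s'$ would itself be an $\alpha^*$-PCE, and would satisfy $f(s') > f(s^*)$, contradicting the maximality of $s^*$. So it suffices to compute a maximizer of $f$ over the $\alpha^*$-PCE.

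To do this in polynomial time, I would use Lemma~\ref{LEM3} to restrict the search to strategy profiles whose marginals each have support of size at most two: given any $\alpha^*$-PCE $s^*$, the profile $s'$ produced by Lemma~\ref{LEM3} Pareto dominates $s^*$, is again an $\alpha^*$-PCE by Theorem~\ref{prop3a}, and satisfies $f(s') \ge f(s^*)$. Enumerating the $\binom{n}{2}\binom{m}{2} = O(n^2 m^2)$ support pairs, for each pair I solve the program: maximize $U_1(s) + U_2(s)$ subject to $U_1(s) \ge \alpha^* + \BU_1$, $U_2(s) \ge \alpha^* + \BU_2$, together with the simplex constraints on the two two-element supports. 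This is a bilinear program of fixed dimension $2 \times 2$ with a constant number of bilinear/linear constraints. The main technical point---and the main potential obstacle---is verifying that this slightly richer variant is still solvable in constant time; but by the same case-analysis underlying Lemma~\ref{BP} (enumerating the constantly many active-constraint patterns and solving the resulting constant-size polynomial systems in closed form), each instance is solvable in constant time. Returning the best solution across all $O(n^2 m^2)$ support pairs yields the desired Pareto-optimal M-PCE in polynomial time.
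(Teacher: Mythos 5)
Your proposal is correct, but it takes a genuinely different route in its second stage than the paper does. Both arguments begin by invoking Theorem~\ref{MPCE} (you to extract the maximal $\alpha^*$, the paper to get an actual M-PCE $s$), and both use Lemma~\ref{LEM3} together with Theorem~\ref{prop3a} to restrict attention to supports of size at most two. The difference is in how Pareto-optimality is then secured. The paper argues that any profile Pareto dominating the M-PCE $s$ cannot strictly improve \emph{both} players (else $s$ would not be an M-PCE), so it suffices to solve two programs anchored at $s$: maximize $U_1$ subject to $U_2 \ge U_2(s)$, and symmetrically, each of which has a \emph{single} bilinear constraint and therefore reduces, per support pair, exactly to the simple $2\times 2$ bilinear programs covered verbatim by Lemma~\ref{BP}; it then returns whichever of the two solutions Pareto dominates the other. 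You instead maximize social welfare $U_1+U_2$ over the set of $\alpha^*$-PCE, observing (in the spirit of Corollary~\ref{cor:pce} and Theorem~\ref{prop4}) that any such maximizer is automatically Pareto-optimal among all profiles. This is conceptually cleaner --- one optimization, no ``cannot improve both players'' argument, no final comparison of two candidates --- but your per-support subproblems carry \emph{two} bilinear constraints ($U_1 \ge \alpha^*+\BU_1$ and $U_2 \ge \alpha^*+\BU_2$), so Lemma~\ref{BP} as stated does not apply and its case analysis must be extended. You correctly flag this, and the extension does go through: for fixed $y_1$ both constraints are affine in $x_1$, so the feasible $x_1$-set is still an interval whose endpoints are constants or ratios of affine functions of $y_1$; enumerating the constantly many endpoint/sign patterns again yields one-variable problems with rational objectives of bounded degree, solvable in closed form in constant time. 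So the trade-off is: the paper stays strictly within the scope of Lemma~\ref{BP}, while your route buys a more direct argument at the price of reproving a (slightly richer) constant-time solvability lemma.
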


Since, by Theorem~\ref{ParetoM-PCE}, a Pareto-optimal M-PCE is a
(Pareto-optimal) CE, the 
following corollary is immediate.
\begin{cor} \label{CE}
Given a 2-player game $G = (\{1,2\}, A, u)$, we can compute 
a Pareto-optimal CE in polynomial time. 
\end{cor}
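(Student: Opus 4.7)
The plan is essentially a one-line reduction: run the polynomial-time algorithm guaranteed by Theorem~\ref{POMPCE} to produce a Pareto-optimal M-PCE $s$, and return $s$ as the desired Pareto-optimal CE. All the real computational work has already been done upstream; the only thing left to justify is why this output qualifies as a Pareto-optimal CE.

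For correctness, I would appeal directly to Theorem~\ref{ParetoM-PCE}, which asserts that every Pareto-optimal M-PCE is a CE. Hence the profile $s$ returned by the algorithm from Theorem~\ref{POMPCE} is automatically a CE. To argue that $s$ is moreover Pareto-optimal within the class of CE, I would observe that the paper's notion of Pareto-optimality is a global one, defined with respect to all strategy profiles, not merely CEs: by construction no strategy profile at all strongly Pareto dominates $s$, so in particular no other CE does. Therefore $s$ is a Pareto-optimal CE, and the whole procedure runs in polynomial time because Theorem~\ref{POMPCE} does.

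There is essentially no obstacle. The only subtle point---the closest thing to something worth checking---is that the two uses of ``Pareto-optimal'' (in Theorem~\ref{POMPCE}/\ref{ParetoM-PCE} and in the statement of the corollary) refer to the same notion. Since global Pareto-optimality trivially implies Pareto-optimality within any subclass of strategy profiles containing $s$, this check is immediate, and combining Theorem~\ref{POMPCE} with Theorem~\ref{ParetoM-PCE} yields the corollary with no additional work.
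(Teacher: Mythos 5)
Your proposal is correct and is exactly the paper's argument: the corollary follows immediately by combining Theorem~\ref{POMPCE} (a Pareto-optimal M-PCE is computable in polynomial time) with Theorem~\ref{ParetoM-PCE} (every Pareto-optimal M-PCE is a CE), with global Pareto-optimality carrying over trivially. Nothing further is needed.
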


\section{Comparing M-PCE and the Coco Value}\label{sec:coco}
In this section, we compare M-PCE to the coco value,  
a solution concept proposed by Kalai and Kalai \citeyear{KK09}
that also tries to capture cooperation.
Since the coco value is only defined for 2-player games, we 
consider only 2-player games in this section.
We show that despite their definitions being quite different, the two
solution concepts are closely related.
We also consider their computational complexity, and show that both
can be solved in polynomial time in 2-player games.

\subsection{A review of the coco value}
The coco value is computed by decomposing a game into two components,
which can be 
viewed as a purely cooperative component and a purely competitive
component.  The cooperative component is a \emph{team game}, a game
where both players have identical utility matrices, so that both players
get identical payoffs, no matter what strategy profile is played.
The competitive component is a \emph{zero-sum} game, that is, one where
if player 1's payoff matrix is $A$, then player 2's payoff matrix is $-A$.

As Kalai and Kalai \citeyear{KK09} observe,
every game $G$ can be uniquely decomposed into a team game $G_t$ and a
zero-sum game $G_z$, where if $(\mathbf{A},\mathbf{B})$, $(\mathbf{C},\mathbf{C})$, and $(\mathbf{D},-\mathbf{D})$ are the
utility matrices for $G$, $G_t$, and $G_z$, respectively, then $\mathbf{A}= \mathbf{C}+\mathbf{D}$
and $\mathbf{B}=\mathbf{C}-\mathbf{D}$.
Indeed, we can take $\mathbf{C}= (\mathbf{A} + \mathbf{B})/2$ and 
$\mathbf{D} = (\mathbf{A} - \mathbf{B})/2$.
We call $G_t$ \emph{the team game of $G$} and call $G_z$ \emph{the zero-sum
game of $G$}. 

The \emph{minimax value of game $G$ for player $i$}, denoted 
${\mm}_i(G)$, is the payoff player $i$ gets when the opponent is minimizing 
$i$'s maximum payoff; formally, 
$${\mm}_1(G)= \min_{s_{2}\in S_{2}} \max_{s_1 \in S_1} U_1(s_1,s_2);$$
$\mm_2(G)$ is defined similarly, interchanging 1 and 2.

We are now ready to define the coco value.  Given a game
$G$, let $a$ be the
largest value obtainable in the team game $G_t$ (i.e., the largest value in
the utility matrix for $G_t$), and let $z$ be the minimax value for
player 1 in the zero-sum game $G_z$.  Then 
the \emph{coco value of $G$}, denoted $\coco(G)$, is 
$$(a+z,a-z).$$ 
Note that the coco value is attainable if utilities are transferable:
%joe17
%the players simply play the strategy profile that gives the value $c$ in
the players simply play the strategy profile that gives the value $a$ in
$G_t$; then player 2 transfers $z$ to player 1 ($z$ may be negative, so
that 1 is actually transferring money to 2).  Clearly this outcome 
maximizes social welfare.  Kalai and Kalai \citeyear{KK09} argue that it
is also fair in an appropriate sense.  

\subsection{Examples}
The coco value and M-PCE value are closely related in a number of games of interest, as the following
examples show.
\begin{exmp}
\textit{The Nash bargaining game, continued:}
Clearly, the largest payoff obtainable in 
the team game corresponding to the Nash Bargaining game is $(50,50)$.
Since the game is symmetric, 
the minimax value of each player in the zero-sum game is 0.
Thus, the coco value 
of the Nash bargaining game
is $(50,50)$, which, as we have seen, is also the unique M-PCE value.
\hfill \wbox
\end{exmp}

\begin{exmp}
\textit{Prisoner's Dilemma, continued: \ }
Clearly, the largest payoff obtainable in the team game 
corresponding to Prisoner's Dilemma (given the payoffs shown in the
Introduction) 
is $(3,3)$. Since the game is symmetric,
again, the minimax value in the corresponding zero-sum game is 0.
Thus, the coco value is $(3,3)$, which is 
easily seen to also be the unique M-PCE value:
with these payoffs, $\BU_1 = \BU_2 = 1$, so by both cooperating, the
players have a 2-PCE, which is clearly also a M-PCE.
\hfill \wbox
\end{exmp}

\begin{exmp}
\textit{Traveler's Dilemma, continued:\ }
Clearly, the largest payoff obtainable in 
the team game corresponding to the Traveler's Dilemma is $(100,100)$.
And again, since the game is symmetric, 
the minimax value for each player in the zero-sum game is 0.
Thus, the coco value is $(100,100)$, which 
is  also the unique M-PCE value.
\hfill \wbox
\end{exmp}

As the next example shows, there are games in which the coco value and
\begin{exmp}\label{xam:centipede1}
\textit{The centipede game, continued:\ }
It is easy to see that the largest payoff obtainable in 
the team game corresponding to the centipede game
is $(\frac{2^{19}+2^{20}+1}{2},$ $\frac{2^{19}+2^{20}+1}{2})$:
both players play to the end of the game and split the total payoff.
It is also easy to compute that, in the zero-sum game corresponding to
the centipede game,
player 1's minimax value is 1, while player 2's minimax
value is $-1$, obtained when both players quit immediately.
Thus, the coco value is $(\frac{2^{19}+2^{20}+1}{2}+1,\frac{2^{19}+2^{20}+1}{2}-1)$
$=(\frac{2^{19}+2^{20}+3}{2},\frac{2^{19}+2^{20}-1}{2})$.
This value is not achievable without side payments, and is higher than  
the M-PCE value. 
\hfill \wbox
\end{exmp}

Although, as Example~\ref{xam:centipede1} shows, the M-PCE value and
the coco value can differ, we can say more.
Part of the problem in the centipede game is that the computation
of the coco value effectively assumes that side payments are possible.
The M-PCE value does not take into account the possibility of side payments.
Once we extend the centipede game to allow side payments in an
appropriate sense, it turns out that the coco value and the M-PCE value
are the same.  
To do a fairer comparison of the M-PCE and coco values, we consider
games with side payments.

\subsection{2-player games with side payments}\label{side}
In this subsection, we describe how an arbitrary 2-player game without
payments can be transformed into a game with side payments.  
There is more than one way of doing this; we focus on one, and briefly
discuss a second alternative.  Our procedure
may be of interest beyond the specific application to coco and M-PCE.
We implicitly assume throughout that outcomes can be expressed in dollars
and that players value the dollars the same way.
The idea is to add strategies to the game that allow players to propose
``deals'', which amount to a description of what strategy profiles
should be played and how much money should be transferred.  If the
players propose the same deal, then the suggested strategy profile is
played, and the money is transferred.  Otherwise, a ``backup'' action
is played.

Given a 2-player game $G=(\{1,2\},A,u)$, let $G^* =
(\{1,2\},A^*,u^*)$ be the \emph{game with side payments extending $G$},
where $A^*$ and $u^*$ are defined as follows.  $A^*$ extends $A$ by
adding a collection of actions that we call \emph{deal actions}.
A deal action for player $i$ is a triple of the form $(a,r,a_i') \in
A \times \IR \times A_i$.  Intuitively, this action proposes that the
players play the action profile $a$ and that player 1 should transfer
$r$ to player 2; if the deal is not accepted, then player $i$ plays
$a_i'$.    
Given this intuition, it should be clear how $u^*$ extends $u$.  For
action profiles  $a \in A$, $u^*(a) = u(a)$.  
For profiles actions $a \in (A^*_1 - A_1) \times (A^*_2 - A_2)$, 
the players agree on a
deal if they both propose a deal strategy with the same first two
components $(a,r)$.  In this case they play $a$ and $r$ is transferred.
Otherwise, players just play the backup action. More precisely,
for $a,a'\in A$, $b_i\in A_i$, and $r,r'\in \IR$:
\begin{itemize}
\item $u^*(a) = u(a)$;
\item $u^*_1((a,r,b_1), (a,r,b_2))= u_1(a) -r$;\\
$u^*_2((a,r,b_1), (a,r,b_2))= u_2(a) +r$;  
\item $u^*((a,r,b_1),(a',r',b_2)) = u(b_1,b_2)$ if $(a,r) \ne (a',r')$;
\item $u^*((a,r,b_1),b_2) = u^*(b_1,(a',r',b_2)) = u(b_1,b_2)$.
\end{itemize}
As usual, players are allowed to randomize, and a strategy of player $i$
in $G^*$ is a distribution over actions in $A^*_i$; let $S^*_i$ represent 
the set of player $i$'s strategies.
Let $U_i^*(s)$ denote player $i$'s expected utility if the strategy profile
$s\in S^*$ is played.
We call $G^*$ the game with side payments \emph{extending} $G$, and call $G$
the game \emph{underlying} $G^*$.

Intuitively, when both players play deal actions, we can think of them 
as giving their actions to a
trusted third party.  If they both propose the same deal, the
third party ensures that the deal action is carried out and the transfer
is made. 
Otherwise, the appropriate backup actions are played.

In our approach, we have allowed players to propose arbitrary backup
actions in case their deal offers are not accepted.  We also
considered an alternative approach, where if a deal is proposed by one
of the parties but not accepted, then the players get a fixed default
payoff (e.g., they could both get 0, or a default strategy could be
played, and the players get their payoff according to the default
strategy). Essentially the same results as those we prove hold for this 
approach as well; see the end of Section~\ref{cfr}.

\subsection{Characterizing the coco value and the M-PCE value algebraically}\label{cfr}
At first glance, the coco value and the M-PCE value seem quite
different, although both are trying to get at the notion of cooperation.
However, we show below that both have quite similar characterizations.
In this section, we characterize the two
notions algebraically, using two similar formulas involving the maximum
social welfare 
and the minimax value.  In the next section, we compare axiomatic
characterizations of the notions.

Before proving our results, we first show that, although they are
different games, $G$ and $G^*$ agree on the relevant parameters
(recall that $G^*$ is the game with side payments extending $G$).
Let $\MSW(G)$ be the maximum social welfare of $G$; formally,
$\MSW(G) =\max_{a\in A} (u_1(a)+u_2(a))$.  

\begin{lem}\label{MINMAX} For all 2-player games $G$, 
$\MSW(G) = \MSW(G^*)$ and $\mm_i(G^*) =
\mm_i(G)$, for $i = 1,2$.  
\end{lem}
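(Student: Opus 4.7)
The plan is to prove each equality by showing two inequalities, exploiting a simple fact: whenever one player plays a non-deal action, any deal action proposed by the other player collapses to its backup action, so the payoff is determined entirely by the action profile in $A$.

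\textbf{Step 1: $\MSW(G)=\MSW(G^*)$.} Since $A\subseteq A^*$, we immediately have $\MSW(G^*)\ge \MSW(G)$. For the reverse, I will enumerate the four cases in the definition of $u^*$ and verify that in each case the sum $u^*_1+u^*_2$ equals $u_1(a'')+u_2(a'')$ for some $a''\in A$, hence is bounded by $\MSW(G)$. The only slightly non-obvious case is the agreement profile $((a,r,b_1),(a,r,b_2))$: here the sum is $(u_1(a)-r)+(u_2(a)+r)=u_1(a)+u_2(a)$, so the transfer $r$ cancels. The disagreement and mixed cases all reduce to $u(b_1,b_2)$.

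\textbf{Step 2: $\mm_i(G^*)=\mm_i(G)$.} I will treat $i=1$; the case $i=2$ is symmetric. The key lemma (call it the ``collapse lemma'') is this: given $s_1\in S^*_1$ and $s_2\in S^*_2$, let $\hat{s}_1\in S_1$ and $\hat{s}_2\in S_2$ be the strategies obtained by replacing each deal action $(a,r,b_i)$ in $s_i$'s support with its backup $b_i\in A_i$. If at least one of $s_1,s_2$ assigns probability $0$ to deal actions, then $U^*_1(s_1,s_2)=U_1(\hat{s}_1,\hat{s}_2)$, because the only way a deal can take effect is if \emph{both} players play matching deal actions; otherwise every interaction reduces via the third or fourth clause of the definition of $u^*$ to a payoff in $G$.

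\textbf{Step 3: Applying the collapse lemma.} For $\mm_1(G^*)\le \mm_1(G)$, I restrict the outer $\min$ to $s_2\in S_2\subseteq S^*_2$. By the collapse lemma, for any $s_1\in S^*_1$, $U^*_1(s_1,s_2)=U_1(\hat{s}_1,s_2)$, so $\max_{s_1\in S^*_1}U^*_1(s_1,s_2)=\max_{s_1\in S_1}U_1(s_1,s_2)$; taking the min over $s_2\in S_2$ gives $\mm_1(G)$, and restricting the outer min can only increase the value, so $\mm_1(G^*)\le \mm_1(G)$. For the reverse $\mm_1(G^*)\ge \mm_1(G)$, fix any $s_2\in S^*_2$ and let $\hat{s}_2\in S_2$ be its collapse. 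Pick $s_1^*\in S_1$ maximizing $U_1(\cdot,\hat{s}_2)$; since $s_1^*\in S_1\subseteq S^*_1$ plays no deal action, the collapse lemma gives $U^*_1(s_1^*,s_2)=U_1(s_1^*,\hat{s}_2)\ge \mm_1(G)$. Hence $\max_{s_1\in S^*_1}U^*_1(s_1,s_2)\ge \mm_1(G)$ for every $s_2\in S^*_2$, yielding $\mm_1(G^*)\ge \mm_1(G)$.

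The main obstacle is just articulating the collapse lemma cleanly; once it is in hand, the remaining arguments are one-line inequality chases. Nothing here requires any of the earlier machinery (PCE, M-PCE, bilinear programs), only careful bookkeeping with the cases of $u^*$.
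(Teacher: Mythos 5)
Your proposal is correct and follows essentially the same route as the paper: the ``collapse lemma'' is exactly the observation the paper uses (that $U_1^*((a,r,a_1'),t)=U_1(a_1',t)$ when the opponent plays no deal actions), and both directions of the minimax equality are derived from it in the same way. The only difference is presentational packaging, not substance.
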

\begin{proof}

To see that $\MSW(G) = \MSW(G^*)$, observe
that, by the definition of $u^*$, 
%nan20: rewrote, using actions 
for all action profiles $a^* \in A^*$, there exists an action profile
$a\in A$  
and $r \in \IR$ such that $u^*(a^*) = (u_1(a) + r, u_2(a) - r)$, 
so $u^*_1(a^*) + u_2^*(a^*) = u_1(a) + u_2(a)$.  

To see that $\mm_1(G^*) = \mm_1(G)$, observe that for 
all $t \in S_2$, $a\in A$, and $a_1'\in A_1$, we have that
%joe13*: here it is correct to use U_1^* (as opposed to u_1^*), but you
%need to define U_1^* and U_2^*.
$U_1^*((a,r,a_1'),t) = U_1(a_1',t)$
%nan20: definition of U_1^* and U_2^* now added above when we define G^*
so $$\max_{a_1' \in A_1^*} U_1^*(a_1',t) = \max_{a_1' \in A_1} U_1(a_1',t).$$
It then follows that 
$$\max_{s_1 \in S_1^*} U_1^*(s_1,t) = \max_{s_1 \in S_1} U_1(s_1,t).$$
Thus,
$$\min_{t \in S_2} \max_{s_1 \in S_1^*} U_1^*(s_1,t) = 
	\min_{t \in S_2}\max_{s_1 \in S_1} U_1(s_1,t).$$
Therefore, 
$$\begin{array}{ll}
\mm_1(G^*) &= \min_{t \in S_2^*} \max_{s_1 \in S_1^*} U_1^*(s_1,t) \\
&\le \min_{t \in S_2}\max_{s_1 \in S_1^*} U_1^*(s_1,t) \ \ \ \ [\text{since }S_2^*\supset S_2]\\
&= \min_{t \in S_2} \max_{s_1 \in S_1} U_1(s_1,t)\\
&= \mm_1(G).
\end{array}$$
Thus, $\mm_1(G^*) \le \mm_1(G)$.   Similarly,
for all $s_1 \in S_1$, we have 
$\min_{a_2 \in A_2^*} U_1^*(s_1 ,a_2) = \min_{a_2 \in A_2} U_1(s_1 ,a_2)$.
It then follows that 
$\min_{t \in S_2^*} U_1^*(s_1 ,t) = 
\min_{t \in S_2} U_1(s_1 ,t)$.  Thus, 
$$\min_{t \in S_2^*} \max_{s_1\in S_1} U_1^*(s_1 ,t) = 
\min_{t \in S_2} \max_{s_1 \in S_1}U_1(s_1,t).$$  
It follows that 
$$\begin{array}{lll}
\mm_1(G^*) &= &\min_{t \in S_2^*} \max_{s_1
\in S_1^*} U_1^*(s_1 ,t) \\
&\ge &\min_{t \in S_2^*} \max_{s_1
\in S_1} U_1^*(s_1,t) \ \ \ \ [\text{since }S_1^*\supset S_1]\\
&= &\min_{t \in S_2} \max_{s_1 \in S_1} 
U_1(s_1,t)\\
&= &\mm_1(G).
\end{array}
$$
Thus, $\mm_1(G^*) = \mm_1(G)$.
A similar argument shows that $\mm_2(G^*) = \mm_2(G)$.
\hfill \wbox
\end{proof}

We now characterize the coco value.
\begin{thm}\label{coco}
If $G$ is a 2-player game, then $\coco(G) = \left(\frac{\MSW(G) +
\mm_1(G_z) -  \mm_2(G_z)}{2}, \frac{\MSW(G) - \mm_1(G_z)+
\mm_2(G_z)}{2}\right)$.%
\footnote{Note that $\mm_1(G_z) = -\mm_2(G_z)$ by
von Neumann's minimax theorem \cite{neumann28} (which says that
in every 2-player zero-sum games, there is an equilibrium where 
both players play a minimax strategy).
We write the expression in the form above
to better compare it to the M-PCE value.}
Moreover, $\coco(G) = \coco(G^*)$.
\end{thm}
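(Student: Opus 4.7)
The plan is to split the proof into two parts: first establish the algebraic formula, then use it to derive $\coco(G)=\coco(G^*)$.

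For the algebraic formula, I would argue directly from the decomposition. The team game $G_t$ has payoff matrix $\mathbf{C}=(\mathbf{A}+\mathbf{B})/2$, so the largest entry of $\mathbf{C}$ is
$$a \;=\; \max_{a'\in A}\frac{u_1(a')+u_2(a')}{2} \;=\; \frac{\MSW(G)}{2}.$$
Setting $z=\mm_1(G_z)$ and using $\mm_1(G_z)=-\mm_2(G_z)$ (von Neumann's theorem, as in the footnote), I can rewrite $a\pm z$ by substituting $z=(\mm_1(G_z)-\mm_2(G_z))/2$, giving exactly the two coordinates in the statement. This part is essentially a bookkeeping calculation.

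For $\coco(G)=\coco(G^*)$, the strategy is to apply the formula to both games. By Lemma~\ref{MINMAX}, $\MSW(G)=\MSW(G^*)$, so the only remaining task is to show $\mm_i((G^*)_z)=\mm_i(G_z)$ for $i=1,2$. I would prove this by essentially the same template used in Lemma~\ref{MINMAX}'s proof of $\mm_i(G^*)=\mm_i(G)$, but now applied to the zero-sum component. The key structural fact is that $u_1^*-u_2^*$ on an action profile $(b_1,b_2)\in A$ equals $u_1(b_1,b_2)-u_2(b_1,b_2)$, and for a deal action $(a,r,a_i')$ of player $i$ paired with any pure action of the other player that does not form a match, the realized outcome is just the backup $(a_i',b_j)$, so the zero-sum difference again reduces to $u_1-u_2$ on a profile in $A$. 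The only place deal transfers $r$ actually shift payoffs is when both players propose matching deals, and either player can unilaterally prevent a match by playing within the original action set.

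Using this observation, I would prove both directions separately. For $\mm_1((G^*)_z)\le\mm_1(G_z)$: let $s_2$ be a minimax strategy for player~2 in $G_z$, so $s_2\in S_2\subseteq S_2^*$; against this, any deal $s_1$ proposes goes unmatched and collapses to its backup, so $\max_{s_1\in S_1^*}$ of the zero-sum payoff against $s_2$ equals $\max_{s_1\in S_1}$ of the $G_z$ payoff, which is $\mm_1(G_z)$. For the reverse direction: let $s_1$ be a minimax strategy for player~1 in $G_z$, so $s_1\in S_1\subseteq S_1^*$; any mixed $s_2\in S_2^*$ can be collapsed by replacing each deal atom $(a,r,b_2)$ with its backup $b_2$, yielding an effective $\tilde s_2\in S_2$, and the zero-sum payoff of $(s_1,s_2)$ in $(G^*)_z$ equals that of $(s_1,\tilde s_2)$ in $G_z$, which is at least $\mm_1(G_z)$. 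The symmetric argument handles $\mm_2$.

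The main obstacle, conceptually, is the worry that deal actions introduce a new payoff dimension (the transfer $r$) that could drive the minimax of the zero-sum game to $\pm\infty$ or otherwise change it. The resolution, as sketched above, is that matching requires bilateral agreement, so either player can refuse to participate in deals by staying inside the original action set; once one player does so, deals contribute nothing new, and the minimax collapses to that of $G_z$. Once this is in hand, combining with $\MSW(G)=\MSW(G^*)$ and plugging into the formula finishes the proof.
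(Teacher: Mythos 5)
Your proposal is correct and follows essentially the same route as the paper: the formula comes from the decomposition (the largest team-game payoff is $\MSW(G)/2$, and von Neumann gives $\mm_1(G_z)=-\mm_2(G_z)$), and $\coco(G)=\coco(G^*)$ is reduced to the same quantities via Lemma~\ref{MINMAX}. The only divergence is the final step: the paper invokes the identity $(G_z)^*=(G^*)_z$ (left to the reader) and then applies Lemma~\ref{MINMAX} to $G_z$, whereas you establish $\mm_i((G^*)_z)=\mm_i(G_z)$ directly by a two-sided minimax argument (either player can force all deal atoms to collapse to their backups by staying inside the original action set), which is a sound, self-contained way of closing that gap.
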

\begin{proof}
It is easy to see that the Pareto-optimal payoff profile in $G_t$ is
$\left(\frac{\MSW(G)}{2},\frac{\MSW(G)}{2}\right)$. Thus, by definition, 
$$\begin{array}{ll}
&\coco(G)\\
=&\left(\frac{\MSW(G)}{2},\frac{\MSW(G)}{2}\right)+\left(\mm_1(G_z),\mm_2(G_z)\right)\\  
=&\left(\frac{\MSW(G)+2\mm_1(G_z)}{2},\frac{\MSW(G)+2\mm_2(G_z)}{2}\right)\\ 
=&\left(\frac{\MSW(G)+\mm_1(G_z)-\mm_2(G_z)}{2},\frac{\MSW(G)-\mm_1(G_z)+\mm_2(G_z}{2}\right)  
\end{array}$$
The last equation follows since $G_z$ is a zero-sum game, so
$\mm_1(G_z) =-\mm_2(G_z)$.

The fact that $\coco(G) = \coco(G^*)$ follows from the characterization
of $\coco(G)$ above, the fact that $\MSW(G) = \MSW(G^*)$
(Lemma~\ref{MINMAX}), and the fact that $(G_z)^* = (G^*)_z$, which we
leave to the reader to check.
\hfill \wbox
\end{proof}

The next theorem provides an analogous characterization of the M-PCE
value in 2-player games with side payments. It shows that in
such games the M-PCE value is unique and has the same form as the coco
value. Indeed, the only difference is that we replace $\mm_i(G_z)$
by $\mm_i(G)$.

\begin{thm}\label{t3} If $G$ is a 2-player game, then the unique M-PCE
value of 
the game $G^*$ with side payments extending $G$ is 
$\left(\frac{\MSW(G) + \mm_1(G) - \mm_2(G)}{2}, 
\frac{\MSW(G) - \mm_1(G)+ \mm_2(G)}{2}\right)$.
\end{thm}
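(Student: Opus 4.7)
The plan is to first compute $\BU_1^{G^*}$ and $\BU_2^{G^*}$ explicitly in terms of the parameters of $G$, and then invoke the feasibility bound $U_1^*(s) + U_2^*(s) \le \MSW(G^*) = \MSW(G)$ from Lemma~\ref{MINMAX} to pin down the M-PCE value.

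The main step, and the main obstacle, is establishing $\BU_1^{G^*} = \MSW(G) - \mm_2(G)$ (with the symmetric statement for player 2). For the upper bound, given any strategy $s_1 \in S_1^*$, let $\mu(s_1) \in S_1$ be the distribution over $A_1$ obtained by replacing each deal action $(a,r,b_1)$ in the support of $s_1$ by its backup $b_1$. A direct inspection of the definition of $u^*$ shows that for every pure action $a_2 \in A_2$, $U_2^*(s_1, a_2) = U_2(\mu(s_1), a_2)$, so any best response of player 2 yields $U_2^*(s_1, s_2) \ge \max_{a_2 \in A_2} U_2(\mu(s_1), a_2) \ge \mm_2(G)$. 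Combining this with $U_1^*(s) + U_2^*(s) \le \MSW(G)$ gives $U_1^*(s) \le \MSW(G) - \mm_2(G)$. For the matching lower bound, fix a welfare-maximizing pure action $a^* \in \arg\max_{a \in A}(u_1(a) + u_2(a))$ and a strategy $t^* \in S_1$ attaining $\max_{s_2} U_2(t^*, s_2) = \mm_2(G)$. Let player 1 play the mixed strategy over deal actions $(a^*, r, b_1)$ in which the backup $b_1$ is drawn from $t^*$, all using the common transfer $r = \mm_2(G) - u_2(a^*)$. Then player 2 is indifferent between accepting (getting $u_2(a^*) + r = \mm_2(G)$) and rejecting (facing the backup distribution $t^*$, which yields exactly $\mm_2(G)$); accepting is a best response and gives player 1 the payoff $u_1(a^*) - r = \MSW(G) - \mm_2(G)$.

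Once both $\BU_i^{G^*}$ are known, the rest is routine. For any $\alpha$-PCE $s$ in $G^*$, summing the two constraints $U_i^*(s) \ge \alpha + \BU_i^{G^*}$ and using $U_1^*(s) + U_2^*(s) \le \MSW(G)$ yields $\alpha \le \frac{\mm_1(G) + \mm_2(G) - \MSW(G)}{2}$. Conversely, both players can jointly play the single deal action $(a^*, r^*, b^*)$ with $r^* = u_1(a^*) - \frac{\MSW(G) + \mm_1(G) - \mm_2(G)}{2}$; this profile produces exactly the payoffs claimed in the theorem and realizes the upper bound on $\alpha$, so it is an M-PCE. Uniqueness of the M-PCE value follows because at the maximum $\alpha$ the sum bound and both individual lower bounds must hold with equality, which pins down $U_1^*(s)$ and $U_2^*(s)$ simultaneously.
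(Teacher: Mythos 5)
Your proposal is correct and follows essentially the same route as the paper: compute $\BU_i^{G^*}=\MSW(G)-\mm_{3-i}(G)$ via the same deal-action-with-backup construction for the lower bound and the combination of the social-welfare bound with the fact that a best-responding opponent secures at least her minimax value for the upper bound, then sum the $\alpha$-PCE constraints and exhibit the agreed deal profile attaining the resulting $\alpha$. The only cosmetic difference is that you re-derive the minimax fact directly through the backup-projection map $\mu$ where the paper cites $\mm_i(G^*)=\mm_i(G)$ from Lemma~\ref{MINMAX}.
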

\begin{proof}

We first show that $\BU^{G^*}_1 = \MSW(G) - \mm_2(G)$ and $\BU^{G^*}_2 =
\MSW(G) - \mm_1(G)$.  For $\BU^{G^*}_1$, let $a^*$ be an action profile
in $G$ that maximizes social welfare, that is, $U_1(a^*) + U_2(a^*) =
\MSW(G)$, and let $(s_1',s_2')$ be a strategy profile in $G$ 
such that $s_2' \in \BR^G(s_1')$ and
$U_2(s_1',s_2') = \mm_2(G)$.  (Thus, by playing $s_1'$, player 1 ensures
that player 2 can get no more utility than $\mm_2(G)$, and by playing
$s_2'$, player 2 ensures that she does get utility $\mm_2(G)$ when
player 1 plays $s_1'$.)  

Let $s=(s_1,s_2)$ be such that, in $s_1$, 
player 1 plays deal action $(a^*,\mm_2(G)-u_2(a^*),a_1')$
with the same probability that she plays $a_1'$ in $s_1'$
(where $s_1'$ is as defined above) for all $a_1'\in A_1$; and
$s_2=(a^*,\mm_2(G)-u_2(a^*),a_2)$ for some fixed $a_2\in A_2$.  
Intuitively, $s_1$ does the following: if player 2 agrees to the 
deal in $s_1$, then $a^*$ is carried out,
and player 1 transfers $\mm_2(G)-u_2(a^*)$ to player 2; otherwise player 1
plays the mixed strategy $s_1'$. $s_2$ is a deal action 
that agrees to $s_1$. 
Thus, $U_1^*(s) = u_1(a^*) - (\mm_2(G) - u_2(a^*)) = U_1(a^*) + u_2(a^*)
- \mm_2(G) = 
\MSW(G) - \mm_2(G)$, and $U_2^*(s) = \mm_2(G)$.   
On the other hand, if player 2 plays an action $a_2 \in A_2$, then
$U_2^*(s_1,s_2) = U_2(s_1',a_2)\le U_2(s')=\mm_2(G).$
Thus, player 2 gets at most $\mm_2(G)$ when player 1 plays $s_1$, 
so $s_2 \in \BR^{G^*}_2(s_1)$.  
This shows that $\BU_1^{G^*} \ge \MSW(G) - \mm_2(G)$.

To see that $\BU_1^{G^*} \le \MSW(G) - \mm_2(G)$, consider a strategy
profile $s'' = (s_1'',s_2'') \in S^*$ with $s_2'' \in
\BR^{G^*}_2(s_1'')$.  Since $\mm_2(G^*) = \mm_2(G)$, it follows that 
$U_2^*(s'') \ge \mm_2(G)$.  Since $\MSW(G^*) = \MSW(G)$ by
Lemma~\ref{MINMAX}, it follows 
that $U_1^*(s'') + U_2^*(s'') \le \MSW(G)$.  Thus, $U_1^*(s'') \le \MSW(G)
- \mm_2(G)$, so $\BU_1^{G^*} \le \MSW(G) - \mm_2(G)$.
%nan19: added the next sentence
Thus, $\BU_1^{G^*} = \MSW(G) - \mm_2(G)$, as desired.

%nan19: typo
The argument that $\BU_2^{G^*} = \MSW(G) - \mm_1(G)$ is similar.
%The argument that $\BU_2^{G^*} \ge \MSW(G) - \mm_1(G)$ is similar.

Now suppose that we have a strategy $s^+ \in S^*$ such that $U_1(s^+)
%nan19: rewrote slightly, specifically, changed `-\alpha' to `+\alpha' as
% this agrees better with the definition of \alpha-PCE,
\ge \BU_1^{G^*} + \alpha$ and $U_2^*(s^+) \ge \BU_2^{G^*} + \alpha$.
Since $\MSW(G^*) = \MSW(G)$, it follows that $\BU_1(G^*) + \BU_2(G^*) +
2\alpha \le \MSW(G)$.  Plugging in our characterizations of $\BU_1(G^*)$
and $\BU_2(G^*)$, we get that $\alpha \le \frac{-\MSW(G) + \mm_1(G) +
\mm_2(G)}{2}$.  Taking $\beta= \frac{-\MSW(G) + \mm_1(G) +
\mm_2(G)}{2}$, we now show that we can find a $\beta$-PCE.  It follows
that this must be a M-PCE.

Let $a^*$  be the action profile in $G$ defined above that
maximizes social welfare, and let $a'\in A$.  
Let $s^+=(s_1^+,s_2^+)$, where $s_1^+=(a^*,
u_1(a^*)-\frac{\MSW(G)+\mm_1(G)-\mm_2(G)}{2}, a_1')$ and
$s_2^+=(a^*, u_1(a^*)-\frac{\MSW(G)+\mm_1(G)-\mm_2(G)}{2}, a_2')$.
It is also easy to check that $U_1(s^+)=\frac{\MSW(G)+\mm_1(G)-\mm_2(G)}{2}$, 
and $U_2(s^+) = \frac{\MSW(G)-\mm_1(G)+\mm_2(G)}{2}$. 

It can also easily be checked that $U_i(s^+)=\BU_i+\beta$ for $i=1,2$, so
$s^+$ is indeed a $\beta$-PCE.  
Therefore, $s^+$ is a M-PCE, and its value is a M-PCE value, as desired. 
Since $U_1(s^+)+U_2(s^+)=MSW(G)$, it follows that the M-PCE value is unique.
\end{proof}

As Theorems~\ref{coco} and~\ref{t3} show, in a 2-player game $G^*$ with
side payments, the coco value and M-PCE value are characterized by very
similar equations, making use of $\MSW(G^*)$ and minimax values.
%joe17
%The only difference is that coco value uses the minimax value of the
The only difference is that the coco value uses the minimax value of the
zero-sum game $G_z$,   
%nan18: rewrote here and below
while the M-PCE value uses minimax value of $G$. 
%where M-PCE value uses minimax value of the corresponding game $G'$ without sidepayoffs. 
It immediately follows from Theorem \ref{coco} and \ref{t3} that the coco
value and the M-PCE value coincide in all games where  
$$\mm_1(G_z)-\mm_2(G_z)=\mm_1(G)-\mm_2(G).$$
Such games include team games, \emph{equal-sum games} 
%joe11:  this shouldn't be in a footnote
%\footnote{\emph{A equal-sum game} is a game with 
(games with a payoff matrices $(A,B)$ such that 
$A+B$ is a constant matrix, all of whose entries are identical), 
\emph{symmetric games} 
(games where the strategy space is the same for both players, that is,
$S_1 = S_2$, and $U_1(s_1,s_2)=U_2(s_2,s_1)$ for all $s_1,s_2\in S_1$),
and many others.  
%joe17
%We could also use these theorems to
We can also use these theorems to
%Of course, it is also easy to use these theorems to
show that the M-PCE value and the coco value can differ, even in a game
%joe17
%where side payments are allowed.
where side payments are allowed,
as the following example shows.
%There are games in which M-PCE value and coco value differ, as shown in
%the next example. 
\begin{exmp}\label{xam1}
Let $G$ be the 2-player game described by the following payoff matrix:
\begin{table}[htb]
\begin{center}
\begin{tabular}{c |  c c}
& a & b\\
\hline
c &(3,2) &(1,0) 
\end{tabular}
\end{center}
\end{table}
\vspace{-.1in}

\noindent Let $G^*$ be the game with side payments extending $G$.
Taking player 1 to be the row player and player 2 to be the column player, 
it is easy to check that $\MSW(G) = 5$, $\mm_1(G) = 1$, and
%joe15: removed paragraph break
%
$\mm_2(G)=2$,  
Thus, by Theorem \ref{MPCE}, the M-PCE value of $G^*$ is 
$(\frac{5+1-2}{2},\frac{5-1+2}{2})=(2,3)$.  
On the other hand, it is easy to check that $\coco(G) = \coco(G^*)= (3,2)$.  

It seems somewhat surprising that the M-PCE here should be $(2,3)$,
since player 1 gets a higher payoff than player 2 no matter which
strategy profile in $G$ is played.  Moreover, $\BU_1^G = 3$ and $\BU_2^G
= 2$.  But things change when transfers are allowed.  It is easy to
check that it is still the case that $\BU_1^{G^*}= 3$; if player 1 plays
$c$, then player 2's best response is to play $a$.  
But $\BU_2^{G^*} = 4$; if player 2 plays
$((c,a),2,b)$, offering to play $(c,a)$, provided that player 1
transfers an additional 2, then player 1's best response is to agree (for
otherwise player 2 plays $b$), giving player 2 a payoff of 4. 
The possibility that player 2 can ``threaten'' player 1 in this way (even
though the moves are made simultaneously, so no actual threat is
involved) is why $\mm_2(G) \ge \mm_1(G)$.
\hfill  \wbox
\end{exmp}

We conclude this subsection by considering what happens if a default
strategy profile is used instead of backup actions when defining
games with side payments.  
Let the default payoffs be $(d_1,d_2)$.
Then a similar argument to above shows that the M-PCE value becomes
$$\left(\frac{\MSW(G)+d_1 - d_2}{2}, \frac{\MSW(G)-d_1+d_2}{2}\right).$$ 
Thus, rather than using the minimax payoffs in the formula, we now use 
the default payoffs.  Note that if the default payoffs are $(0,0)$,
then the M-PCE amounts to the players splitting the maximum social
welfare. 
We leave the details to the reader.

\subsection{Axiomatic comparison}
In this section, we provide an axiomatization of the M-PCE value and
compare it to the axiomatization of the coco value given by Kalai and
Kalai \citeyear{KK09}.
Before jumping into the axioms, we first explain the term
``axiomatize'' in this context.
Given a function $f:A\rightarrow B$, we say a set AX of axioms 
\emph{axiomatizes $f$ in $A$} if $f$ is the unique function mapping $A$
to $B$ that satisfies all axioms in AX.
Recall that every 2-player normal-form game has a unique coco
value. We can thus view the coco value as a function from 
2-player normal-form games to $\IR^2$. Therefore, a set AX of axioms 
axiomatizes the coco value if the coco value is the unique
function that maps from the set to $\IR^2$ that satisfies all the axioms in
AX.

Kalai and Kalai \citeyear{KK09} show that the following collection of
axioms axiomatizes the coco value.  We describe the axioms in terms of
an arbitrary function $f$.  If $f(G) = (a_1, a_2)$, then we take $f_i(G)
= a_i$, for $i = 1,2$.
\begin{enumerate}
	\item \textbf{Maximum social welfare.}
$f$ maximizes social welfare: $f_1(G) + f_2(G)  = \MSW(G)$.
	\item \textbf{Shift invariance.} Shifting payoffs by constants
leads to a corresponding shift in the value.  
That is, if $c = (c_1,c_2)\in \IR^2$, 
$G = (\{1,2\},A, u)$ and $G^c = (\{1,2\},A,u^c)$, where $u_i^c(a) =
u_i(a) + c_i$ for 
all $a \in A$, then $f(G^c) = (f_1(G) + c_1, f_2(G) + c_2)$. 
	\item \textbf{Monotonicity in actions.} 
Removing an action of a player cannot increase her value.  
That is, if $G = (\{1,2\}, A_1 \times A_2, u)$, 
and $G' = (\{1,2\}, A_1'\times A_2, u|_{A_1' \times A_2})$, 
where $A_1' \subseteq A_1$, then $f_1(G') \le f_1(G)$, and
similarly if we replace $A_2$ by $A_2' \subseteq A_2$.
	\item \textbf{Payoff dominance.} If, for all action profiles
$a\in A$, a player's expected payoff is strictly larger than her
opponent's, then her value should be at least as large as the
opponent's.  That is, if $u_i(a) \ge u_j(a)$ for all $a \in A$, then
$f_i(G) \ge f_{j}(G)$.
	\item \textbf{Invariance to replicated strategies.} 
Adding a mixed strategy of player 1 as a new action for her does 
not change the value of the game; similarly for player 2. That is, if 
$G = (\{1,2\}, A_1 \times A_2, u)$, 
$t\in S_1$, %(so that $t_1$ is a mixed strategy of player 1), 
and  $G' = (\{1,2\}, A_1' \times A_2, u')$, where $A_1' = A_1 \cup \{t\}$, 
$u'(t,a_2) = U(t,a_2)$ for all $a_2\in A_2$, and
$u'(a) = u(a)$ for all $a \in A$
(so that $G'$ extends $G$ by adding to $A_1$ one new action, which can
be identified with a mixed strategy in $S_1$). Then $f(G) = f(G')$.  The same
holds if we add a 
strategy to $A_2$.
\end{enumerate}

\begin{thm}\label{cocochar} \textrm{\cite{KK09}}  Axioms 1-5
characterize the coco value in 2-player normal-formal games.%
\footnote{
Kalai and Kalai actually consider Bayesian games in their
characterization, and have an additional axiom that they call
\emph{monotonicity in information}.  This axiom trivializes 
in normal-form games (which can be viewed as the special case of
Bayesian games where players have exactly one possible type).  It is
easy to see that their proof shows that Axioms 1-5 characterizes the
coco value in normal-form games.}
\end{thm}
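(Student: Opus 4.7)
The plan is to adapt the argument of Kalai and Kalai \cite{KK09}, as indicated in the footnote. The proof has two directions: verifying that the coco value satisfies all five axioms, and then establishing uniqueness.

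For the first direction, each axiom can be checked using the algebraic characterization $\coco(G) = \bigl(\tfrac{\MSW(G)+\mm_1(G_z)-\mm_2(G_z)}{2}, \tfrac{\MSW(G)-\mm_1(G_z)+\mm_2(G_z)}{2}\bigr)$ from Theorem~\ref{coco}. Maximum social welfare is immediate since the two components sum to $\MSW(G)$. For shift invariance, observe that shifting $u$ by $(c_1,c_2)$ shifts the team-game matrix $C = (A+B)/2$ by $(c_1+c_2)/2$ (so $\MSW$ shifts by $c_1+c_2$) and shifts $D = (A-B)/2$ by $(c_1-c_2)/2$ (so $\mm_1(G_z)$ shifts by $(c_1-c_2)/2$); combining these pieces gives precisely the required shift in $\coco$. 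Monotonicity in actions follows because deleting an action of player $1$ can only decrease both $\MSW(G)$ and $\mm_1(G_z)$. Payoff dominance holds because if $u_1(a) \ge u_2(a)$ for all $a$, then $D \ge 0$ entrywise, forcing $\mm_1(G_z) \ge 0$. Invariance to replicated strategies holds because neither $\MSW$ nor the minimax value of $G_z$ changes when a mixed strategy is added as a pure action.

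For uniqueness, suppose $f$ satisfies all five axioms; we must show $f = \coco$. By shift invariance, it suffices to show that $f(G) = (0,0)$ for every game $G$ with $\coco(G) = (0,0)$, that is, with $\MSW(G) = 0$ and $\mm_1(G_z) = 0$. Maximum social welfare gives $f_1(G) + f_2(G) = 0$ in this case, so it remains only to establish $f_1(G) = f_2(G)$. The plan is to use invariance to replicated strategies to adjoin to each player's action set certain mixed strategies (in particular, optimal minimax strategies of $G_z$, which by the minimax theorem guarantee a zero-sum value of $0$), and then, using monotonicity in actions, pass to an auxiliary game whose payoff matrix is symmetric between the two players. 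For such a symmetric game, payoff dominance applied in both directions forces $f_1 = f_2$, and monotonicity in actions together with the identity of values between the auxiliary game and $G$ transfers this equality back to $G$.

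The main obstacle is the symmetrization step. In a general game with $\MSW(G) = 0$ and $\mm_1(G_z) = 0$, one player may have pointwise larger payoffs than the other, so payoff dominance cannot be invoked directly on $G$ itself. The device of Kalai and Kalai is to construct an auxiliary game (obtained by replicating carefully chosen mixed strategies and then, when necessary, deleting original actions) whose payoff structure is symmetric between the two players, but which agrees in $f$-value with $G$ by the combination of invariance to replicated strategies and monotonicity in actions. Once this symmetrization is achieved, payoff dominance applies symmetrically and yields the required equality. This auxiliary-game construction is the technical heart of the argument, and is what we adapt from \cite{KK09}; the extension from their Bayesian setting to the normal-form setting treated here requires no new ideas, as normal-form games arise as the single-type special case of Bayesian games and the monotonicity-in-information axiom trivializes.
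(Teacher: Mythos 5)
The paper does not actually prove this theorem --- it is stated as a quoted result of Kalai and Kalai \cite{KK09}, with only the footnote's observation that their Bayesian-game proof specializes to normal-form games --- so your proposal, which verifies soundness of the five axioms via the algebraic formula of Theorem~\ref{coco} (all of these checks are correct) and defers the uniqueness/symmetrization construction to \cite{KK09}, is in essence the same approach the paper takes. The only material you add beyond the paper is the explicit soundness verification and the shift-invariance normalization for uniqueness, which is sound and closely parallels the paper's own proof of Theorem~\ref{a6} for the M-PCE value.
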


Note that, following Kalai and Kalai \citeyear{KK09}, we have stated the
axioms for the coco value in terms of the underlying game $G$.  Since,
as we have argued, Kalai and Kalai are assuming there are side payments,
we might consider
stating the axioms in terms of $G^*$.  We could certainly replace all
occurrences of $f_i(G)$ by $f_i(G^*)$; nothing would change if we did
this, since, by Theorem~\ref{coco}, $\coco(G) = \coco(G^*)$.   But we
could go further, replacing $G$, $A$, and $u$ uniformly by $G^*$, $A^*$,
and $u^*$.  
For example, Axiom 1 would say $f_1(G^*) + f_2(G^*) = \MSW(G^*)$;
Axiom 2 would say that $f((G^*)^c) = (f_1(G^*) + c_1, f_2(G^*) + c_2)$.
It is not hard to check that the resulting axioms are
still sound.  Moreover, for all axioms but Axiom 4 (payoff
dominance), the resulting axiom is essentially equivalent to the
original axiom.  (In the case of shift invariance, this is because
$(G^*)^c = (G^c)^*$.)
However, the version of Axiom 4 for $G^*$ is vacuous.
No matter what the payoffs are in $G$, it cannot be the case that a player's
expected payoff is larger than his opponent's for all actions in $G^*$,
since players can always agree to a deal action that results in the
opponent getting a 
large transfer.  Thus, we must express payoff dominance in terms of $G$
in order to prove Theorem~\ref{cocochar}.

We now characterize the M-PCE value axiomatically.  The M-PCE value of
$G$ is not equal to that of $G^*$ in general.  Since we want to compare
the M-PCE value and coco value, it is most appropriate to consider games
with side payments.  Thus, in the axioms for M-PCE, we  write $f_i(G^*)$ rather
$f_i(G)$.  We start by considering the extent to which the M-PCE value
satisfies the axioms above for coco value, with $f_i(G)$ replaced by
$f_i(G^*)$.  As we noted, this change has no impact for coco value
except in the case of Axiom 4 (payoff dominance).  But
Example~\ref{xam1} shows that the M-PCE value does not satisfy payoff
dominance.  The following result shows that it satisfies all the
remaining axioms.

\begin{thm}\label{a1}
The function mapping 2-player games with side payments to their (unique)
M-PCE value 
satisfies maximum social welfare, shift invariance, monotonicity in
actions, and invariance in 
replicated strategies.
\end{thm}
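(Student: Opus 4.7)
The plan is to reduce each axiom to the closed-form characterization provided by Theorem~\ref{t3}, which gives
$$f(G^*) = \left(\frac{\MSW(G) + \mm_1(G) - \mm_2(G)}{2},\ \frac{\MSW(G) - \mm_1(G) + \mm_2(G)}{2}\right),$$
and then monitor how $\MSW(G)$, $\mm_1(G)$, and $\mm_2(G)$ transform under the operation specified by each axiom, using Lemma~\ref{MINMAX} to pass between $G$-quantities and $G^*$-quantities whenever the axiom is phrased in terms of $G^*$.

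For maximum social welfare, simply summing the two coordinates of $f(G^*)$ gives $\MSW(G)$, which equals $\MSW(G^*)$ by Lemma~\ref{MINMAX}. For shift invariance, I would first observe from the explicit clauses defining $u^*$ that shifting $u$ by $c$ shifts $u^*$ by $c$ at every profile, so $(G^c)^* = (G^*)^c$; since $\MSW(G^c) = \MSW(G) + c_1 + c_2$ and $\mm_i(G^c) = \mm_i(G) + c_i$, direct substitution into the formula yields $f_i((G^c)^*) = f_i(G^*) + c_i$.

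For monotonicity in actions, suppose $A_1' \subseteq A_1$ and let $G'$ be the restricted game. The three facts I need are $\MSW(G') \le \MSW(G)$ (smaller max over profiles), $\mm_1(G') \le \mm_1(G)$ (since $\mm_1 = \min_{s_2}\max_{s_1}U_1$ and shrinking $S_1$ shrinks the inner max), and $\mm_2(G') \ge \mm_2(G)$ (since $\mm_2 = \min_{s_1}\max_{s_2}U_2$ and shrinking $S_1$ shrinks the outer min). All three move the numerator of $f_1$ in the same direction, giving $f_1(G'^*) \le f_1(G^*)$; a symmetric argument handles deletions from $A_2$. For invariance to replicated strategies, I would note that adjoining a mixed strategy $t \in S_1$ as a new pure action in $A_1$ does not enlarge the set of expected-payoff vectors achievable in $G$: every mixed strategy over $A_1 \cup \{t\}$ collapses to a mixed strategy over $A_1$ by expanding $t$ into its underlying distribution. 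Consequently $\MSW$, $\mm_1$, and $\mm_2$ are all unchanged, and the formula gives $f(G'^*) = f(G^*)$.

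The main obstacle is really just bookkeeping around the side-payments construction: specifically, verifying $(G^c)^* = (G^*)^c$ from the case analysis defining $u^*$, and carefully tracking the sign of the $\mm_2$ effect under shrinkage of $A_1$ so that it combines with those of $\MSW$ and $\mm_1$ to push $f_1$ downward rather than upward. Once these sign-checks are in place, each axiom reduces to elementary arithmetic on the Theorem~\ref{t3} formula.
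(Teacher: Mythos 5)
Your proposal is correct and follows essentially the same route as the paper: each axiom is verified by tracking how $\MSW(G)$, $\mm_1(G)$, and $\mm_2(G)$ transform and then substituting into the characterization of Theorem~\ref{t3} (with $(G^c)^* = (G^*)^c$ handling shift invariance). Your sign-checks for monotonicity and the collapse argument for replicated strategies match the paper's reasoning, just spelled out in slightly more detail.
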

\begin{proof}
We consider each property in turn:
\begin{itemize}
\item The fact that the function satisfies maximum social welfare is
immediate from the characterization in Theorem~\ref{t3}. 
\item  It is easy to see that 
$\MSW(G^c) = \MSW(G) + c_1 + c_2$, $\mm_1(G^c) = \mm_1(G) + c_1$
and $\mm_2(G^c) = \mm_2(G) + c_2$.  It then follows from 
Theorem~\ref{t3} that the M-PCE value of $(G^c)^*$ is the result of
adding $c$ to the M-PCE value of $G^*$.
\item Let $G'$ be as in the description of Axiom 3.
It is almost immediate from the definitions 
that $\MSW(G') \le \MSW(G)$, $\mm_1(G') \le \mm_1(G)$, and $\mm_2(G')
\ge \mm_2(G)$.  The result now follows from Theorem~\ref{t3}.
\item Let $G'$ be the result of adding a replicated action to $S_1$, as
described in the statement of Axiom 5.
Clearly $\MSW(G') = \MSW(G)$, $\mm_1(G') = \mm_1(G)$, and $\mm_2(G') =
\mm_2(G)$.  
Again, the result now follows from Theorem~\ref{t3}.
\end{itemize}
\end{proof}

Our goal now is to axiomatize the M-PCE value in games with side
payments.  Since the M-PCE value and the coco value are different in
general, there must be a difference in their axiomatizations.
Interestingly, we can capture the difference by replacing payoff dominance
by another simple axiom:

\begin{enumerate}
\item[6.] \textbf{Minimax dominance.} If a player's minimax value is no
less than her opponent's minimax value, then her value is no less than
her opponent's. 
That is, if $\mm_i(G) \ge \mm_j(G)$, then $f_i(G^*) \ge f_j(G^*)$. 
\end{enumerate}

It is immediate from Theorem~\ref{t3} that the M-PCE value satisfies
minimax dominance; Example~\ref{xam1} shows that the coco value does not
satisfy it.
We now prove that the M-PCE value is characterized by Axioms 1, 2, and
6.  (Although Axioms 3 and 5 also hold for the M-PCE value, we do not
need them for the axiomatization.)
Interestingly, for all these axioms, we can replace all occurrences of
$G$, $A$, and $u$ by $G^*$, $A^*$, and $u^*$, respectively, to get an equivalent axiom; it really does not
matter if we state the axiom in terms of $G$ or $G^*$ (although the
argument to $f$ must be $G^*$).

\begin{thm}\label{a6}
Axioms 1, 2, and 6 characterize the M-PCE value in 2-player games with
side payments.
\end{thm}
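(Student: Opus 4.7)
The plan is to argue soundness and uniqueness separately. For soundness, the M-PCE value already satisfies Axiom 1 (maximum social welfare) and Axiom 2 (shift invariance) by Theorem~\ref{a1}, and it satisfies Axiom 6 (minimax dominance) by inspection of the closed form in Theorem~\ref{t3}: if $\mm_i(G) \ge \mm_j(G)$, then in the formula $\bigl(\tfrac{\MSW(G)+\mm_1(G)-\mm_2(G)}{2},\tfrac{\MSW(G)-\mm_1(G)+\mm_2(G)}{2}\bigr)$ the $i$-th coordinate is at least the $j$-th coordinate.

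For uniqueness, I would let $f$ be any function on 2-player games with side payments satisfying Axioms 1, 2, and 6, and show that $f(G^*)$ must equal the closed form from Theorem~\ref{t3}. The key trick is to use shift invariance to reduce to the symmetric case where the two minimax values coincide. Specifically, set
\[
c_1 = \frac{\mm_2(G)-\mm_1(G)}{2}, \qquad c_2 = \frac{\mm_1(G)-\mm_2(G)}{2},
\]
and consider $G^c$, the shift of $G$ by $c$. Since $\mm_i(G^c) = \mm_i(G) + c_i$, a direct calculation gives $\mm_1(G^c) = \mm_2(G^c) = \tfrac{\mm_1(G)+\mm_2(G)}{2}$.

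Now apply Axiom 6 in both directions to $(G^c)^*$: since $\mm_1(G^c) \ge \mm_2(G^c)$ we get $f_1((G^c)^*) \ge f_2((G^c)^*)$, and since $\mm_2(G^c) \ge \mm_1(G^c)$ we get the reverse inequality. Hence $f_1((G^c)^*) = f_2((G^c)^*)$. By Axiom 1, their sum is $\MSW(G^c) = \MSW(G) + c_1 + c_2 = \MSW(G)$, so each equals $\MSW(G)/2$. Finally, by Axiom 2 (using the observation recorded in the paper that $(G^c)^* = (G^*)^c$), we have $f((G^c)^*) = f(G^*) + (c_1,c_2)$, so
\[
f_1(G^*) = \tfrac{\MSW(G)}{2} - c_1 = \tfrac{\MSW(G)+\mm_1(G)-\mm_2(G)}{2},
\]
and analogously for $f_2(G^*)$. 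This matches the M-PCE value, completing the uniqueness argument.

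The main obstacle is simply bookkeeping: making sure the shift invariance axiom in the side-payments formulation really does give $f((G^c)^*) = f(G^*) + c$ (which uses the already-noted identity $(G^c)^* = (G^*)^c$), and checking that the choice of $c$ equalizes the minimax values while preserving the social-welfare offset correctly. There is no hard technical step beyond this; the surprise is that only three axioms suffice, and the proof makes clear why: minimax dominance plus symmetry (via a shift) essentially pins down the split of social welfare, and shift invariance transfers the answer back to the original game.
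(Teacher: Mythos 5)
Your proposal is correct and takes essentially the same approach as the paper: soundness via Theorem~\ref{a1} and the closed form in Theorem~\ref{t3}, and uniqueness by using shift invariance to pass to a game whose two minimax values are equal, applying Axiom 6 in both directions together with Axiom 1 to pin the value down, and then shifting back. The only cosmetic difference is the choice of shift vector: the paper shifts by the negative of the M-PCE value (so the shifted game has maximum social welfare $0$ and, via Theorem~\ref{t3}, equal minimax values), whereas you shift by a zero-sum vector chosen directly to equalize the minimax values; the two computations are interchangeable.
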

\begin{proof}
Theorem \ref{a1} shows that the M-PCE value satisfies Axioms 1 and 2.
As we observed, the fact that the M-PCE value satisfies Axiom 6 is
immediate from Theorem~\ref{t3}.

To see that the M-PCE value is the unique mapping that satisfies Axioms
1, 2, and 6, suppose that $f$ is a mapping that satisfies these axioms.
We want to show that $f(G^*)$ is the M-PCE value for all games $G$.  
So consider an arbitrary game $G$ such that the M-PCE value of $G^*$ is
$v = (v_1,v_2)$.  By shift invariance, the M-PCE value of $(G^{-v})^*$
is $(0,0)$.  By Axiom 1, $\MSW(G) = v_1 + v_2$, so $\MSW(G^{-v}) = 0$.  
Note that it follows from Theorem~\ref{t3} that $0 = \MSW(G^{-v}) +
\mm_1(G^{-v}) - \mm_2(G^{-v})$.  Since $\MSW(G^{-v}) = 0$, it follows
that $\mm_1(G^{-v}) = \mm_2(G^{-v})$.  Suppose that $f((G^{-v})^*) =
(v_1',v_2')$.  By Axiom 1, we must have $v_1' + v_2' = 0$.  By Axiom 6,
since $\mm_1(G^{-v}) = \mm_2(G^{-v})$, we must have $v_1' = v_2'$.
Thus, $f((G^{-v})^*) = (0,0)$.  By shift invariance, 
$f(G^*) = f((G^{-v})^*)+v=(v_1,v_2)$, as desired. 
\end{proof}

Again, we conclude this subsection by considering what happens if a default
payoff is used instead of backup actions when defining
games with side payments.  
It is still the case that the M-PCE value satisfies Axioms 1, 2, 3, and
5, and does not satisfy Axiom 4.
To get an axiomatization of the M-PCE value in such games with side
payments, we simply need to change Axiom 6 (Minimax Dominance) so that
it uses the default value rather than the minimax value: if the
default value of a player is no 
less than the default value of the opponent, then the player's value is
no less than the opponent's value. Thus, variations in the notion of
games with side payments lead to straightforward variations in the
characterization of the M-PCE value.

\subsection{Complexity comparison}\label{sec:complexityComparison} 
In this section, we consider the complexity of computing the M-PCE value
and the coco value, and the corresponding strategy profiles.

It follows easily from the characterization in Theorem \ref{coco}
that in a 2-player game $G$ with (or without) side payments, 
the coco value 
is determined by $\MSW(G)$, $\mm_1(G_z)$, and $\mm_2(G_z)$.
$G_z$ can clearly be determined from $G$ in polynomial time (polynomial
in the number of strategies), and $\MSW(G)$ can be determined in
polynomial time (simply by inspecting the payoff matrix for $G$).  
The minimax value of a 2-player game can be
computed in polynomial time
(see Appendix~\ref{e}).
Moreover, if $(c_1,c_2)$ is the coco value of $G$,
and $s^*$ is a pure strategy profile that obtains $\MSW(G)$,
the strategy profile that gives players the coco value is
$((s^*,U_1(s^*)-c_1),(s^*,U_1(s^*)-c_1))$, which is simply the deal
strategy profile 
in which both players agree to play $s^*$, and agree that player 1
pays player 2  $(U_1(s^*)-c_1)$.

Similarly, we can compute a M-PCE in a 2-player game with side payments
in polynomial time.
\begin{thm}\label{PCE2}
In a 2-player game $G^*$ with side payments, we can compute its M-PCE value
and a strategy profile that obtains it
in polynomial time.
 \end{thm}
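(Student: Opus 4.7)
The plan is to invoke the algebraic characterization from Theorem~\ref{t3}, which reduces the computation of the M-PCE value to computing three scalar quantities associated with the underlying game $G$: $\MSW(G)$, $\mm_1(G)$, and $\mm_2(G)$. So my proof will proceed in three parts: compute these quantities, plug them into the formula from Theorem~\ref{t3} to obtain the M-PCE value $v = (v_1, v_2)$, and then write down an explicit deal strategy profile in $G^*$ that realizes $v$.

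For the first part, $\MSW(G) = \max_{a \in A}(u_1(a) + u_2(a))$ can be computed in time linear in $|A_1| \cdot |A_2|$ by scanning the payoff matrix. The minimax values $\mm_1(G)$ and $\mm_2(G)$ are computed by a standard linear program (the one used to compute minimax strategies in a 2-player zero-sum game), which can be solved in polynomial time; I would just cite this standard result (as the paper already does, referring the reader to Appendix~\ref{e}). Substituting into the formula from Theorem~\ref{t3} then gives $v_1 = \frac{\MSW(G) + \mm_1(G) - \mm_2(G)}{2}$ and $v_2 = \frac{\MSW(G) - \mm_1(G) + \mm_2(G)}{2}$ in constant additional time.

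For the second part, I follow the constructive argument embedded in the proof of Theorem~\ref{t3}. Let $a^* \in A$ be any (pure) action profile achieving $\MSW(G)$ — found in the same pass that computes $\MSW(G)$ — and let $r = u_1(a^*) - v_1$. Pick arbitrary fallback actions $a_1' \in A_1$ and $a_2' \in A_2$. Define the pure deal strategy profile $s^+ = (s_1^+, s_2^+)$ in $G^*$ by $s_1^+ = (a^*, r, a_1')$ and $s_2^+ = (a^*, r, a_2')$. By the definition of $u^*$, the deals match, so $U_1^*(s^+) = u_1(a^*) - r = v_1$ and $U_2^*(s^+) = u_2(a^*) + r = \MSW(G) - v_1 = v_2$. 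The proof of Theorem~\ref{t3} already establishes that this is a M-PCE in $G^*$; here I just need to note that writing it down requires only storing the indices of $a^*$, the scalar $r$, and the two fallback actions, which takes polynomial space and can be produced in polynomial time.

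There is essentially no obstacle beyond the well-known polynomial-time solvability of the minimax LP; everything else is a constant-time arithmetic manipulation followed by emitting a short description of a pure deal strategy profile. The only mild subtlety is to emphasize that we output the strategy profile in $G^*$ (a pair of deal actions), not a transcript of play in the underlying game $G$, so the polynomial-time bound is measured relative to the size of $G$ even though $A^*$ is in principle much larger.
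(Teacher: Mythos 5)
Your proposal is correct and follows essentially the same route as the paper: compute $\MSW(G)$ by scanning the payoff matrix, compute the minimax values by the standard linear program, obtain the value via the formula of Theorem~\ref{t3}, and output the matching deal strategy profile in which both players agree on $a^*$ with transfer $u_1(a^*)-v_1$ (the paper writes this profile with the backup action suppressed, but it is the same construction as in the proof of Theorem~\ref{t3}). No gaps to report.
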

\begin{proof}
Let $G$ be the game underlying $G^*$.  By Theorem \ref{t3}, the M-PCE
value of $G$ 
is entirely determined by its MSW and its minimax value. We show in
Appendix \ref{e} that 
is determined by $\MSW(G)$, $\mm_1(G)$, and $\mm_2(G)$. 
Since the minimax value of a 2-player game can be computed in polynomial
time, and $\MSW(G)$ can be computed by simply finding the entry in the
matrix with the highest total utility,
the M-PCE value can be computed in polynomial time.

Let the M-PCE value be $(m_1,m_2)$, and let $s^*$ be a pure strategy profile
that obtains $\MSW(G)$. Then 
$((s^*,U_1(s^*)-m_1),(s^*,U_1(s^*)-m_1))$, which is simply the deal strategy profile 
in which both players agree to play $s^*$, and agree that player 1 pays player 2
$U_1(s^*)-m_1$, 
is a M-PCE. 
\end{proof}

For 2-player games (without side payments), a PCE can be found in 
polynomial time if one exists; moreover, determining whether one exists
can also be done in polynomial time (see Theorem \ref{PCE}). Similarly, in 
2-player games, a M-PCE can always be found in polynomial time
(see Theorem \ref{MPCE}).

%joe8: moved here
\section{Related Work}\label{sec:related}
%joe8*: following new; we need a story here, not just a random
%collection of comparisons to other solution concepts.  Next 5 lines
%new.  
%joe19
%In this section, we compare PCE to other solution concepts in the
%literature.  
There are many  solution concepts in the literature that attempt to
model cooperative play. We compared PCE to the coco value in some detail in
Section~\ref{sec:coco}.  In this section, we compare PCE to a number of
others.

%joe19*: cut this paragraph.  No one ever suggested that
%rationalizability was getting at cooperation.  We have plenty of
%other more relevant references
\commentout{
We have already seen the PCE is incomparable to NE.  There are games
(like Traveler's Dilemma) where the NE is not a PCE, and no PCE is a
NE.  Of course, the same will be true for refinements of NE.  
%joe8*: moved rationalizability here, to try to fit it into the story
%rationalizability
%Rationalizability \cite{OR94} is a generalization of NE. Although
\emph{Rationalizability} \cite{OR94} is a solution concept that
generalizes NE; 
%joe8
every NE is rationalizable, but the converse is not necessarily true.
%joe8: added intuition
Intuitively, a strategy for player $i$ is   
rationalizable if it is a best response to some beliefs that player $i$    
may have about the strategies that other players are following,   
assuming that these strategies are themselves best responses to beliefs   
that the other players have about strategies that other players are   
following, and so on.   
%Although  players make  
%best responses to their beliefs, unlike NE, the players' beliefs 
%about each other's actions are not assumed to be correct, but are
%constrained by considerations of rationality. This solution concept is
%again quite different from PCE. 
%joe17
%Again, the Traveler's Dilemma shows that the notion rationalizability is
Again, the Traveler's Dilemma shows that rationalizability is
incomparable to PCE---the only rationalizable strategy profile in
Traveler's Dilemma is $(2,2)$. 
%joe17
%Similarly, in Prisoner's Dilemma
Similarly, in Prisoner's Dilemma,
%Since it is a generalization of NE, there are rationalizable
%NE which are not PCE. For example, in Traveler's Dilemma, (2, 2) is 
%rationalizable but not a PCE. There are also PCE which are not
%rationalizable. For example, in Prisoner's Dilemma, 
(Cooperate, Cooperate) is 
not rationalizable since Cooperate is not a best response to any action.
%joe8: added
Thus, rationalizability is not getting at the notion of cooperation in
the way the PCE is.
}

Although PCE is meant to apply to one-shot games,
our motivation for it involved repeated games.  It is thus interesting
to compare Cooperative Equilibrium to solutions of repeated games.
The well-known \emph{Folk Theorem} \cite{OR94} 
%joe8*: this is not quite what it says
%says that, in a repeated game, any strategy profile that gives 
%each player a payoff higher than his minimax payoff is an NE solution.
%joe17
says
that any payoff profile that gives each player at least his minimax
utility is the payoff profile of some NE in the repeated game.
Moreover, the proof of the Folk Theorem shows that if $s$ is a strategy
in the underlying normal-form game where each player's utility is higher
than the minimax utility in the repeated game, then there is a NE in the
repeated game where $s$ is played at each round.
%While it explains cooperative behaviors, the problem with Folk theorem
%lies in its solution spectrum which is too broad to produce any useful
%predictions. %As M.~J.~Osborne stated, 
%``the model itself appears unpromising as an explanatory tool''
%\cite{OR94}. 
Thus, playing cooperatively repeatedly in the repeated game will
typically be an outcome of a NE.
However, so will many other behaviors.  Because so many behaviors are
consistent with the Folk Theorem, it has very little predictive power. 
For example, 
%joe8: expanded
in repeated Traveler's Dilemma, a player can ensure a payoff of at least
2 per iteration simply by always playing 2.  
It follows from the Folk Theorem that
%in the Repeated Traveler's Dilemma, 
%payoffs higher than 2 for both players in each iteration is a 
for any strategy profile $s$ in the one-shot game where each player gets
at least 2, there is a NE in the repeated game where each player $i$
plays $s_i$ in each round.
%joe8
%valid solution.  
%joe8
%This includes almost every single strategy profile. 
%PCE, on the other hand, is much more  focused and gives useful
%predictions. 
By way of contrast, as we have seen, in a PCE of the single-shot game,
each player gets more than 98.  
%joe8*: this is not true.  Indeed, it doesn't even make sense, since an
%action in the single-shot game is not a strategy in the repeated game.
%In fact, we can show that PCE is a subset of Folk theorem
%solutions.  
More generally, we can show that, for each PCE $s$ in a normal-form game,
there is a NE of the repeated game where $s$ is played repeatedly.

Halpern and Pass \citeyear{HaPa13} and Capraro and Halpern
\citeyear{CH2014} consider what they call \emph{translucent players}, who
believe that how other players respond may depend in part on what they
do.  This is implicitly the case in PCE as well. 
The notion of translucency assumes that 
each player $i$ has beliefs regarding
how other players would respond if $i$ deviates from his intended 
strategy to another strategy. 
That is, for each pair of strategies $(s_i, s_i')$ for player $i$, 
$i$ assigns a probability $\mu_i^{s_i,s_i'}(s_{-i})$ 
to each (joint) strategy profile $s_{-i}$ for players other than $i$. 
Intuitively, $\mu_i^{s_i,s_i'}(s_{-i})$ is the probability at which
player $i$ believes 
the others would jointly play $s_{-i}$, if $i$ deviated from $s_i$ to $s_i'$.
A strategy profile is a \emph{translucent equilibrium (TE)} if there does not exist a player $i$
such that $i$ can strictly improve her payoff if $i$ deviates and other players 
respond to the deviation according to $i$'s belief (of how they would 
respond to the deviation).
In 2-player games,
every PCE is a TE, one in which each player believes that 
the other player would best respond to a deviation;
similarly, every CE  is a TE, one in which each player believes that
the other player  
best responds to a deviation if that makes the other player no worse off 
compared to when no one deviates, and otherwise punishes
the deviation by playing a strategy that makes the one who deviates
strictly worse off than in the case where no one deviates whenever possible.
In $n$-player games for $n > 2$,
every PCE is a TE in which each player believes that 
if she deviates,
the other players would play a NE among themselves given the deviation.
%joe19: added
(Recall that CE is defined only for 2-player games.)
However, it is not the case that every TE is a PCE.

\emph{Farsighted pre-equilibrium (FPE)} \cite{JM2011}, like PCE, allows
players to react to what other players are doing.  
Very roughly speaking, while PCE assumes that if a player deviates,
the other players get to best respond, in FPE, the player who deviates
gets to make the final response.
For example, suppose that Alice deviates from $s$ to $s'$. PCE
considers how Bob would react to the deviation, and whether Alice is better
or worse off given Bob's response. 
FPE also considers how Bob would react, 
but allows Alice to take the last step, and then compares Alice's payoff in 
$s$ to her payoff at the end of this process. 
PCE also allows a player $i$ to deviate to a strategy
that may (temporarily) decrease $i$'s payoff
(this could be useful because the response to the deviation may make $i$ 
better off);  FPE does not consider such deviations.
Every NE is an FPE; as we have seen, not every NE is a PCE.
As a consequence, in games like the centipede game, PCE and M-PCE do a
better job of predicting cooperative behavior than FPE.
The concept of farsightedness in FPE, which allows players to consider
other players' responses 
and responses to other players' responses, and so on,  
dates back to von Neumann and Morgenstern's \emph{stable set} in
coalitional games \citeyear{NM44}. 
The idea was then developed by Harsanyi who proposed indirect
dominance of coalition structures \citeyear{Harsanyi74}, 
and then followed by a number of works
\cite{chwe94,DX2003,Greenberg90,Nakanishi2007,SM2005}.  
However, all these works except FPE consider cooperative games instead
of non-cooperative games -- which are the main topic of these paper.

There have also been attempts to explain cooperative behavior by
saying that the utility function that players use is different from
the utility function that is presented in the game, and takes into
account fairness and/or social welfare.  The two best-known examples
of this approach are due to Charness and Rabin \citeyear{CR2002} and
Fehr and Schmidt \citeyear{Fe-Sc}.
Given utility functions $u_i$ for players $i = 1,\ldots,n$, Charness
and Rabin \citeyear{CR2002} consider the modified 
utility functions
\newcommand{\alphaCR}{a^{CR}}
\newcommand{\deltaCR}{b^{CR}}
$$u^{CR}_i =
(1-\alphaCR_i)u_i(s)+\alphaCR_i(\deltaCR_i\min_{j=1,\ldots,N}u_j(s)+(1-\deltaCR_i)\sum_{j=1}^Nu_j(s)),$$
where $\alphaCR_i$ is the degree of importance that agent $i$ gives
to social welfare and the plight of the worst-off individual (so that
$(1-\alpha)$ is the degree of importance of his base utility to player
$i$), while $\deltaCR_i$ measures the relative degree of importance of the 
worst-off individual and $(1-\deltaCR_i)$ measures the relative degree
of importance of total social welfare.
\newcommand{\alphaEF}{a^{FS}}
\newcommand{\betaEF}{b^{FS}}
Similarly, Fehr and Schmidt \citeyear{Fe-Sc} modify the utility to
$$u_i^{FS}(s)=u_i(s)-\frac{\alphaEF_i}{n-1}\sum_{j\neq
  i}\max(u_j(s)-u_i(s),0)-\frac{\betaEF_i}{n-1}\sum_{j\neq
  i}\max(u_i(s)-u_j(s),0),$$ 
where $\betaEF_i\leq\alphaEF_i$, $\alphaEF_i$ can be viewed as
measuring the importance of the inequity caused by $i$ having a lower
payoff than others, and $\betaEF_i$ can be viewed as measuring the
importance of the inequity caused by $i$ having a higher payoff than others.
As shown in Section \ref{sec:coco}, 
M-PCE is closely related to maximal social welfare, and also embodies a
certain sense of fairness, so to some extent it captures some of the
features that the modified utility functions of Charness and Rabin
\citeyear{CR2002} and Fehr and Schmidt \citeyear{Fe-Sc} are trying to capture.

While not intended to model cooperation, the 
recently-introduced notion of \emph{iterated regret minimization} (IRM)
\cite{HP11} often produce results similar to PCE.
As its name suggests, IRM iteratively deletes strategies 
that do not minimize regret. 
Although it based on a quite different philosophy than PCE or its variants,
IRM leads to quite similar predictions as PCE in a surprising number of games.
For example, in Traveler's Dilemma, 
(97, 97) is the unique profile that survives IRM.
In the Nash bargaining game, 
(50, 50) is the unique profile that survives IRM
and is also the unique M-PCE of the game. 
There are a number of other games of interest where PCE and IRM either
coincide or are close.

There are also games in which they behave differently. For example,
consider a  
variant of Prisoner's Dilemma with the following payoff matrix:
\begin{table}[htb]
\begin{center}
\begin{tabular}{c |  c c}
& Cooperate & Defect\\
\hline
Cooperate &(10000,10000) &(0,10001) \\
Defect  &(10001,0)  &(1,1)  
\end{tabular}
\end{center}
\end{table}
\noindent It can be shown that, if there are dominant actions in a game, then
these are the only actions that survive IRM.  Since defecting is the
only dominant action in this game, it follows that (Defect, Defect) is
the only strategy profile that survives IRM, giving a payoff 
(1, 1). On the other hand, 
the unique M-PCE is (Cooperate, Cooperate) with payoffs (10000, 10000)
(although (Defect, Defect) is also a PCE).
In this game, M-PCE seems to do a better job of explaining behavior than
PCE. 
Nevertheless, the fact that PCE and IRM lead to similar answers in so
many games of interest suggests that there may be some deep connection
between them.  We leave the problem of explaining this connection to
future work.

\appendix
\section{Computing the PCE in the centipede game}\label{a}
To compute the PCE in the centipede game, we need to first compute 
$\BU_1$ and $\BU_2$.
If player 1 continues to the end of the
game, then player 2's best response is to also continue to the end of
the game, giving player 1 a payoff of $2^{19}$ (and player 2 a payoff of
%joe7
%$2^{20}+1$).  However, if we take $q_{i,j}$ to be the strategy where
$2^{20}+1$).  If we take $q_{i,j}$ to be the strategy where
player $i$ quits at turn $j$ and $q_{i,C}$ to be the strategy where player $i$
continues to the end of the game, then a straightforward computation
shows that $q_{2,C}$ continues to be a best response to $\alpha q_{1,19}
+ (1-\alpha)q_{1,C}$ as long as $\alpha \ge \frac{3\times
2^{18}}{3\times 2^{18}+1}$.  If we take $\alpha =  \frac{3\times
2^{18}}{3\times 2^{18}+1}$ and player 2 best responds by playing
$q_{2,C}$, then player 1's utility is $2^{19}+\frac{3\times
2^{18}}{3\times 2^{18}+1}$.  It is then straightforward to show that
this is in fact $\BU_1$.  A similar argument shows 
%joe5: cut for now; perhaps put it in the full paper
%nan8: fullv restored
\fullv{that, if player 1 is
best responding, then the best player 2 can do is to play $\beta q_{2,18}
+ (1-\beta)q_{2,C}$, where $\beta = \frac{3\times 2^{17}}{3\times 2^{17}+1}$.
%joe17
%With this choice, player 1's best response is $q_{1,19}$.
With this choice, player 1's best response is $q_{1,19}$;
using this strategy for player 2, we get that}
\shortv{that}
$\BU_2 =
2^{18}+\frac{3\times 2^{17}}{3\times 2^{17}+1}$.

It is easy to see that there is no pure strategy profile $s$ such that
$U_1(s) \ge \BU_1$ and $U_2(s) \ge \BU_2$.  However, there are many
mixed PCE.  
For example,
every strategy profile $(q_{1,C}, s_2)$ where $s_2=\beta
q_{2,18}+(1-\beta)q_{2,C}$ and $\beta \in 
[1-\frac{3\times 2^{17}}{(3\times 2^{17}+1)(3\times 2^{18}+1)}, 
\frac{3\times 2^{18}}{3\times 2^{18}+1}]$ is a PCE.

\section{Proof of Lemma \ref{LEM3}}
\relem{LEM3}
In a 2-player game, for all strategy profiles $s^*$, there exists a
strategy profile $s'=(s_1',s_2')$ that Pareto dominates $s^*$ such that
both $s_1'$ and $s_2'$ have support of size at most two.
\erelem
\begin{proof}
Let $\mathbf{A}$ and $\mathbf{B}$ be the payoff matrices (of size
$n\times m$) for player 1 and player 2 respectively. 
Given a strategy profile $s^* = (s_1^*,s_2^*)$, let $U_1(s^*) =r_1^*$ and
$U_2(s^*)=r_2^*$. 
We first show that there exists a strategy $s_2'$ for player 2 with 
support of size at most two such that 
$(s_1^*,s_2')$ Pareto dominates $s^*$. We then show that
there exists a strategy $s_1'$ for player 1 with support of size at most
two such that $(s_1',s_2')$ Pareto dominates $(s_1^*,s_2')$, and hence
$s^*$.

Consider the following linear program $P_1$, 
where $y$ is a column vector in $\IR^m$:
$$\begin{array}{ll}
\text{maximize} &(s_1^*)\mathbf{A} y \label{e36}\\
\text{subject to}&	(s_1^*)^T\mathbf{B} y = r_2^* \label{e34} \\
& \sum_{i=1}^m y[i]=1 		 \label{e35} \\
&	y \ge 0. \label{e37} 
\end{array}$$
As usual, an \emph{optimal solution} of $P_1$ is a vector $y$ that
maximizes the objective function ($(s_1^*)\mathbf{A} y$)
and satisfies the three constraints;
a \emph{feasible solution} of $P_1$
is one that satisfies the constraints; finally, an
\emph{optimal value} of $P_1$ is the value of the objective function for
the optimal solution $y$ (if it exists).  
We show that $P_1$ has an optimal solution $y^*$ with at most
two nonzero entries.  

Since all constraints in $P_1$ are equality
constraints except for the non-negativity constraint, 
$P_1$ is a \emph{standard-form linear program} \cite{Murty1983}.  
We can rewrite the equality constraints in $P_1$ as
$$
		\mathbf{D}  y = \left[
\begin{array}{l}
r_2^*\\
1
\end{array}\right],
$$
where $\mathbf{D}$ is an $(m\times 2)$ matrix whose first row is 
$(s_1^*)^T \mathbf{B}$ and whose second row has all entries equal to 1. 
In geometric terms, the region represented by 
the constraints in $P_1$ is a convex polytope. Since $P_1$ is a
standard-form linear 
program, it is well-known that $y$ is a vertex of the polytope (i.e., an
extreme point of the polytope) iff all columns $i$ in
$\mathbf{D}$ where $y[i]\neq 0$ are linearly independent \cite{Murty1983}. 
Since the columns of {\bf D} are vectors in $\IR^2$, at most two of them
can be linearly independent.  Thus, a vertex $y$ of the polytope can have
at most two nonzero entries.

Clearly $s_2^*$ is a feasible solution of $P_1$. Since
$(s_1^*)\mathbf{A} s_2^*=r_1^*$, by assumption, the optimal value of $P_1$
is at least $r_1^*$.  Moreover,
since the objective function of $P_1$ is linear, $y\ge 0$, and $\sum_{i=1}^m 
y[i]=1$, the optimal value is bounded. Therefore, the
linear program has an optimal solution.  
By \textit{the fundamental theorem of linear programming}, if a linear
program has an optimal solution,
then it has an optimal solution at a vertex of the polytope defined by its
constraints \cite{Murty1983}.  
%joe11
Let $s_2'$ be the strategy defined by an optimal solution at the vertex
of the polytope.
As we observed above, $s_2'$ has at most two nonzero entries.
It is immediate that 
$U_1((s_1^*,s_2'))\ge r_1^*$ and
$U_2((s_1^*,s_2'))\ge r_2^*$. 

This completes the first step of the proof. 

The second step of the proof essentially repeats the first step. 
Suppose that $U_1((s_1^*,s_2'))= r_1$ and $U_2((s_1^*,s_2'))=r_2$.
Consider the following linear program $P_2$, where $x$ is  column vector
in $\IR^n$:
$$\begin{array}{ll}
\text{maximize} &x^T\mathbf{B}s_2'\\
\text{subject to } &x^T\mathbf{A}s_2'= r_1\\
&\sum_{i=1}^n x[i]=1\\
&	x \ge 0. 
\end{array}$$
Since $s_1^*$ is a feasible solution of $P_2$ and $(s_1^*)^T\mathbf{B}
s_2'\ge r_2^*$, the optimal value of $P_2$ is at least $r_2^*$.
As above, if we take $s_2'$ to be an optimal solution of $P_2$ that is a
vertex of the polytope
defined by the constraints, then $s_2'$ has support of size at
most two, and $(s_1',s_2')$ Pareto dominates $s^*$.
\end{proof}

\section{Proof of Lemma \ref{BP}}
\relem{BP}
A simple bilinear program of size $2 \times 2$ can be solved in constant
time.  
\erelem
\begin{proof}
Let $P$ be the following simple bilinear program, where $x = [x_1 \ 
x_2]^T$, $y = [y_1 \ y_2]^T$: 
$$\begin{array}{ll}
\text{maximize} &x^T\mathbf{A}y+ x^T  c + y^T c'\\
\text{subject to} &x^T \mathbf{B} y \ge d_1\\ 

                  & x_1 + x_2 = d_2\\ 
                  &y_1 + y_2  = d_3\\
                  & x \ge 0\\
                  &y \ge 0,
\end{array}$$
where $\mathbf{A}$ and $\mathbf{B}$ are $2 \times 2$ matrices.  

We show that $P$ can be solved in constant time. That is, we either find
an optimal solution of $P$, or find that $P$ has no optimal solution in
constant time. The idea is to show that $P$ can be reduced into eight
simpler problems, each of which can more obviously be solved in constant time.

Suppose that
$\mathbf{A}= \left[ \begin{array}{cc}
a_{11} & a_{12}\\
a_{21} & a_{22}\end{array}\right]$ and
$\mathbf{B}= \left[ \begin{array}{cc}
b_{11} & b_{12}\\
b_{21} & b_{22}\end{array}\right].$
Then we can write $P$ as the following quadratic program $Q$: 
\begin{equation}
\begin{aligned}
\text{maximize}\ &a_{11}x_1y_1 + a_{12}x_1y_2 + a_{21}x_2y_1 +
%joe17
%a_{22}x_2y_2+ c[1]x_1+\\ 
%  &c[2]x_2+c'[1]y_1 + c'[2]y_2\\
a_{22}x_2y_2+ c[1]x_1+ c[2]x_2+c'[1]y_1 + c'[2]y_2\\
\text{subject to }
& b_{11}x_1y_1 + b_{12}x_1y_2 + b_{21}x_2y_1 + b_{22}x_2y_2 -d_1\ge 0\\
& x_1+x_2 = d_2\\
& y_1+y_2= d_3\\
&x_1,x_2,y_1,y_2 \ge 0.
\end{aligned}\nonumber
\end{equation}
After replacing $x_2$ with $(d_2-x_1)$ and $y_2$ with $(d_3-y_1)$, then
rearranging terms, the objective function of $Q$ becomes
$$\begin{aligned}
&(a_{11}-a_{12}-a_{21}+a_{22})x_1y_1
+(a_{12}d_3-a_{22}d_3+c[1]-c[2])x_1+\\
%&(a_{11}-a_{12}-a_{21}+a_{22})x_1y_1 +(a_{12}d_3-a_{22}d_3+c[1]-c[2])x_1+
&(a_{21}d_2-a_{22}d_2+c'[1]-c'[2])y_1 + (a_{22}d_2d_3+c[2]d_2+c'[2]d_3),
\end{aligned}$$
and the first constraint becomes
$$\begin{aligned}
(b_{11}-b_{12}-b_{21}+b_{22})x_1y_1+ (b_{12}d_3-b_{22}d_3)x_1+
(b_{21}d_2-b_{22}d_2)y_1+(b_{22}d_2d_3-d_1).
\end{aligned}$$

We can get an equivalent problem by removing 
the constant terms $a_{22}d_2d_3+c[2]d_2+c'[2]d_3$ from the objective
function, since 
adding or removing   
additive constants from a function that we want to maximize does not
affect its optimal solutions (e.g., ``$\text{maximize}\  x$'' has the same optimal solutions as ``$\text{maximize}\  (x+1)$'').

Thus, $Q$ is equivalent to the following quadratic program $Q'$:
$$
\begin{aligned}
\text{maximize}\ \ &\gamma_1x_1y_1+\gamma_2x_1+\gamma_3y_1\\
\text{subject to }
& \gamma_4 x_1y_1 + \gamma_5 x_1 + \gamma_6 y_1 +\gamma_7\ge 0\\
&x_1\in [0,d_2]\\
&y_1\in [0,d_3],\\
\end{aligned}
%\end{equation}
$$
where 
\begin{equation}
\begin{array}{l}
\gamma_1= a_{11}-a_{12}-a_{21}+a_{22}\\
\gamma_2= a_{12}d_3-a_{22}d_3+c[1]-c[2]\\
\gamma_3= a_{21}d_2-a_{22}d_2+c'[1]-c'[2]\\
\gamma_4= b_{11}-b_{12}-b_{21}+b_{22}\\
\gamma_5= b_{12}d_3-b_{22}d_3\\
\gamma_6= b_{21}d_2-b_{22}d_2\\
\gamma_7= b_{22}d_2d_3-d_1.
\end{array}\nonumber
\end{equation}
(Note that $\gamma_i$ is a constant, for $i = 1,\ldots,7$.)

The first step in solving $Q'$ involves expressing the values of $y_1$
that make $(x_1,y_1)$ a feasible solution, that is, one that satisfies
the constraint 
$$\gamma_4 x_1y_1 + \gamma_5 x_1 + \gamma_6 y_1 +\gamma_7 = (\gamma_4 y_1
+\gamma_5)x_1 + \gamma_5 x_1 + \gamma_6 y_1 +\gamma  \ge 0.$$
For each $y_1\in [0,d_3]$, let $\Psi_1(y_1)$ be the set of $x_1$
such that $(x_1,y_1)$ is a feasible solution of $Q'$.  The characterization
of $\Psi_1(y_1)$ depends on the sign of $\gamma_4 y_1 +\gamma_5$.
Specifically:    
\begin{equation}\label{cases}
	\left\{
	\begin{array} {ll}
			\Psi_1(y_1)=
              \left[\frac{-\gamma_6y_1-\gamma_7}{\gamma_4y_1+\gamma_5},d_2\right]
\cap[0,d_2]
&\text{ if } \gamma_4y_1+\gamma_5>0,
		\frac{-\gamma_6y_1-\gamma_7}{\gamma_4y_1+\gamma_5}\le d_2, \\
		 	\Psi_1(y_1)=
                        \left[0,\frac{-\gamma_6y_1-\gamma_7}{\gamma_4y_1+\gamma_5
}\right]\cap[0,d_2]
&\text{ if } \gamma_4y_1+\gamma_5<0, 
		\frac{-\gamma_6y_1-\gamma_7}{\gamma_4y_1+\gamma_5}\ge 0, \\
			\Psi_1(y_1)= \left[0,d_2\right] &\text{ if }
                        \gamma_4y_1+\gamma_5=0,
Q%                        \gamma_6y_1+\gamma_7\ge 0\\ 
                        \gamma_6y_1+\gamma_7\ge 0,\\ 
			\Psi_1(y_1)= \emptyset, &\text{ if }
	\gamma_4y_1+\gamma_5=0, \gamma_6y_1+\gamma_7< 0.  
	\end{array}\right. 
\end{equation}
Note that the first three regions are single intervals.

Let $f(x_1,y_1)=\gamma_1x_1y_1+\gamma_2x_1+\gamma_3y_1$, so that $f(x_1,y_1)$ is the objective
function of $Q'$.  
We want to maximize $f$ over all feasible pairs $(x_1,y_1)$.  
Taking the derivative of $f$ with respect to $x_1$, we get
\begin{equation}
\frac{\partial f(x_1,y_1)}{\partial x_1} = \gamma_1y_1+\gamma_2,\nonumber
\end{equation}
which is a linear function of $y_1$.  
Because the derivative is linear, for each fixed value of $y_1$, 
the value that maximizes $f(x_1,y_1)$ must lie at an endpoint of the
interval appropriate for that value of $y_1$.  Whether it is the left
endpoint or the right endpoint depends on whether the derivative is
negative or positive.
For example, if $y_1$ satisfies the constraints corresponding to the
first interval in (\ref{cases}) (i.e., if 
$\gamma_4y_1+\gamma_5>0$ and
$\frac{-\gamma_6y_1-\gamma_7}{\gamma_4y_1+\gamma_5}\le d_2$)
and $\gamma_1 y_1 + \gamma_2 > 0$, 
then   
$x_1=d_2$ (i.e., the right endpoint of the interval of $\Psi_1(y_1)$)
maximizes $f(x_1,y_1)$;
and the problem of maximizing $f(x_1,y_1)$ reduces to that of maximizing
$f(d_2,y_1)$ (see $Q_1$ below). 
On the other hand, if $\gamma_1 y_1 + \gamma_2 > 0$, then 
maximizing $f(x_1,y_1)$ reduces to maximizing $f(0,y_1)$ or 
$f(\frac{-\gamma_6-\gamma_7}{\gamma_4y_1 + \gamma_5},y_1)$, depending on
whether $\frac{-\gamma_6 y_1 - \gamma_7}{\gamma_4y_1 + \gamma_5}$ is
negative (see $Q_5$ and $Q_6$ below).

These considerations show that to find the value $(x_1,y_1)$ that
maximizes $f(x_1,y_1)$, it suffices to find the value of $y_1$ that
maximizes each of the expressions below, and take the one that is best
among them:
$$
  \begin{array} {ll}
Q_1:   \text{maximize}\  f(d_2,y_1), \text{ subject to }\\\ \ \ \ 
    \gamma_4y_1+\gamma_5>0, \frac{-\gamma_6y_1-\gamma_7}{\gamma_4y_1+\gamma_5}\le d_2, 
\gamma_1 y_1 + \gamma_2  \ge 0, y_1\in [0,d_3]\\
Q_2:   \text{maximize}\  f(0,y_1), \text{ subject to }\\\ \ \ \ 
    \gamma_4y_1+\gamma_5>0, \frac{-\gamma_6y_1-\gamma_7}{\gamma_4y_1+\gamma_5}\le 0, 
    \gamma_1 y_1 + \gamma_2< 0, y_1\in [0,d_3]\\
Q_3:   \text{maximize}\  f(\frac{-\gamma_6y_1-\gamma_7}{\gamma_4y_1+\gamma_5},y_1), \text{ subject to }\\\ \ \ \ 
    \gamma_4y_1+\gamma_5>0, 0\le \frac{-\gamma_6y_1-\gamma_7}{\gamma_4y_1+\gamma_5}\le d_2, 
    \gamma_1 y_1 + \gamma_2< 0, y_1\in [0,d_3]\\
Q_4:   \text{maximize}\  f(d_2,y_1), \text{ subject to }\\\ \ \ \ 
    \gamma_4y_1+\gamma_5<0, \frac{-\gamma_6y_1-\gamma_7}{\gamma_4y_1+\gamma_5}\ge d_2, 
    \gamma_1 y_1 + \gamma_2\ge 0, y_1\in [0,d_3]\\
Q_5:   \text{maximize}\  f(\frac{-\gamma_6y_1-\gamma_7}{\gamma_4y_1+\gamma_5},y_1), \text{ subject to }\\\ \ \ \ 
    \gamma_4y_1+\gamma_5<0, 0\le \frac{-\gamma_6y_1-\gamma_7}{\gamma_4y_1+\gamma_5}\le d_2, 
    \gamma_1 y_1 + \gamma_2\ge 0, y_1\in [0,d_3]\\
Q_6:   \text{maximize}\  f(0,y_1), \text{ subject to }\\\ \ \ \ 
    \gamma_4y_1+\gamma_5<0, \frac{-\gamma_6y_1-\gamma_7}{\gamma_4y_1+\gamma_5}\ge 0, 
    \gamma_1 y_1 + \gamma_2< 0, y_1\in [0,d_3]\\
Q_7:   \text{maximize}\  f(d_2,y_1), \text{ subject to }\\\ \ \ \ 
    \gamma_4y_1+\gamma_5=0, \gamma_6y_1+\gamma_7\ge 0, 
    \gamma_1 y_1 + \gamma_2\ge 0, y_1\in [0,d_3]\\
Q_8:   \text{maximize}\  f(0,y_1), \text{ subject to }\\\ \ \ \ 
    \gamma_4y_1+\gamma_5=0, \gamma_6y_1+\gamma_7\ge 0, 
    \gamma_1 y_1 + \gamma_2< 0, y_1\in [0,d_3].
	\end{array}
$$
Note that $Q_1$, $Q_2$, and $Q_3$ describe the possibilities for the
first case in (\ref{cases}), $Q_4$, $Q_5$, and $Q_6$ are the
possibilities for the second case, and $Q_7$ and $Q_8$ are the
possibilities for the third case.

Each of $Q_1$, $Q_2$, $Q_4$, $Q_6$, $Q_7$, and $Q_8$ can be easily
rewritten as linear 
programs of a single variable ($y_1$), so can be solved in constant time.
With a little more effort, we can show $Q_3$ and $Q_5$ can also be
solved in constant time.  
We explain how this can be done for $Q_3$.  The argument for $Q_5$ is
similar and left to the reader.
All the constraints in $Q_3$ can be viewed as linear
constraints; the set of feasible values of $y_1$ is thus an interval,
whose endpoints can clearly be computed in constant time.
Now the objective function is 
$$f(\frac{-\gamma_6y_1-\gamma_7}{\gamma_4y_1+\gamma_5},y_1) = 
\frac{(\gamma_1y_1+\gamma_2)(-\gamma_6y_1-\gamma_7)}{\gamma_4y_1+\gamma_5} +
\gamma_3y_1.$$
To find the maximum value of the objective function among the feasible
values, we need to take its derivative (with respect to $y_1$).  
A straightforward calculation shows that this derivative is
$$\frac{(-2\gamma_1\gamma_6y_1-\gamma_1\gamma_7-\gamma_2\gamma_6)(\gamma_4y_1+\gamma_5)-
\gamma_4(\gamma_6y_1+\gamma_7)(\gamma_1y_1+\gamma_2)}  
       {(\gamma_4y_1+\gamma_5)^2}+\gamma_3.$$
This derivative is 0 when its numerator is 0 (since the constraints in $Q_3$
guarantee that the denominator is positive).
The numerator is a quadratic, 
so can be solved  in constant time. 

Thus, to find the optimal value for $Q_3$, we must just check $f$ at the
endpoints of the interval defined by the constraints (which, as we
observed above, can be computed in constant time) and at the points where
the derivative is 0 (which can also be computed in constant time).  Thus,
 $Q_3$ can be solved in constant time.  

This completes the argument that $Q$ can be solved in constant time.
\end{proof}

\section{Proof of Theorem \ref{PCE}}
\rethm{PCE}
Given a 2-player game $G = (\{1,2\}, A, u)$, we can compute in polynomial time
whether $G$ has a PCE and, if so, we can compute a PCE in 
polynomial time.
\erethm
\begin{proof}
Suppose that $G = (\{1,2\}, A, u)$, where  $A = A_1 \times A_2$, $|A_1|= n$, 
$|A_2| = m$, $u_1$ is characterized by the payoff matrix $\mathbf{A}$,
and $u_2$ is characterized by the payoff matrix $\mathbf{B}$.

In order to compute a PCE for the game, we need the values of $\BU_1$ and
$\BU_2$.  These can be computed in polynomial time, as follows.
For $\BU_i$, for each $i \in \{1, \ldots, m\}$, we solve the following
linear program $P_i$:
$$\begin{array}{ll}
\text{maximize} &s_1^T(\mathbf{A}[\cdot,i])\\
\text{subject to} &s_1^T(\mathbf{B}[\cdot,i])\ge
	s_1^T(\mathbf{B}[\cdot,j]) \ \ \text{for all}\ j\in \{1,\ldots m\}\\ 
&\sum_{l=1}^{n} s_1[l]=1\\
&s_1\ge 0.\nonumber
\end{array}$$
Suppose that $r_i$ is the optimal value of $P_i$.  Since $P_i$ is a
linear program, $r_i$ can be computed in polynomial time.
Intuitively, $r_i$ is the maximum reward player 1 can get if
player 2 plays action $b_i$ and $b_i$ is a best response for player 2
to 1's action.
(The first constraint ensures that $b_i$ is a best
response for player 2 to player 1's strategy.)
$\BU_1 = \max_{i=1}^m r_i$, so can be computed in polynomial time.
$\BU_2$ can be similarly computed.

After computing $\BU_1$ and $\BU_2$, we can compute a PCE. Recall that a
strategy profile $s$ is a PCE iff $U_1(s)\ge \BU_1$ and $U_2(s)\ge \BU_2$.
Suppose that game $G$ has a PCE $s^*$.  By Lemma \ref{LEM3}, there must exist
a strategy profile $s'=(s_1',s_2')$ that Pareto dominates $s^*$, where
both $s_1'$ and $s_2'$ have support of size at most two.  By
Theorem~\ref{prop1a}, $s'$ is also a PCE.  
We call such a PCE a \emph{$(2\times 2)$-PCE}. 
Our arguments above show that $G$ has a PCE iff it has a $(2\times
2)$-PCE. 
Thus, in order to check whether $G$ has a PCE, it suffice to check
whether it has a $(2\times 2)$-PCE. 

We do this exhaustively.  For all $i_1,i_2\in\{1,2,\ldots,n\}$  with
$i_1 \ne i_2$
and all $j_1,j_2\in\{1,2,\ldots,m\}$ with $j_1\neq j_2$,
we check whether $G$ has a 
$(2\times 2)$-PCE in which player 1 places positive probability only on
strategies $i_1$ and $i_2$, and player 2 places positive probability only on
strategies $j_1$ and $j_2$.  For each choice of $i_1, i_2, j_1, j_2$, this
question can be expressed as the following $2 \times 2$ simple bilinear
programming problem $P_{i_1,i_2,j_1,j_2}$, where $\mathbf{
A}_{i_1,i_2,j_1,j_2}$ is the $2 \times 2$ matrix $ 	\left[
		\begin{array}{cc}
			\mathbf{A}[i_1,j_1] &  \mathbf{A}[i_1,j_2]\\
			\mathbf{A}[i_2,j_1] &  \mathbf{A}[i_2,j_2]
		\end{array} \right]$,
%nan19: we need B_{i_1,i_2,j_1,j_2} as well, added it below		
	and $\mathbf{B}_{i_1,i_2,j_1,j_2}$ is the $2 \times 2$ matrix $ 	\left[
		\begin{array}{cc}
			\mathbf{B}[i_1,j_1] &  \mathbf{B}[i_1,j_2]\\
			\mathbf{B}[i_2,j_1] &  \mathbf{B}[i_2,j_2]
		\end{array} \right]$:		

$$\begin{array}{ll}
\text{maximize} &[x_1\  x_2]~\mathbf{A}_{i_1,i_2,j_1,j_2}~
[y_1\  y_2]^T\\
\text{subject to} 	
	&[x_1\  x_2]~\mathbf{B}_{i_1,i_2,j_1,j_2}~
	[y_1\  y_2]^T\ge \BU_2\\
	& x_1+x_2=1\\
	& y_1+y_2=1\\
	& x\ge 0, \ y \ge 0.
\end{array}$$
The first constraint ensures that player 2's reward is at least $\BU_2$;
the remaining constraints ensure that player 1 puts positive probability
only on strategies $i_1$ and $i_2$, while player 2 puts positive
probability only on $j_1$ and $j_2$.
If the optimal value of $P_{i_1,i_2,j_1,j_2}$ for some choice of
of $(i_1,i_2,j_1,j_2)$ is at least $\BU_1$, then the corresponding
optimal solution $(x,y)$ is a PCE of $G$. 
(Recall that a strategy profile $s$ is a PCE if $U_1(s)\ge \BU_1$, and
$U_2(s)\ge \BU_2$.) 
On the other hand, if the optimal value for each $P_{i_1,i_2,j_1,j_2}$
is strictly less than  
$\BU_1$, then $G$ does not have a $(2\times 2)$-PCE and so, by the
arguments above, $G$ does not have a PCE.  

The algorithm above must solve $(^n_2) \times (^m_2)$ simple 2 \time 2
bilinear programs.  By Lemma~\ref{BP}, each can be solved in constant
time.   Thus, the algorithm runs in polynomial time, as desired.
\end{proof}

\section{Proof of Theorem \ref{MPCE}}
\rethm{MPCE}
Given a 2-player game $G = (\{1,2\}, A, u)$, we can compute 
a M-PCE in polynomial time. 
\erethm
\begin{proof}
We start by computing $\BU_1$ and $\BU_2$, as in Theorem \ref{PCE}. Again,
this takes polynomial time. 

Recall that a M-PCE is an $\alpha$-PCE such that for all
$\alpha'>\alpha$, there is no $\alpha'$-PCE in $G$.  
Clearly, a strategy that Pareto dominates an $\alpha$-PCE must itself
be an $\alpha$-PCE.  Thus, using Lemma~\ref{LEM3}, it easily follows
that there must be a M-PCE for $G$ such that the support of both
strategies involved is of size at most 2.  Call such a M-PCE a 
$(2\times 2)$-M-PCE.  

To compute a $(2 \times 2)$-M-PCE, for each tuple $(i_1, i_2, j_1, j_2)$, we
compute the optimal $\alpha$ for which there exists an $\alpha$-PCE when
player 1 is restricted to putting positive probability on actions $i_1$
and $i_2$, and player 2 is restricted to putting positive probability
in $j_1$ and $j_2$.  
Using the notation of Theorem~\ref{PCE}, we want to solve the following
problem $Q_{i_1,i_2,j_1,j_2}$, 
where $d_1(x_1, x_2, y_1, y_2) = [x_1\  x_2]~
\mathbf{A}_{i_1,i_2,j_1,j_2} ~[y_1\  y_2]^T - \BU_1$ and 
$d_2(x_1, x_2, y_1, y_2) = [x_1\  x_2]~
\mathbf{B}_{i_1,i_2,j_1,j_2} ~[y_1\  y_2]^T - \BU_2$:

$$
\begin{array}{ll}
\text{maximize} & \min (d_1(x_1,x_2,y_1,y_2), d_2(x_1,x_2,y_1,y_2))\\
\text{subject to }
 	& x_1+x_2=1\\
	& y_1+y_2=1\\
	& x\ge 0,\ y \ge 0.
\end{array}$$
The objective function maximizes the $\alpha$ for which the strategy
profile determined by $[x_{i_1},x_{i_2}]$ and $[y_{i_1},y_{i_2}]$ is an
$\alpha$-PCE 
(recall that $s$ is an $\alpha$-PCE if
$\alpha=\min(U_1(s)-\BU_1,U_2(s)-\BU_2)$).  
The problem here is that since the objective function involves a $\min$,
this is not a bilinear program.  However, we can solve this problem by
solving two simple bilinear programs of size $2\times 2$, depending on
which of  
$[x_{i_1} x_{i_2}] \mathbf{A}_{i_1,i_2,j_1,j_2}$ $[y_{i_1} y_{i_2}]^T - \BU_1$ 
and 
$[x_{i_1} x_{i_2}] \mathbf{A}_{i_1,i_2,j_1,j_2} [y_{i_1} y_{i_2}]^T -
\BU_2$ is smaller.   
Specifically, let $Q_{i_1,i_2,j_1,j_2}'$ be the following simple bilinear program:
$$\begin{array}{ll}
\text{maximize} &d_1(x_1, x_2, y_1, y_2)\\
\text{subject to }& d_1(x_1, x_2, y_1, y_2) \le
d_2(x_1, x_2, y_1, y_2)\\
 	& x_1+x_2=1\\
	& y_1+y_2=1\\
	& x\ge 0, \ y \ge 0.
\end{array}$$
Let $Q_{i_1,i_2,j_1,j_2}''$ be the same bilinear program with the roles of
$d_1$ and $d_2$ reversed.  It is easy to see that the larger of the
solutions to  $Q_{i_1,i_2,j_1,j_2}'$ and $Q_{i_1,i_2,j_1,j_2}''$ is the
solution to $Q_{i_1,i_2,j_1,j_2}$.  It thus follows that a M-PCE can be
computed in polynomial time.
\end{proof}

\section{Proof of Theorem \ref{POMPCE}}
\rethm{POMPCE}
Given a 2-player game $G = (\{1,2\}, A, u)$, we can compute 
a Pareto-optimal M-PCE in polynomial time. 
\erethm
\begin{proof}
%joe14
%We start by compute a M-PCE $s$, as in Theorem \ref{MPCE}. 
We start by computing a M-PCE $s$, as in Theorem \ref{MPCE}. 
This takes polynomial time.
We then compute a Pareto-optimal strategy profile $s^*$ that Pareto 
dominates $s$. Clearly, $s^*$ is a Pareto-optimal M-PCE, and we are done.

%joe14
%We show below that such a $s^*$ can be found in polynomial time.
%Firstly, note that it is impossible to have both $\alpha$-PCE 
We now show that such an $s^*$ can be found in polynomial time.
We first show that it is impossible to have both 
$U_1(s^*)>U_1(s)$ and $U_2(s^*)>U_2(s)$.
To see why, let $\alpha_s$ be the greatest $\alpha$ such that $s$ is an $\alpha$-PCE.
%joe17
%Assume by way of contradiction that $U_1(s^*)>U_1(s)$ and $U_2(s^*)>U_2(s)$, 
%then $s^*$ is an $\alpha'$-PCE for some $\alpha'$ such that $\alpha'
If $U_1(s^*)>U_1(s)$ and $U_2(s^*)>U_2(s)$,
then $s^*$ is an $\alpha'$-PCE for some $\alpha'$ such that $\alpha'
> \alpha_s$,  
a contradiction to $s$ being a M-PCE.
Therefore, for $s^*$ to Pareto dominate $s$, 
%joe17
%either $U_1(s^*)=U_1(s)$ and $U_2(s^*)\ge U_2(s)$; or $U_1(s^*)\ge
either $U_1(s^*)=U_1(s)$ and $U_2(s^*)\ge U_2(s)$, or $U_1(s^*)\ge
U_1(s)$ and $U_2(s^*)=U_2(s)$. 
%joe14
%Thus, in order to find $s^*$, we just need to solve the following 
It then follows that to find $s^*$, we just need to solve the following 
%joe14
%two bilinear programs $Q_1$ and $Q_2$, and 
two bilinear programs $Q_1$ and $Q_2$;
the solution which Pareto dominates the other solution is then Pareto 
optimal (if neither Pareto dominates the other, then both are
Pareto-optimal).
Intuitively, $Q_1$ finds a strategy 
profile that maximizes player 1's reward while player
2 gets no less than what she gets in $s$; and $Q_2$ finds one
that maximizes player 2's reward while player 1 gets no
less than what he gets in $s$.

$Q_1$ is the following bilinear program:
$$\begin{array}{ll}
\text{maximize} &s_1^T\mathbf{A}s_2\\
\text{subject to} &s_1^T\mathbf{B}s_2\ge U_2(s)\\
&\sum_{l=1}^{n} s_1[l]=1\\
&\sum_{l=1}^{m} s_2[l]=1\\
&s_1,s_2\ge 0.\nonumber
\end{array}$$
$Q_2$ is defined similarly, but interchanging $\mathbf{A}$ and $\mathbf{B}$, 
and replacing $U_2$ by $U_1$.

We can use techniques similar to those used in Theorem \ref{PCE} to reduce
both $Q_1$ and $Q_2$ to a polynomial number of simple bilinear
programs. By Lemma \ref{BP}, each simple bilinear program can be solved
in constant time; thus 
both $Q_1$ and $Q_2$ can be solved in polynomial time, as desired. 
\end{proof}

\section{Minimax Value in 2-player games}\label{e}
\begin{thm}\label{mm}
Given a 2-player game $G = (\{1,2\}, A, u)$, we can compute 
$\mm_1(G)$ and $\mm_2(G)$ in polynomial time.
\end{thm}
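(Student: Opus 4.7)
The plan is to formulate the computation of each minimax value as a linear program of polynomial size, and then invoke the fact that linear programs can be solved in polynomial time (e.g., by the ellipsoid or interior-point method).

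Let $\mathbf{A}$ be the payoff matrix for player 1, with $|A_1|=n$ and $|A_2|=m$. By definition,
$$\mm_1(G) = \min_{s_2 \in S_2} \max_{s_1 \in S_1} U_1(s_1,s_2).$$
For any fixed mixed strategy $s_2$ of player 2, the inner maximum is attained at some pure strategy of player 1, so $\max_{s_1 \in S_1} U_1(s_1,s_2) = \max_{i \in \{1,\ldots,n\}} \mathbf{A}[i,\cdot]\, s_2$. Thus $\mm_1(G)$ equals the optimal value of the linear program with variables $s_2 \in \IR^m$ and $v \in \IR$:
$$\begin{array}{ll}
\text{minimize} & v \\
\text{subject to} & \mathbf{A}[i,\cdot]\, s_2 \le v \quad \text{for all } i \in \{1,\ldots,n\}, \\
& \sum_{j=1}^m s_2[j] = 1, \\
& s_2 \ge 0.
\end{array}$$
This LP has $m+1$ variables and $n+m+1$ constraints, each of size polynomial in the size of $G$, so it can be solved in polynomial time. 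An analogous LP, obtained by interchanging the roles of the two players and using player 2's payoff matrix $\mathbf{B}$, computes $\mm_2(G)$.

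The only subtlety is that the formulation above uses the fact that player 1's best response against a fixed mixed strategy $s_2$ is achieved by some pure strategy; this is immediate since $U_1(s_1,s_2)$ is linear in $s_1$ over the simplex $S_1$, so it attains its maximum at a vertex. There is no real obstacle here: the result is essentially a textbook reduction of two-player zero-sum value (applied to the appropriately constructed matrix) to linear programming, and the polynomial-time solvability of linear programs completes the proof.
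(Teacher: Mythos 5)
Your proof is correct, and it rests on the same underlying tool as the paper's: reduce the minimax computation to linear programming and invoke polynomial-time LP solvability. The formulations differ in a minor but real way. You write the single standard LP with an auxiliary variable $v$ bounding $\mathbf{A}[i,\cdot]\,s_2$ from above for every row $i$, and minimize $v$; this is the textbook zero-sum-value reduction and handles the inner maximum implicitly. The paper instead solves $n$ separate LPs, one for each pure strategy $a_i$ of player 1: the LP $P_i$ constrains $s_2$ so that $a_i$ is a best response (via the constraints $\mathbf{A}[i,\cdot]\,s_2 \ge \mathbf{A}[j,\cdot]\,s_2$ for all $j$) and minimizes $\mathbf{A}[i,\cdot]\,s_2$; the answer is $\min_i r_i$ over the feasible programs. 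Both are polynomial time; your single LP is more compact and avoids having to deal with infeasible subprograms (the paper must note that some $P_i$ may have no feasible solution), while the paper's per-best-response decomposition mirrors the style of its other complexity arguments (e.g., the computation of $\BU_i$ in Theorem~\ref{PCE}). Your justification that the inner maximum over mixed strategies is attained at a pure strategy, by linearity over the simplex, is exactly the point that makes either formulation valid.
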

\begin{proof}
Suppose that $G = (\{1,2\}, A, u)$, where  $A = A_1 \times A_2$, $|A_1|
= n$, 
$|A_2| = m$, $u_1$ is characterized by the payoff matrix $\mathbf{A}$,
and $u_2$ is characterized by the payoff matrix $\mathbf{B}$.

To compute $\mm_1(G)$, 
for each $i \in \{1, \ldots, n\}$, we solve the following
linear program $P_i$:
$$\begin{array}{ll}
\text{minimize} &\mathbf{A}[i,\cdot]\, s_2\\
\text{subject to} &\mathbf{A}[i,\cdot]\, s_2\ge
	\mathbf{A}[j,\cdot]\, s_2 \ \ \text{for all}\ j\in \{1,\ldots n\}\\ 
&\sum_{l=1}^{m} s_2[l]=1\\
&s_2\ge 0.\nonumber
\end{array}$$
Suppose that $r_i$ is the optimal value of $P_i$ (if $P_i$ has a
feasible solution).
Since $P_i$ is a
linear program, $r_i$ can be computed in polynomial time.
Intuitively, $r_i$ is the minimum reward player 1 gets when action $a_i$
is a best response to player 2's strategy.
(The first constraint ensures that $a_i$ is a best
response for player 1 to player 2's strategy.)

It follows that $mm_1(G)=\min_{i=1}^n r_i$, and can be computed in
polynomial time;  $mm_2(G)$ can be computed similarly.
\end{proof}

{\bf Acknowledgements:}
Work supported in part by NSF grants IIS-0534064, IIS-0812045,
IIS-0911036, and CCF-1214844,
by AFOSR grants 
FA9550-08-1-0438,  FA9550-09-1-0266, and FA9550-12-1-0040,
by ARO grants W911NF-09-1-0281 and W911NF-14-1-0017, and 
by the Multidisciplinary
University Research Initiative (MURI) program administered by the AFOSR
under grant FA9550-12-1-0040.

\bibliographystyle{chicagor}
\bibliography{joe,z,nan}  

\end{document}